\DeclareMathOperator*{\argmax}{\arg\max}
\DeclareMathOperator*{\nn}{\nonumber}
\DeclareMathOperator{\E}{\mathbb{E}}
\newcommand{\RNum}[1]{\uppercase\expandafter{\romannumeral #1\relax}}
\newtheorem{lemma}{Lemma}[section]
\newtheorem{theorem}{Theorem}[section]
\newtheorem{corollary}{Corollary}
\theoremstyle{definition}
\newtheorem{definition}{Definition}
\newtheorem{remark}{Remark}
\def\blfootnote{\gdef\@thefnmark{}\@footnotetext}
\def\i{\iota}
\def\cJ{{\mathcal J}}
\def\cL{{\mathcal L}}
\def\cM{{\mathcal M}}
\def\cQ{{\mathcal Q}}
\def\cX{{\mathcal X}}
\def\cY{{\mathcal Y}}
\def\sfJ{{\mathsf J}}
\def\tcJ{\widetilde{\cJ}}
\def\ti{\tilde{\iota}}
\def\tmu{\tilde{\mu}}
\def\ts{\tilde{s}}
\def\tildet{\tilde{t}}
\begin{document}
\title{Feedback capacity and coding for the BIBO channel with a no-repeated-ones input constraint}
\author{Oron Sabag, Haim H. Permuter and Navin Kashyap}
\maketitle
\begin{abstract}
In this paper, a general binary-input binary-output (BIBO) channel is investigated in the presence of feedback and input constraints. The feedback capacity and the optimal input distribution of this setting are calculated for the case of an $(1,\infty)$-RLL input constraint, that is, the input sequence contains no consecutive ones. These results are obtained via explicit solution of an equivalent dynamic programming optimization problem. A simple coding scheme is designed based on the principle of posterior matching, which was introduced by Shayevitz and Feder for memoryless channels. The posterior matching scheme for our input-constrained setting is shown to achieve capacity using two new ideas: \textit{history bits}, which captures the memory embedded in our setting, and \textit{message-interval splitting}, which eases the analysis of the scheme. Additionally, in the special case of an S-channel, we give a very simple zero-error coding scheme that is shown to achieve capacity. For the input-constrained BSC, we show using our capacity formula that feedback increases capacity when the cross-over probability is small.
\end{abstract}
\begin{IEEEkeywords}
Binary channels, dynamic programming, feedback capacity, posterior matching scheme, runlength-limited (RLL) constraints.
\end{IEEEkeywords}
\section{Introduction}\label{sec:intro}
\blfootnote{The work of O. Sabag and H. H. Permuter was supported in part by European Research Council under the European Union’s Seventh Framework Programme (FP7/2007-2013)/ERC grant agreement $\mathrm{n}^\mathrm{o}337752$. All authors have also been partially supported by a Joint UGC-ISF research grant. Part of this work was presented at the 2016 International Conference on Signal Processing and Communications (SPCOM 2016), Bangalore, India. O. Sabag and H. H. Permuter are with the Department of Electrical and Computer Engineering, Ben-Gurion University of the Negev, Beer-Sheva, Israel (oronsa@post.bgu.ac.il, haimp@bgu.ac.il). N. Kashyap is with the Department of Electrical Communication Engineering, Indian Institute of Science, Bangalore, India (nkashyap@iisc.ac.in).}
Consider the binary symmetric channel (BSC), described in Fig. \ref{fig:BC} with $\alpha=\beta$, in the presence of output feedback. This setting is well understood in terms of capacity, $C=1-H_2(\alpha)$, but also in terms of efficient and capacity-achieving coding schemes such as the Horstein scheme \cite{horstein_original} and the posterior matching scheme (PMS) \cite{shayevitz_posterior_mathcing}. However, imposing constraints on the input sequence, even in the simplest cases, makes the capacity calculation challenging, since this setting is equivalent to a finite-state channel. A special case of the setting studied here is the BSC with feedback and a no-consecutive-ones input constraint (Fig. \ref{fig:setting}), that is, the channel input sequence cannot contain adjacent ones. We will show for instance, that its feedback capacity still has a simple expression:
\begin{align}\label{eq:BSC_intro}
    C &= \max_{p} \frac{H_2(p) + pH_2\left(\frac{\alpha(1-\alpha)}{p}\right)}{1+p} - H_2(\alpha),
\end{align}
and that there exists an efficient coding scheme that achieves this feedback capacity. It is also interesting to understand the role of feedback on capacity when input constraints are present, and it will be proven that in contrast to the unconstrained BSC, \emph{feedback does increase capacity} for the input-constrained BSC.

The capacity of input-constrained memoryless channels has been extensively investigated in the literature, but still there are no computable expressions for the capacity without feedback \cite{vontobel_generalization,han_constrained_BSC_BEC,wolf_RLL,han_RLL_BSC,yonglong_isit_erasure}. On the other hand, in \cite{Yang05}, it was shown that if there is a noiseless feedback link to the encoder (Fig. \ref{fig:setting}), then the feedback capacity can be formulated as a dynamic programming (DP) problem, for which there exist efficient numerical algorithms for capacity computation \cite{Tatikonda00,TatikondaMitter_IT09,PermuterCuffVanRoyWeissman08,Ising_channel,Sabag_BEC,trapdoor_generalized,Ising_artyom_IT,Chen05}. However, as indicated by the authors of \cite{Yang05}, analytic expressions for the feedback capacity and the optimal input distributions are still hard to obtain and remain an open problem. In this paper, both feedback capacity and the optimal input distribution of the binary-input binary-output (BIBO) channel (Fig. \ref{fig:BC}) with a no-consecutive-ones input constraint are derived by solving the corresponding DP problem. The BIBO channel includes as special cases the BSC ($\alpha=\beta$), which was studied in \cite{Yang05}, the $Z$-channel ($\alpha=0$) and the $S$-channel ($\beta=0$).

Shannon proved that feedback does not increase the capacity of a memoryless channel \cite{shannon56}; following the proof of his theorem, he also claimed that ``feedback does not increase the capacity for channels with memory if the internal channel state can be calculated at the encoder". The input-constrained setting studied here can be cast as a state-dependent channel, so it fits Shannon's description of such a channel. Therefore, we investigate the role of feedback for the special case of input-constrained BSC. In the regime $\alpha \to 0$, the feedback capacity from \eqref{eq:BSC_intro} is compared with a corresponding expression obtained for the capacity without feedback \cite{Han_asymptotics_symmertic}. This comparison reveals that feedback increases capacity, at least for small enough values of $\alpha$ for the input-constrained BSC in contrast to Shannon's claim. However, this is not the first counterexample to Shannon's claim; two other such counterexamples can be found in \cite{Shaviv_shannon_conjecture,Yonglong_increases}.

\begin{figure}[t]
    \centering
    \psfrag{Q}[][][1]{$0$}
    \psfrag{W}[][][1]{$1$}
    \psfrag{A}[][][1]{$\alpha$}
    \psfrag{B}[][][1]{$\beta$}
    \psfrag{C}[][][1]{$1-\alpha$}
    \psfrag{D}[][][.8]{$1-\beta$}
    \psfrag{S}[][][.8]{$1-\alpha$}
    \psfrag{K}[][][1]{$X$}
    \psfrag{L}[][][1]{$Y$}
       \includegraphics[scale=0.6]{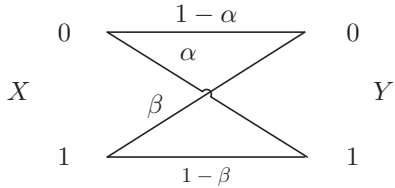}
    \caption{BIBO channel with transition probabilities $(\alpha,\beta)$. Special cases are the Z and S channels, which correspond to $\alpha=0$ and $\beta=0$, respectively, and the BSC when $\alpha=\beta$.}
    \label{fig:BC}
\end{figure}

\begin{figure}[b]
\centering
    \psfrag{A}[b][][.8]{Constrained}
    \psfrag{K}[][][.8]{Encoder}
    \psfrag{B}[][][1]{$p_{Y|X}$}
    \psfrag{C}[][][1]{Decoder}
    \psfrag{D}[][][0.8]{Unit Delay}
    \psfrag{E}[][][1]{$m\in 2^{nR}$}
    \psfrag{F}[][][.9]{$x_i(m,y^{i-1})$}
    \psfrag{G}[][][1]{$y_i$}
    \psfrag{H}[][][1]{$y_i$}
    \psfrag{I}[][][1]{$y_{i-1}$}
    \psfrag{J}[][][1]{$\hat{m}(y^n)$}
    \includegraphics[scale = 0.7]{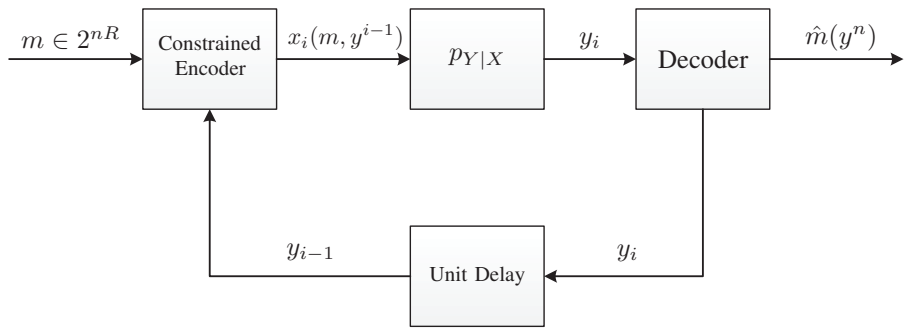}
    \caption{System model for an input-constrained memoryless channel with noiseless feedback.}
    \label{fig:setting}
\end{figure}

In past works on channels with memory, such as \cite{PermuterCuffVanRoyWeissman08,Ising_channel,Sabag_BEC,trapdoor_generalized}, the optimal input distribution provided insights into the construction of simple coding schemes with zero error probability. This methodology also works for the $S$-channel, for which we are able to give a simple zero-error coding scheme. The coding scheme is similar to the "repeat each bit until it gets through" policy that is optimal for a binary erasure channel with feedback. In our case, each bit is repeated with its complement until $Y=0$ is received, so the formed sequence is of alternating bits and satisfies the input constraint. However, a coding scheme for the general BIBO channel is challenging since $p(y|x)>0$, for all $(x,y)$, and therefore, there is no particular pattern of outputs for which a bit can be decoded with certainty. Nonetheless, we are able to use the structure of the optimal input distribution to give a simple coding scheme, based on the principle of posterior matching as is elaborated below.

Two fundamental schemes on sequential coding for memoryless channels with feedback date back to the work of Horstein \cite{horstein_original} for the BSC, and that of Schalkwijk and Kailath \cite{SchalkwijkKailath66_feedback_scheme} for the additive white Gaussian noise (AWGN) channel. In \cite{shayevitz_posterior_mathcing}, Shayevitz and Feder established the strong connection between these coding schemes by introducing a generic coding scheme, termed the \textit{posterior matching scheme} (PMS), for all memoryless channels. This work provided a rigorous proof for the optimality of such sequential schemes, a fact that may be intuitively correct but difficult to prove. Subsequent works proved the coding optimality using different approaches \cite{Coleman_matching,JAVIDI_PMS_IT}, including an original idea by Li and El Gamal in \cite{Li_elgamal_matching} to introduce a randomizer that is available both to the encoder and the decoder. This assumption markedly simplifies the coding analysis, and it was adopted thereafter by \cite{shayevitz_simple} to simplify their original analysis in \cite{shayevitz_posterior_mathcing}. In our coding scheme, it is also assumed that there is a common randomizer available to all parties as a key step to the derivations of an optimal PMS for the BIBO channel.

The encoder principle in the PMS is to determine the channel inputs such that the optimal input distribution is simulated. For a memoryless channel, the optimal input distribution is i.i.d. so the encoder simulates the same experiment at all times. In the input-constrained setting, the input distribution is given by $p(x_i|x_{i-1},y^{i-1})$ (inputs are constrained with probability $1$), so the conditioning injects new information on which the encoder should depend. The first element in the conditioning, $y^{i-1}$, can be viewed as a time-sharing (not i.i.d.) since both the encoder and the decoder know this tuple. Indeed, it is shown that they do not need to track the entire tuple $y^{i-1}$, but a recursive quantization of it on a directed graph. The second element, $x_{i-1}$, is a new element in the PMS since it is only available to the encoder, and it is handled by introducing a new idea called the \textit{history bit} for each message. The analysis of the scheme is simplified using \textit{message-interval splitting}, which results in a homogenous Markov chain instead of a time-dependent random process. These two ideas constitute the core of the PMS for the input-constrained setting, and it is shown that the coding scheme achieves the capacity of the general input-constrained BIBO channel.

The remainder of the paper is organized as follows. Section~\ref{sec:definitions} presents our notation and a description of the problem we consider. Sections~\ref{sec:results} and \ref{sec:optinput} contain statements of the main technical results of the paper. In Section~\ref{sec:scheme}, we provide the PMS for our input-constrained setting, while the optimality of this scheme is proved in Section \ref{sec:proof_scheme}. The DP formulation of feedback capacity together with its solution is presented in Section~\ref{sec:DP_formulation}. Section~\ref{sec:conclusions} contains some concluding remarks. Some of the more technically involved proofs are given in appendices to preserve the flow of the presentation.

\section{Notation and problem definition}\label{sec:definitions}
Random variables will be denoted by upper-case letters, such as $X$, while realizations or specific values will be denoted by lower-case letters, e.g., $x$. Calligraphic letters, e.g., $\mathcal{X}$, will denote the alphabets of the random variables. Let $X^{n}$ denote the $n$-tuple $(X_{1},\dots,X_{n})$ and let $x^n$ denote the realization vectors of $n$ elements, i.e., $x^n = (x_1, x_2, \dots, x_n)$. For any scalar $\alpha\in[0,1]$, $\bar{\alpha}$ stands for $\bar{\alpha}=1-\alpha$. Let $H_2(\alpha)$ denote the binary entropy for the scalar $\alpha\in[0,1]$, i.e., $H_2(\alpha)=-\alpha\log_2\alpha-\bar{\alpha}\log_2\bar{\alpha}$.

The probability mass function (pmf) of a random variable $X$ is denoted by $p_X(x)$, and conditional and joint pmfs are denoted by $p_{Y|X}(y|x)$ and $p_{X,Y}(x,y)$, respectively; when the random variables are clear from the context we use the shorthand notation $p(x)$, $p(y|x)$ and $p(x,y)$. The conditional distribution $p_{Y|X}$ is specified by a stochastic matrix $P_{Y|X}$, the rows of which are indexed by $\cX$, the columns by $\cY$, and the $(x,y)$th entry is the conditional probability $p_{Y|X}(y|x)$ for $x \in \cX$ and $y \in \cY$.

The communication setting (Fig. \ref{fig:setting}) consists of a message $M$ that is drawn uniformly from the set $\{1,\dots,2^{nR}\}$ and made available to the encoder. At time $i$, the encoder produces a binary output, $x_i \in \{0,1\}$, as a function of $m$, and the output samples $y^{i-1}$. The sequence of encoder outputs, $x_1x_2x_3\ldots$, must satisfy the $(1,\infty)$-RLL input constraint, i.e., no consecutive ones are allowed. The transmission is over the BIBO channel (Fig. \ref{fig:BC}) that is characterized by two transition probabilities, $p_{Y|X}(1|0)=\alpha$ and $p_{Y|X}(0|1)=\beta$, where $\alpha$ and $\beta$ are scalars from $[0,1]$. The channel is memoryless, i.e., $p(y_i|x^{i},y^{i-1})=p_{Y|X}(y_i|x_i)$ for all $i$.

\begin{definition} \label{def:code}
A $(n,2^{nR},(1,\infty))$ \emph{code} for an input-constrained channel with feedback is defined by a set of encoding functions:
    \begin{equation*}
        f_i: \{1,\dots,2^{nR}\}\times \mathcal{Y}^{i-1} \rightarrow \mathcal{X}, \ i=1,\dots,n,
    \end{equation*}
    satisfying $f_i(m,y^{i-1})=0 \  \text{if} \ f_{i-1}(m,y^{i-2})=1$ (the mapping $f_1(\cdot)$ is not constrained), for all $(m,y^{i-1})$, and by a decoding function $\Psi: \mathcal{Y}^{n} \rightarrow \{1,\dots,2^{nR}\}$.
\end{definition}

The \textit{average probability of error} for a code is defined as $P_{e}^{(n)}=\Pr[M\neq\Psi(Y^{n})]$. A rate $R$ is said to be $(1,\infty)$\textit{-achievable} if there exists a sequence of $(n,2^{nR},(1,\infty))$ codes such that $\lim_{n\rightarrow\infty} P_{e}^{(n)}=0$. The \textit{capacity}, $C^{\mathrm{fb}}(\alpha,\beta)$ is defined as the supremum over all $(1,\infty)$-achievable rates.

The transition probabilities can be restricted to $\alpha+\beta \leq 1$, a fact that is justified by:
\begin{lemma}
The capacity of a BIBO channel satisfies
  $C(\alpha,\beta)=C(1-\alpha,1-\beta)$,
for all $\alpha,\beta$.
\end{lemma}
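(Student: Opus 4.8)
The plan is to exhibit a capacity-preserving reduction between the channel with parameters $(\alpha,\beta)$ and the one with parameters $(\bar\alpha,\bar\beta)$, based on relabeling the output alphabet. Concretely, compose the BIBO channel $(\alpha,\beta)$, for which $p_{Y|X}(1|0)=\alpha$ and $p_{Y|X}(0|1)=\beta$, with the deterministic, invertible output map $y\mapsto 1-y$. The resulting channel $\tilde p_{\tilde Y|X}(\tilde y|x):=p_{Y|X}(1-\tilde y|x)$ satisfies $\tilde p_{\tilde Y|X}(1|0)=p_{Y|X}(0|0)=\bar\alpha$ and $\tilde p_{\tilde Y|X}(0|1)=p_{Y|X}(1|1)=\bar\beta$, hence it is exactly the BIBO channel with parameters $(\bar\alpha,\bar\beta)$. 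Since $y\mapsto 1-y$ is an involution, the same relationship holds with the two channels interchanged.

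First I would check that this output relabeling transfers codes between the two feedback settings without affecting achievable rates. Given an $(n,2^{nR},(1,\infty))$ code $(\{f_i\}_{i=1}^n,\Psi)$ for the channel $(\alpha,\beta)$, build a code for the channel $(\bar\alpha,\bar\beta)$ as follows: at time $i$ the encoder takes the feedback block $\tilde y^{i-1}$, forms $z^{i-1}$ with $z_j=1-\tilde y_j$ for $j<i$, and transmits $x_i=f_i(m,z^{i-1})$; after $n$ channel uses the decoder outputs $\Psi(z^n)$, where $z_j=1-\tilde y_j$ for all $j\le n$. Because the channel $(\bar\alpha,\bar\beta)$ followed by the componentwise complement is statistically identical to the channel $(\alpha,\beta)$, the joint law of $(M,X^n,Z^n)$ produced by the new code over the new channel coincides with the joint law of $(M,X^n,Y^n)$ produced by the original code over the original channel. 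In particular, $X^n$ has the same distribution, so it still obeys the $(1,\infty)$-RLL input constraint almost surely, and $P_e^{(n)}$ is unchanged. Hence every $(1,\infty)$-achievable rate for $(\alpha,\beta)$ is $(1,\infty)$-achievable for $(\bar\alpha,\bar\beta)$, giving $C(\bar\alpha,\bar\beta)\ge C(\alpha,\beta)$; running the argument in the other direction, again using that complementation is its own inverse, yields the reverse inequality and therefore equality.

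There is essentially no obstacle here; the single point deserving care is that the transformation acts only on $\cY$, so it commutes both with the input constraint, which is a property of sequences in $\cX$, and with the feedback link, since the complement map is known to encoder and decoder alike. Were the relabeling to involve the input alphabet, one would additionally have to verify that it maps $(1,\infty)$-RLL sequences to $(1,\infty)$-RLL sequences — which fails for $x\mapsto 1-x$ — but that complication does not arise for an output relabeling.
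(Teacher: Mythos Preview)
Your proposal is correct and follows exactly the paper's approach: apply the invertible output map $\tilde Y = Y \oplus 1$, which turns the $(\alpha,\beta)$ channel into the $(\bar\alpha,\bar\beta)$ channel without touching the input constraint or the feedback mechanism. You have simply written out in detail the code-transfer argument that the paper compresses into one sentence.
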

\begin{proof}
For a channel with parameters $(\alpha,\beta)$, apply an invertible mapping $\tilde{Y}=Y\oplus1$ on channel outputs so that the capacity remains the same but the parameters are changed to $(1-\alpha,1-\beta)$.
\end{proof}
The proof of the lemma is valid even when the inputs are constrained and there is feedback to the encoder.
\section{Main Results}\label{sec:results}
In this section, we present our main results concerning the feedback capacity of the BIBO channel,  and thereafter, we show that feedback increases capacity for the BSC. The optimal PMS for the BIBO channel is not included in this section and appears in Section \ref{sec:scheme}.
\subsection{Feedback capacity}\label{subsec:capacity}
The general expression for the feedback capacity is given by the following theorem
\begin{theorem}[BIBO capacity]\label{theorem:BIBO}
The feedback capacity of the input-constrained BIBO channel is
\begin{align}\label{eq:BIBO_capacity}
    C^{\mathrm{fb}}(\alpha,\beta) &= \max_{z_L\leq z \leq z_U} \frac{H_2(\alpha\bar{z} + \bar{\beta} z) + (\alpha\bar{z} + \bar{\beta} z)H_2\left(\frac{\alpha\bar{\beta}}{\alpha\bar{z} + \bar{\beta} z}\right) - (\bar{z} + \bar{\beta}z)H_2(\alpha) - (z + \alpha\bar{z})H_2(\beta)}{1 + \alpha\bar{z} + \bar{\beta} z},
\end{align}
where $\alpha + \beta \leq 1$, $z_L=\frac{\sqrt{\alpha}}{\sqrt{\alpha}+\sqrt{\bar{\beta}}}$ and $z_U=\frac{\sqrt{\bar{\alpha}}}{\sqrt{\bar{\alpha}}+\sqrt{\beta}}$.\\
The feedback capacity can also be expressed by:
\begin{align}\label{eq:BIBO_alternative}
    C^{\mathrm{fb}}(\alpha,\beta) &= \log\left(\frac{1-p_{\alpha,\beta}}{p_{\alpha,\beta}-\alpha\bar\beta}\right) + \beta\frac{H_2(\alpha)}{1-\alpha-\beta}- \bar\alpha\frac{H_2(\beta)}{1-\alpha-\beta} ,
\end{align}
where $p_{\alpha,\beta}$ is the unique solution of
\begin{align}\label{eq:BIBO_arg_equation}
(1-\alpha\bar{\beta}) [H_2(\alpha)-H_2(\beta)] + (\bar{\beta}-\alpha)[2\log (1-p) - \log(p-\alpha\bar{\beta})(1+\alpha\bar{\beta}) + \alpha\bar{\beta}\log \alpha\bar{\beta}]&=0.
\end{align}
\end{theorem}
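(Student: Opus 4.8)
The plan is to derive the feedback capacity by setting up and solving the dynamic programming (DP) formulation that is available for input-constrained channels with feedback (the Yang–Kim–Berger-type formulation referenced as \cite{Yang05}). The first step is to identify the correct sufficient statistic. Since the input constraint makes the channel a state-dependent (finite-state) channel with state $S_{i-1}=X_{i-1}\in\{0,1\}$, the feedback-capacity optimization is a Markov decision process whose ``state'' is the posterior probability $z = \Pr(S_{i-1}=1 \mid y^{i-1})$ — equivalently $\Pr(X_{i-1}=1\mid y^{i-1})$. I would write down the Bellman (average-reward DP) equation: find a bounded function $h(z)$ and a constant $\rho$ such that
\begin{align*}
\rho + h(z) = \max_{u(\cdot)} \Big[ I(X;Y \mid z,u) + \sum_{y} \Pr(y\mid z,u)\, h\big(z'(z,u,y)\big)\Big],
\end{align*}
where $u$ encodes the (constrained) choice of the next input distribution $p(x_i\mid x_{i-1})$ — note $x_i=1$ is only available when $x_{i-1}=0$ — and $z'(z,u,y)$ is the Bayesian update of the posterior after observing $y_i=y$. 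The optimal average reward $\rho$ then equals $C^{\mathrm{fb}}(\alpha,\beta)$ by the standard verification theorem for average-cost DP (boundedness of $h$ and attainment of the max suffice).

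**Next I would** solve this Bellman equation explicitly. The strategy is the usual one for these low-dimensional RLL-type DPs: guess the structure of the optimal policy and value function from numerical experiments, then verify. Concretely, I expect the optimal policy to drive the posterior $z$ into a fixed invariant interval $[z_L, z_U]$ and, within that interval, to use a stationary randomized input rule; the constants $z_L=\frac{\sqrt{\alpha}}{\sqrt{\alpha}+\sqrt{\bar\beta}}$ and $z_U=\frac{\sqrt{\bar\alpha}}{\sqrt{\bar\alpha}+\sqrt{\beta}}$ should emerge as the fixed points of the two-step posterior update (one step forced to $x_i=0$ after $x_{i-1}=1$, one step of free choice), i.e. as the solutions of the quadratic fixed-point equations obtained by composing the two Bayes updates. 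Having pinned down this interval, the per-stage reward $I(X;Y\mid z,u)$ becomes the entropy expression appearing in the numerator of \eqref{eq:BIBO_capacity}: $H_2(\alpha\bar z+\bar\beta z)$ is $H(Y)$, the term $(\alpha\bar z+\bar\beta z)H_2\!\big(\tfrac{\alpha\bar\beta}{\alpha\bar z+\bar\beta z}\big)$ together with the two $H_2(\alpha)$, $H_2(\beta)$ terms assemble $-H(Y\mid X)$ averaged with the correct weights $(\bar z+\bar\beta z)$ and $(z+\alpha\bar z)$, and the denominator $1+\alpha\bar z+\bar\beta z$ is the expected number of channel uses per ``super-symbol'' in the renewal/time-sharing picture. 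Maximizing this single-variable function over $z\in[z_L,z_U]$ gives \eqref{eq:BIBO_capacity}.

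**For the second (equivalent) form** \eqref{eq:BIBO_alternative}, I would start from \eqref{eq:BIBO_capacity}, set the derivative in $z$ to zero, and substitute $p = \alpha\bar z+\bar\beta z$ (the value of $H(Y)$'s argument, which is the natural ``output-one probability'' variable). The stationarity condition, after clearing logarithms and using $\bar z = \frac{\bar\beta - p}{\bar\beta-\alpha}$, $z=\frac{p-\alpha}{\bar\beta-\alpha}$, should reduce precisely to \eqref{eq:BIBO_arg_equation}; uniqueness of the maximizing $p$ follows from strict concavity of the objective on the relevant interval (which I would check by a second-derivative or log-concavity argument), so $p_{\alpha,\beta}$ is well-defined. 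Plugging $p_{\alpha,\beta}$ back into the objective and simplifying the entropy terms — in particular rewriting $H_2(p)+pH_2(\alpha\bar\beta/p)$ in terms of $\log\frac{1-p}{p-\alpha\bar\beta}$ and collecting the $H_2(\alpha)$, $H_2(\beta)$ contributions with the weights that combine into $\beta/(1-\alpha-\beta)$ and $-\bar\alpha/(1-\alpha-\beta)$ — yields \eqref{eq:BIBO_alternative}.

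**The main obstacle** I anticipate is the verification step for the Bellman equation: guessing the value function $h(z)$ in closed form (outside the invariant interval one must patch together the pre-images under the forced and free updates) and then checking, for \emph{every} $z$ and every admissible action $u$, that the guessed $(\rho,h)$ satisfies the fixed-point inequality with equality at the optimal action. This requires careful case analysis on where $z$ and its updates fall relative to $z_L$ and $z_U$, together with monotonicity/contraction properties of the two Bayes maps; the concavity bookkeeping needed to justify uniqueness in \eqref{eq:BIBO_arg_equation} and to rule out boundary maximizers is delicate but routine by comparison. The algebraic reconciliation of \eqref{eq:BIBO_capacity} with \eqref{eq:BIBO_alternative} is lengthy but mechanical once the substitution $p=\alpha\bar z+\bar\beta z$ is in hand.
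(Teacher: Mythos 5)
Your plan mirrors the paper's derivation almost exactly: the feedback capacity is cast as an average-reward DP with state $p(x_{i-1}\mid y^{i-1})$, the Bellman equation is solved by guessing a piecewise value function and verifying it (Theorem \ref{theorem:sol_bellman} in the paper, with a three-piece $\tilde h_{\alpha,\beta}$ and optimal policy $\delta^*(z)=\min(z,z_2^{\alpha,\beta})$), and the alternative form \eqref{eq:BIBO_alternative} is obtained by setting $R'_{\alpha,\beta}(z)=0$ and substituting back. One detail you would need to correct: the restriction to $z\in[z_L,z_U]$ in \eqref{eq:BIBO_capacity} is not justified in the paper by an invariant-interval argument for the DP state --- under the optimal policy the DP state visits a four-point set $\{z_1^{\alpha,\beta},\dots,z_4^{\alpha,\beta}\}$ that in general is not contained in $[z_L,z_U]$ --- but by a separate calculus argument (Lemma \ref{lemma:maximization_domain}) showing that the derivative of $R_{\alpha,\beta}$ is nonnegative at $z_L$ and nonpositive at $z_U$; your quadratic fixed-point computation does correctly identify $z_L$ and $z_U$ as one-step (not two-step) Bayes-update fixed points under the alternating action, but that observation alone does not bound the argmax of $R_{\alpha,\beta}$. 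Likewise, uniqueness of $p_{\alpha,\beta}$ is established in the paper from monotonicity of the left-hand side of \eqref{eq:BIBO_arg_equation} in $p$ rather than from concavity of $R_{\alpha,\beta}$, which the paper does not assert.
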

The proof of \eqref{eq:BIBO_capacity} in Theorem \ref{theorem:BIBO} appears in Section \ref{sec:DP_formulation} and relies on the formulation of feedback capacity as a DP problem. From the solution of the DP, we only obtain that the maximization in \eqref{eq:BIBO_capacity} is over $z \in [0,1]$, but this can be strengthened using the following result:
\begin{lemma}\label{lemma:maximization_domain}
Define $R_{\alpha,\beta}(z) = \frac{H_2(\alpha\bar{z} + \bar{\beta} z) + (\alpha\bar{z} + \bar{\beta} z)H_2\left(\frac{\alpha\bar{\beta}}{\alpha\bar{z} + \bar{\beta} z}\right) - (\bar{z} + \bar{\beta}z)H_2(\alpha) - (z + \alpha\bar{z})H_2(\beta)}{1 + \alpha\bar{z} + \bar{\beta} z}$ with $0\le z\le 1$. The argument that achieves the maximum of $R_{\alpha,\beta}(z)$ lies within $[z_L,z_U]$, for all $\alpha+\beta\le1$. Additionally, for the BSC ($\alpha=\beta$), the maximum is attained when the argument is within $[z_L,0.5]$.
\end{lemma}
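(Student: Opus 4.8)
The plan is to analyze the sign of the derivative $R'_{\alpha,\beta}(z)$ at the endpoints $z=z_L$ and $z=z_U$ (and, for the BSC, at $z=0.5$), showing that the maximum cannot lie outside $[z_L,z_U]$. First I would introduce the shorthand $q(z)=\alpha\bar z+\bar\beta z$, the "output weight", and $D(z)=1+q(z)$ for the denominator, noting that $q$ is affine in $z$ with $q'(z)=\bar\beta-\alpha\geq 0$ under $\alpha+\beta\leq 1$. Writing the numerator as $N(z)=H_2(q)+qH_2\!\left(\tfrac{\alpha\bar\beta}{q}\right)-(\bar z+\bar\beta z)H_2(\alpha)-(z+\alpha\bar z)H_2(\beta)$, the first two terms combine, via the identity $H_2(q)+qH_2(r/q)=-r\log r-(q-r)\log(q-r)-\bar q\log\bar q$ with $r=\alpha\bar\beta$, into $-\alpha\bar\beta\log(\alpha\bar\beta)-(q-\alpha\bar\beta)\log(q-\alpha\bar\beta)-\bar q\log\bar q$. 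Differentiating in $z$ (using $dq/dz=\bar\beta-\alpha$) then gives a clean expression: $N'(z)=(\bar\beta-\alpha)\bigl[\log\bar q-\log(q-\alpha\bar\beta)\bigr] - (\bar\beta-\alpha)\bigl(H_2(\alpha)-H_2(\beta)\bigr)$, after collecting the linear terms $-(\bar z+\bar\beta z)H_2(\alpha)-(z+\alpha\bar z)H_2(\beta)$ whose $z$-derivative is $-(\bar\beta-\alpha)H_2(\alpha)+... $ — the exact bookkeeping is routine and I would carry it out carefully to land on the stated form.

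The key step is then the quotient-rule sign: $R'_{\alpha,\beta}(z)$ has the same sign as $N'(z)D(z)-N(z)D'(z)$ with $D'(z)=\bar\beta-\alpha$. I would evaluate $R_{\alpha,\beta}(z_L)$ and $R_{\alpha,\beta}(z_U)$ and show that $R'_{\alpha,\beta}(z_L)\geq 0$ and $R'_{\alpha,\beta}(z_U)\leq 0$; combined with checking that the critical-point equation $N'(z)D(z)=N(z)D'(z)$ has its relevant root inside $[z_L,z_U]$ (which is exactly equation \eqref{eq:BIBO_arg_equation} in disguise, via the substitution $q=p_{\alpha,\beta}$), this forces the maximizer into $[z_L,z_U]$. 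The boundary values $z_L=\sqrt\alpha/(\sqrt\alpha+\sqrt{\bar\beta})$ and $z_U=\sqrt{\bar\alpha}/(\sqrt{\bar\alpha}+\sqrt\beta)$ are chosen precisely so that at $z_L$ one has $q-\alpha\bar\beta$ and $\bar q$ in a ratio making the logarithmic term in $N'$ cancel favorably (indeed $z_L$ makes $q(z_L)=\sqrt{\alpha\bar\beta}\,(\sqrt{\bar\beta}\,\bar\alpha^{?}...)$ — more transparently, $z_L$ is the value where $\bar q=\sqrt{\bar q\,(q-\alpha\bar\beta)}\cdot(\text{something})$); I would verify the endpoint inequalities by substituting these explicit values and reducing to an elementary inequality in $\alpha,\beta$.

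For the sharper BSC claim I would set $\alpha=\beta$, so $\bar\beta-\alpha=1-2\alpha\geq 0$ with equality only at $\alpha=1/2$ (a degenerate case $C=0$), and $H_2(\alpha)-H_2(\beta)=0$, which collapses $N'(z)$ to $(1-2\alpha)\bigl[\log\bar q-\log(q-\alpha\bar\alpha)\bigr]$. I would then show $R'_{\alpha,\alpha}(0.5)\leq 0$ directly: at $z=0.5$ one has $q=1/2$, and the sign of $N'(0.5)D(0.5)-N(0.5)D'(0.5)$ reduces to a one-variable inequality in $\alpha$ that I would check by monotonicity or by a short calculus argument. Since by the general part the maximizer is already in $[z_L,z_U]$ and $z_L\leq 0.5$ always holds for $\alpha\leq 1/2$, the extra fact $R'_{\alpha,\alpha}(0.5)\leq 0$ confines it to $[z_L,0.5]$. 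The main obstacle I anticipate is the endpoint sign analysis: verifying $R'_{\alpha,\beta}(z_L)\geq 0$ requires plugging the irrational value $z_L$ into $N$, $N'$, $D$ and simplifying $N(z_L)$ in closed form, and it is not obvious a priori that the resulting inequality holds for \emph{all} $(\alpha,\beta)$ with $\alpha+\beta\leq 1$; I would likely need to exploit the symmetry $C(\alpha,\beta)=C(1-\alpha,1-\beta)$ from Lemma 2.1 (which interchanges the roles of $z_L$ and $z_U$ via $z\mapsto 1-z$) to reduce the two endpoint checks to one.
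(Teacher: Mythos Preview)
Your high-level strategy coincides with the paper's: write $R'_{\alpha,\beta}(z)=f_{\alpha,\beta}(z)/(1+p(z))^2$ for a sign-determining factor $f_{\alpha,\beta}$ (your $N'D-ND'$), then show $f_{\alpha,\beta}(z_L)\ge 0$ and $f_{\alpha,\beta}(z_U)\le 0$. One point you leave implicit but which the paper states and uses immediately is that $f_{\alpha,\beta}$ is \emph{monotone decreasing} in $z$ (obvious once $f$ is written out, since $\log(1-p)$ and $-\log(p-\alpha\bar\beta)$ are both decreasing in $p$, which is increasing in $z$). This monotonicity is what turns the two endpoint sign checks into the conclusion that the maximizer lies in $[z_L,z_U]$; without it, signs at two points constrain nothing. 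Your phrase ``checking that the critical-point equation has its relevant root inside'' is this observation in disguise, but you should state the monotonicity explicitly rather than frame it as a separate root-location step. Also, your displayed formula for $N'(z)$ is not quite right: differentiating the linear terms gives $\beta H_2(\alpha)-\bar\alpha H_2(\beta)$, not $-(\bar\beta-\alpha)(H_2(\alpha)-H_2(\beta))$; the correct $f_{\alpha,\beta}$ carries a factor $(1-\alpha\bar\beta)[H_2(\alpha)-H_2(\beta)]$ once the $-ND'$ piece is included.

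The real work, as you anticipate, is the two endpoint inequalities, and here your plan has a gap. Your proposed shortcut via the symmetry $C(\alpha,\beta)=C(1-\alpha,1-\beta)$ does not reduce the two checks to one: that map sends the region $\alpha+\beta\le 1$ to $\alpha+\beta\ge 1$, and there is no pointwise identity $R_{\alpha,\beta}(z)=R_{1-\alpha,1-\beta}(1-z)$ to exploit. The paper handles each endpoint separately. For $z_L$ it substitutes $p(z_L)=\sqrt{\alpha\bar\beta}$ into $f_{\alpha,\beta}$, obtaining $(1-\alpha\bar\beta)\bigl[H_2(\alpha)-H_2(\beta)+(\bar\beta-\alpha)\log\bigl((1-\sqrt{\alpha\bar\beta})/\sqrt{\alpha\bar\beta}\bigr)\bigr]$, then differentiates this in $\alpha$ (with $\beta$ fixed), applies $\log x\le x-1$ to show the derivative is nonpositive, and checks that the boundary value at $\alpha=\bar\beta$ is zero. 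For $z_U$ the same approach does not go through directly; the paper instead writes $-f_{\alpha,\beta}(p(z_U))=F^1_{\alpha,\beta}+F^2_{\alpha,\beta}$ as a sum of two pieces and proves each is nonnegative by the same differentiate-and-use-$\log x\le x-1$ trick, one piece in $\alpha$ and the other in $\beta$. Your proposal leaves all of this as ``reducing to an elementary inequality,'' but these reductions are where the proof lives. The BSC refinement is, by contrast, easy: with $\alpha=\beta$ the $H_2(\alpha)-H_2(\beta)$ term vanishes and $f_{\alpha,\alpha}(0.5)\le 0$ is immediate, as you say.
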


The proof of Lemma \ref{lemma:maximization_domain} appears in Appendix \ref{app:maximization_domain}. The alternative capacity expression \eqref{eq:BIBO_alternative} is obtained by taking the derivative of \eqref{eq:BIBO_capacity} and substituting the resulting relation into the capacity expression \eqref{eq:BIBO_capacity}. Note that the LHS of \eqref{eq:BIBO_arg_equation} is a decreasing function of $p$, and hence, efficient methods can be applied to calculate \eqref{eq:BIBO_alternative}.

\begin{remark}
The feedback capacity can also be calculated using upper and lower bounds from \cite{Sabag_UB_IT}, which turn out to meet for this channel, instead of the DP approach that is taken in this paper.
\end{remark}
Theorem \ref{theorem:BIBO} provides the capacity of three special cases: the BSC, the S-channel and the Z-channel. Their feedback capacities are calculated by substituting their corresponding parameters in Theorem \ref{theorem:BIBO}.
\begin{corollary}[BSC capacity]\label{coro:BSC}
The feedback capacity of the input-constrained BSC $(\alpha=\beta)$ is
\begin{align}\label{eq:BSC_capacity}
    C^{\mathrm{BSC}}(\alpha) &= \max_{\sqrt{\alpha\bar\alpha}\leq p \le 0.5} \frac{H_2(p) + pH_2\left(\frac{\alpha\bar{\alpha}}{p}\right)}{1+p} - H_2(\alpha),
\end{align}
where $\alpha\leq0.5$. An alternative capacity expression is
\begin{align}\label{eq:simple_capacity}
    C^{\mathrm{BSC}}(\alpha) &= \log \left(\frac{1-p_\alpha}{p_\alpha-\alpha\bar{\alpha}}\right) - H_2(\alpha),
\end{align}
where $p_\alpha$ is the unique solution of $(\alpha\bar{\alpha})^{\alpha\bar{\alpha}}(1-p)^2 = (p-\alpha\bar{\alpha})^{1+\alpha\bar{\alpha}}$.
\end{corollary}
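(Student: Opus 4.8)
The plan is to derive Corollary~\ref{coro:BSC} directly from Theorem~\ref{theorem:BIBO} and Lemma~\ref{lemma:maximization_domain} by setting $\beta=\alpha$ and performing a change of variable. First I would substitute $\beta=\alpha$, so that $\bar{\beta}=\bar{\alpha}$ and $H_2(\beta)=H_2(\alpha)$, into $R_{\alpha,\beta}(z)$ and introduce $p=\alpha\bar{z}+\bar{\alpha}z=\alpha+(1-2\alpha)z$, which, since $\alpha\le\tfrac12$, is an increasing affine bijection of $[0,1]$ onto $[\alpha,\bar{\alpha}]$. Using $\bar{\alpha}z+\alpha\bar{z}=p$, the two $H_2(\alpha)$-terms in the numerator combine, because $(\bar{z}+\bar{\alpha}z)+(z+\alpha\bar{z})=1+p$; hence the numerator becomes $H_2(p)+pH_2(\alpha\bar{\alpha}/p)-(1+p)H_2(\alpha)$ over the denominator $1+p$, i.e. exactly the summand in \eqref{eq:BSC_capacity}. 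For the optimization range, the BSC part of Lemma~\ref{lemma:maximization_domain} restricts the maximizer to $z\in[z_L,0.5]$; plugging $z_L=\tfrac{\sqrt{\alpha}}{\sqrt{\alpha}+\sqrt{\bar{\alpha}}}$ into $p=\alpha+(1-2\alpha)z$ and factoring $1-2\alpha=(\sqrt{\bar{\alpha}}-\sqrt{\alpha})(\sqrt{\bar{\alpha}}+\sqrt{\alpha})$ yields $p=\sqrt{\alpha\bar{\alpha}}$, while $z=0.5$ yields $p=0.5$; so the range becomes $p\in[\sqrt{\alpha\bar{\alpha}},0.5]$, which establishes \eqref{eq:BSC_capacity}.

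For the alternative form \eqref{eq:simple_capacity} I would first put the objective in closed form: writing $c=\alpha\bar{\alpha}$ and expanding the binary entropies, the $p\log p$ contributions cancel and $f(p):=H_2(p)+pH_2(c/p)$ simplifies to $f(p)=-(1-p)\log(1-p)-(p-c)\log(p-c)-c\log c$. Differentiating $g(p)=f(p)/(1+p)$, the stationarity condition $g'(p)=0$ is equivalent to $(1+p)f'(p)=f(p)$, i.e. $f'(p)=g(p)$; since $f'(p)=\log\tfrac{1-p}{p-c}$, this reads $\log\tfrac{1-p}{p-c}=g(p)$. Consequently, at a stationary point $p_\alpha$ one gets $C^{\mathrm{BSC}}(\alpha)=g(p_\alpha)-H_2(\alpha)=\log\tfrac{1-p_\alpha}{p_\alpha-\alpha\bar{\alpha}}-H_2(\alpha)$, which is \eqref{eq:simple_capacity}. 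Substituting $g(p_\alpha)=\log(1-p_\alpha)-\log(p_\alpha-c)$ back into $(1+p)g(p)=f(p)$ and collecting the coefficients of $\log(1-p)$, $\log(p-c)$ and $\log c$ gives $2\log(1-p)-(1+c)\log(p-c)+c\log c=0$, which exponentiates to $(\alpha\bar{\alpha})^{\alpha\bar{\alpha}}(1-p)^2=(p-\alpha\bar{\alpha})^{1+\alpha\bar{\alpha}}$.

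It then remains to check that $p_\alpha$ is well defined and is genuinely the maximizer. The left-hand side of the last equation is strictly decreasing in $p$ on $(\alpha\bar{\alpha},1)$, tends to $+\infty$ as $p\downarrow\alpha\bar{\alpha}$ and to $-\infty$ as $p\uparrow 1$, so it has a unique root $p_\alpha$. That this root is the global maximizer follows because $g$ is unimodal on its domain: at any stationary point the quantity $\tfrac{d}{dp}\bigl(f'(p)-g(p)\bigr)$ equals $f''(p)<0$ (strict concavity of $f$), so $g'$ can change sign only from $+$ to $-$; combined with Lemma~\ref{lemma:maximization_domain}, which already places the maximizer strictly inside $[\sqrt{\alpha\bar{\alpha}},0.5]$ when $\alpha<\tfrac12$ (the case $\alpha=\tfrac12$ being trivial with $C^{\mathrm{BSC}}=0$), this forces $p_\alpha$ to be that maximizer. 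The step I expect to be the main obstacle is precisely this last one --- establishing unimodality cleanly and, if one wants a self-contained argument avoiding Lemma~\ref{lemma:maximization_domain}, checking the sign of $g'$ at the endpoints $p=\sqrt{\alpha\bar{\alpha}}$ and $p=0.5$ (equivalently, comparing $\log\tfrac{1-p}{p-c}$ with $g(p)$ there) --- whereas everything preceding it is routine bookkeeping.
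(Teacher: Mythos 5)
Your proposal is correct and follows essentially the same route the paper takes (and leaves implicit): substitute $\beta=\alpha$ into Theorem~\ref{theorem:BIBO}, change variables to $p=\alpha\bar z+\bar\alpha z$ (an increasing bijection for $\alpha\le 1/2$), use the identity $(\bar z+\bar\alpha z)+(z+\alpha\bar z)=1+p$ to pull out $-H_2(\alpha)$, and restrict the range via Lemma~\ref{lemma:maximization_domain}. Your re-derivation of the stationarity equation from $f(p)=-(1-p)\log(1-p)-(p-c)\log(p-c)-c\log c$ is a clean way to recover \eqref{eq:simple_capacity}; it matches what one obtains by setting $\alpha=\beta$ in \eqref{eq:BIBO_alternative}--\eqref{eq:BIBO_arg_equation} (after cancelling the factor $1-2\alpha$), and the uniqueness/unimodality argument you give is exactly the monotonicity of $f_{\alpha,\alpha}$ already established in the proof of Lemma~\ref{lemma:maximization_domain}.
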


By operational considerations, the feedback capacity in Theorem \ref{theorem:BIBO} serves as an upper bound for the non-feedback setting, which is still an open problem. For the BSC, it will be shown further in Theorem \ref{theorem:increases} that feedback increases capacity, at least for small values of $\alpha$, so this upper bound is not tight.

\begin{corollary}[S-channel capacity]
The feedback capacity of the input-constrained S-channel $(\beta=0)$ is
\begin{align}\label{eq:S_capa}
    C^{\mathrm{S}}(\alpha) &= \max_{ \sqrt{\alpha} \leq p \leq 1} \frac{H_2(p) + pH_2\left(\frac{\alpha}{p}\right)- H_2(\alpha)}{1+ p}\nn\\
                  &= \max_{ \sqrt{\alpha} \leq p \leq 1} \bar{\alpha}\frac{H_2\left(\frac{1-p}{1-\alpha}\right)}{1+ p}
\end{align}
The capacity can also be expressed by:
   \begin{align}\label{eq:S_capa_alter}
    C^{\mathrm{S}}(\alpha) &= \log\left(\frac{1-p_\alpha}{p_\alpha-\alpha}\right),
\end{align}
where $p_\alpha$ is the unique solution of $(1-p)^2 = (p-\alpha)^{1+\alpha}\bar{\alpha}^{\bar{\alpha}}$.
\end{corollary}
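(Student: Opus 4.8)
The plan is to obtain this corollary by specializing Theorem~\ref{theorem:BIBO} and Lemma~\ref{lemma:maximization_domain} to $\beta=0$ (which trivially satisfies $\alpha+\beta\le 1$), together with one elementary entropy identity; no new machinery is required.

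First I would derive the first line of \eqref{eq:S_capa}. Setting $\beta=0$ gives $\bar\beta=1$ and $H_2(\beta)=0$, so in \eqref{eq:BIBO_capacity} the quantity $\alpha\bar z+\bar\beta z$ becomes $\alpha+\bar\alpha z$, the factor $\bar z+\bar\beta z$ becomes $1$, the term $(z+\alpha\bar z)H_2(\beta)$ drops out, and the denominator becomes $1+\alpha+\bar\alpha z$. By Lemma~\ref{lemma:maximization_domain} the maximizing $z$ lies in $[z_L,z_U]$, and at $\beta=0$ these endpoints are $z_L=\frac{\sqrt\alpha}{1+\sqrt\alpha}$ and $z_U=1$. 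I would then change variables to $p=\alpha+\bar\alpha z$; since $p$ is strictly increasing in $z$ (slope $\bar\alpha>0$), this is a bijection of $[z_L,z_U]$ onto $[\sqrt\alpha,1]$: the endpoint $z=z_L$ maps to $p=\alpha+\sqrt\alpha(1-\sqrt\alpha)=\sqrt\alpha$ via the factorization $1-\alpha=(1-\sqrt\alpha)(1+\sqrt\alpha)$, and $z=1$ maps to $p=1$. Under this substitution the objective becomes exactly $\big(H_2(p)+pH_2(\alpha/p)-H_2(\alpha)\big)/(1+p)$.

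Next, for the second line of \eqref{eq:S_capa} I would establish the identity $H_2(p)+pH_2(\alpha/p)-H_2(\alpha)=\bar\alpha\,H_2\!\big(\tfrac{1-p}{1-\alpha}\big)$ by expanding every binary entropy into its $\log$ terms: the coefficients of $\log p$ cancel because $\alpha+(p-\alpha)=p$, leaving $-\bar p\log\bar p-(p-\alpha)\log(p-\alpha)+\bar\alpha\log\bar\alpha$, and expanding the right-hand side with $\bar\alpha-\bar p=p-\alpha$ and $\bar p+(p-\alpha)=\bar\alpha$ yields the same expression; dividing by $1+p$ gives the claimed form. For \eqref{eq:S_capa_alter} I would specialize \eqref{eq:BIBO_alternative}--\eqref{eq:BIBO_arg_equation}: at $\beta=0$ we have $\alpha\bar\beta=\alpha$ and $1-\alpha-\beta=\bar\alpha$, and both correction terms $\beta H_2(\alpha)/(1-\alpha-\beta)$ and $\bar\alpha H_2(\beta)/(1-\alpha-\beta)$ vanish, so \eqref{eq:BIBO_alternative} collapses to $\log\frac{1-p_\alpha}{p_\alpha-\alpha}$; in \eqref{eq:BIBO_arg_equation} the common factor $1-\alpha\bar\beta=\bar\beta-\alpha=\bar\alpha$ cancels, and using $H_2(\alpha)+\alpha\log\alpha=-\bar\alpha\log\bar\alpha$ the equation reduces to $(1-p)^2=(p-\alpha)^{1+\alpha}\bar{\alpha}^{\bar{\alpha}}$, whose root in $(\alpha,1)$ is unique by the monotonicity in $p$ noted after Theorem~\ref{theorem:BIBO}.

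There is no substantial obstacle here: each step is a substitution or a few lines of algebra. The only points requiring care are checking that $p=\alpha+\bar\alpha z$ carries $z_L,z_U$ onto exactly $\sqrt\alpha,1$ (where the factorization $1-\alpha=(1-\sqrt\alpha)(1+\sqrt\alpha)$ enters) and confirming that the $\beta$-dependent correction terms in \eqref{eq:BIBO_alternative} genuinely vanish rather than become indeterminate (they do, since that expression holds for all $\alpha+\beta\le 1$, and $\beta=0$ lies in this range). As an alternative to invoking \eqref{eq:BIBO_alternative}, one could re-derive \eqref{eq:S_capa_alter} directly by imposing the first-order condition on $g(p)=\bar\alpha H_2\!\big(\tfrac{1-p}{1-\alpha}\big)/(1+p)$, which produces the same relation $(1-p)^2=(p-\alpha)^{1+\alpha}\bar{\alpha}^{\bar{\alpha}}$ and, after substituting back, the closed form; that the critical point is interior to $[\sqrt\alpha,1]$ and is the maximizer is already ensured by Lemma~\ref{lemma:maximization_domain}.
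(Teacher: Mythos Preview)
Your proposal is correct and follows the same approach as the paper, which simply states that the S-channel capacity is obtained by substituting $\beta=0$ into Theorem~\ref{theorem:BIBO}; you have merely supplied the algebraic details (the change of variables $p=\alpha+\bar\alpha z$, the entropy identity, and the reduction of \eqref{eq:BIBO_alternative}--\eqref{eq:BIBO_arg_equation}) that the paper leaves implicit.
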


The second capacity expression in \eqref{eq:S_capa} reveals a simple zero-error coding scheme for the S-channel. To describe this, we first fix a $z \in (0,1)$, and consider a set, $\cM$, consisting of $|\cM| = 2^{NH_2(z)}$ messages, where $N$ is a large integer.\footnote{We will be slightly loose in our description of this coding scheme so as to keep the focus on the simplicity of the scheme. We will ignore all $o_N(1)$ correction terms needed to make our arguments mathematically precise. Thus, for example, we implicitly assume that $2^{NH_2(z)}$ is an integer. We will also assume that $Nz$ is an integer, and that there are $2^{NH(z)}$ binary sequences of length $N$ which contain exactly $Nz$ $1$s.} The coding scheme operates in two stages:
 \begin{enumerate}
 \item \underline{Message shaping}: The set of messages, $\cM$, is mapped in a one-to-one fashion into the set of length-$N$ binary sequences containing $Nz$ $1$s. Thus, each message is identified with a binary sequence of length $N$ with fraction of $1$s equal to $z$. This ``message shaping'' can be implemented, for instance, using the enumerative source coding technique \cite[Example 2]{Cover73}.

\item \underline{Message transmission}: Each of the $N$ bits, $b_1,b_2,\ldots,b_N$, in the shaped message sequence is transmitted by the encoder using the following procedure:
\begin{quote}
To send the message bit $b$, the encoder transmits the sequence $b \, \overline{b} \,  b \, \overline{b} \ldots$, where $\overline{b}$ denotes the complement (NOT) of $b$, until a $0$ is received at the channel output, at which point the transmission is stopped.
\end{quote}
Note that if $y_1 y_2 \ldots y_{\ell-1} 0$ is the sequence received at the S-channel output in response to the transmission of the message bit $b$, then the decoder can determine whether $b = 0$ or $b = 1$ from the parity of $\ell$: if $\ell$ is odd, then $b = 0$; if $\ell$ is even, then $b = 1$.
\end{enumerate}

By the law of large numbers, the number of $S$-channel uses needed for the transmission of an $N$-bit shaped message sequence is close to $N \times \E[L_z]$, where $\E[L_z]$ denotes the expected number of transmissions needed for sending a single $\text{Bernoulli}(z)$ bit $b$ using the procedure described above.
It is easy to check that $\E[L_z]$ equals
\begin{align}\label{eq:S_channel_uses}
  \bar{z}\bar{\alpha}\sum_{k=1}^\infty (2k-1)\alpha^{k-1} + z\bar{\alpha}\sum_{k=1}^\infty 2k \alpha^{k-1} &= -\bar{z} + \bar{\alpha}\sum_{k=1}^\infty 2k\alpha^{k-1} \nn\\
  &=\frac{1+p}{1-\alpha},
\end{align}
where $p=z+\alpha\bar{z}$. Thus, the rate achieved by this scheme is (arbitrarily close to) $\frac{\log|\cM|}{N\E[L_z]} = \frac{H_2(z)}{\E[L_z]} = \frac{1-\alpha}{1+p} \, H_2\left(\frac{1-p}{1-\alpha}\right)$.
Maximizing over $z \in (0,1)$, we conclude that the scheme achieves the S-channel capacity given by \eqref{eq:S_capa}.

\begin{corollary}[Z-channel capacity]
The feedback capacity of the input-constrained Z-channel $(\alpha=0)$ is
\begin{align}\label{main:Z_capacity}
    C^{\mathrm{Z}}(\beta) &= \max_{0\leq p \le \bar{\beta}} \frac{H_2(p) - \frac{p}{1-\beta}H_2(\beta)}{1+p} \nn\\
&= - \log(1-p_\beta),
\end{align}
where $p_\beta$ is the unique solution of the quadratic equation $(1-p)^2 = p \cdot 2^{\frac{H_2(\beta)}{1-\beta}}$.
\end{corollary}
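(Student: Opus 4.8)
The plan is to derive \eqref{main:Z_capacity} as the $\alpha=0$ specialization of Theorem~\ref{theorem:BIBO}, followed by an explicit one–variable optimization. First I would substitute $\alpha=0$ into the rate function $R_{\alpha,\beta}(z)$ of Lemma~\ref{lemma:maximization_domain}. Since $H_2(0)=0$ and $\frac{\alpha\bar\beta}{\alpha\bar z+\bar\beta z}=0$ when $\alpha=0$, the terms $(\bar z+\bar\beta z)H_2(\alpha)$ and $(\alpha\bar z+\bar\beta z)H_2\!\left(\frac{\alpha\bar\beta}{\alpha\bar z+\bar\beta z}\right)$ both vanish, the term $(z+\alpha\bar z)H_2(\beta)$ collapses to $zH_2(\beta)$, and the denominator becomes $1+\bar\beta z$, so $R_{0,\beta}(z)=\frac{H_2(\bar\beta z)-zH_2(\beta)}{1+\bar\beta z}$. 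Using the fact (noted just after Theorem~\ref{theorem:BIBO}) that the DP yields the maximization over $z\in[0,1]$, together with the change of variable $p=\bar\beta z$ (so $z=p/\bar\beta$, and $z\in[0,1]$ corresponds to $p\in[0,\bar\beta]$), this becomes $\max_{0\le p\le\bar\beta}\frac{H_2(p)-\frac{p}{1-\beta}H_2(\beta)}{1+p}$, which is the first line of \eqref{main:Z_capacity}.

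Next I would optimize $g(p):=\frac{H_2(p)-cp}{1+p}$ over $[0,\bar\beta]$, where $c:=\frac{H_2(\beta)}{1-\beta}$ is constant in $p$. Differentiating and clearing the positive factor $(1+p)^2$, the stationarity condition reduces after cancellation to $(1+p)H_2'(p)-H_2(p)=c$; since $H_2'(p)=\log\frac{1-p}{p}$, the left-hand side simplifies to $\log\frac{(1-p)^2}{p}$, so the condition reads $\log\frac{(1-p)^2}{p}=\frac{H_2(\beta)}{1-\beta}$, i.e.\ $(1-p)^2=p\cdot 2^{H_2(\beta)/(1-\beta)}$ — exactly the quadratic in the statement. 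Because $\frac{(1-p)^2}{p}$ is strictly decreasing from $+\infty$ to $0$ on $(0,1)$, there is a unique root $p_\beta\in(0,1)$ (equivalently, the two roots of the monic quadratic multiply to $1$, so exactly one lies in $(0,1)$), and $g'$ changes sign from $+$ to $-$ there, making $p_\beta$ the unique global maximizer of $g$ on $(0,1)$. To see that this maximizer is interior to $[0,\bar\beta]$ I would check $g'(\bar\beta)\le0$, which reduces to $(2-\beta)\log\beta\le0$ and hence holds for all $\beta\le 1$; alternatively, Lemma~\ref{lemma:maximization_domain} with $\alpha=0$ already confines the maximizer to $[z_L,z_U]=[0,\tfrac{1}{1+\sqrt\beta}]$.

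Finally, I would substitute the stationarity relation back into $g$: from $(1+p_\beta)H_2'(p_\beta)-H_2(p_\beta)=c$ we get $H_2(p_\beta)-cp_\beta=(1+p_\beta)(H_2'(p_\beta)-c)$, hence $g(p_\beta)=H_2'(p_\beta)-c=\log\frac{1-p_\beta}{p_\beta}-\log\frac{(1-p_\beta)^2}{p_\beta}=-\log(1-p_\beta)$, which is the second line of \eqref{main:Z_capacity}. As a sanity check, $\beta=0$ gives $c=0$, $p_0=\frac{3-\sqrt5}{2}$, and $C^{\mathrm Z}(0)=\log\frac{1+\sqrt5}{2}$, the noiseless $(1,\infty)$-RLL capacity. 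The computation is essentially mechanical; the steps that need care are the degenerate $\alpha=0$ evaluations (the nested term $(\alpha\bar z+\bar\beta z)H_2\!\left(\frac{\alpha\bar\beta}{\alpha\bar z+\bar\beta z}\right)$ is identically $0$ for $z>0$ and extends by continuity with value $0$ at $z=0$), the bookkeeping of the substitution $p=\bar\beta z$ and its range, and the argument that the maximum is attained at the interior stationary point rather than at an endpoint.
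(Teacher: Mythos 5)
Your proof is correct and follows essentially the same route as the paper: the corollary is obtained by substituting $\alpha=0$ into the BIBO capacity formula of Theorem~\ref{theorem:BIBO} and then carrying out the one-variable optimization (the paper itself gives no separate argument, stating only that the special cases follow by substitution, and your derivation of the stationarity condition $\log\tfrac{(1-p)^2}{p}=\tfrac{H_2(\beta)}{1-\beta}$ is exactly what one gets by setting $\alpha=0$ in \eqref{eq:BIBO_arg_equation}, with the closed form $-\log(1-p_\beta)$ matching \eqref{eq:BIBO_alternative} at $\alpha=0$). All the algebra — the degenerate evaluations at $\alpha=0$, the change of variable $p=\bar\beta z$, the simplification $(1+p)H_2'(p)-H_2(p)=\log\tfrac{(1-p)^2}{p}$, the uniqueness of $p_\beta$, the endpoint check via $z_U=\tfrac{1}{1+\sqrt\beta}$, and the back-substitution giving $-\log(1-p_\beta)$ — checks out.
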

The feedback capacities of the input-constrained S and Z channels are different  because of the asymmetry imposed by the input constraint (Fig. \ref{fig:ZS}). Note that for most values of the channel parameters, the capacity of the $S$-channel exceeds that of the $Z$-channel; intuitively, the decoder can gain more information when observing two consecutive ones in the channel output because it knows that there is one error in this transmission pair.

\begin{figure}[h]
\centering
        \centerline{\includegraphics[scale=0.4]{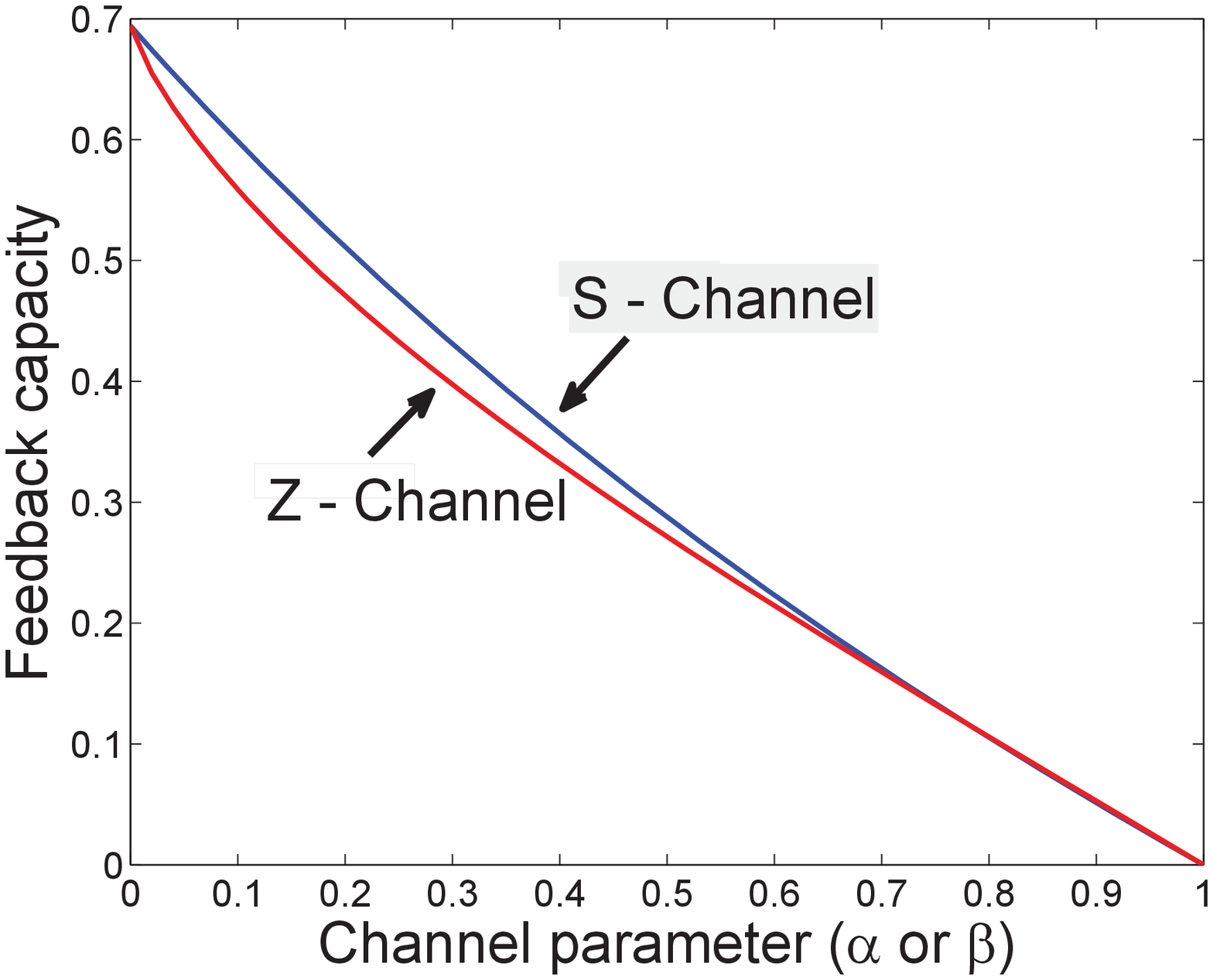}}
\caption{Comparison between the capacities of the constrained Z- and S- channels.}
\label{fig:ZS}
\vspace{-6mm}
\end{figure}
\subsection{Feedback increases capacity}
In this section, we show that feedback increases capacity for the input-constrained BSC.
\begin{theorem}\label{theorem:increases}
Feedback increases capacity for the $(1,\infty)$-RLL input-constrained BSC, for all values of $\alpha$ in some neighborhood around $0$.
\end{theorem}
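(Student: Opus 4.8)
The plan is to obtain the $\alpha\to 0^{+}$ asymptotics of the feedback capacity $C^{\mathrm{BSC}}(\alpha)$ from Corollary~\ref{coro:BSC} and compare them with the known asymptotics of the \emph{non-feedback} constrained capacity $C^{\mathrm{nfb}}(\alpha)$ taken from \cite{Han_asymptotics_symmertic}. Write $\varphi=\tfrac{1+\sqrt5}{2}$ for the golden ratio. At $\alpha=0$ both quantities equal the noiseless $(1,\infty)$-RLL capacity $\log_2\varphi$; indeed, setting $\alpha=0$ in \eqref{eq:BSC_capacity} gives $\max_{0\le p\le 1/2}\tfrac{H_2(p)}{1+p}=\log_2\varphi$, whose unique, interior maximiser is $p^\star=\varphi^{-2}=\tfrac{3-\sqrt5}{2}\approx 0.382$ (equivalently, the $\alpha=0$ root of the equation in Corollary~\ref{coro:BSC}). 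Since $C^{\mathrm{BSC}}(\alpha)\ge C^{\mathrm{nfb}}(\alpha)$ always, the theorem is exactly the statement that near $\alpha=0$ the feedback capacity decreases \emph{strictly} slower than the non-feedback capacity, so it is enough to compare the leading correction terms of the two expansions.

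First I would expand the bracket in \eqref{eq:BSC_capacity}. Put $g(p,\alpha):=\tfrac{H_2(p)+pH_2(\alpha\bar\alpha/p)}{1+p}-H_2(\alpha)$. Using the elementary expansion $H_2(\epsilon)=\epsilon\log_2\tfrac1\epsilon+O(\epsilon)$ one finds, uniformly for $p$ in a fixed neighbourhood of $p^\star$, that $pH_2(\alpha\bar\alpha/p)=\alpha\log_2\tfrac1\alpha+O(\alpha)$ and $H_2(\alpha)=\alpha\log_2\tfrac1\alpha+O(\alpha)$, hence
\[
g(p,\alpha)=\frac{H_2(p)}{1+p}-\frac{p}{1+p}\,\alpha\log_2\tfrac1\alpha+O(\alpha).
\]
Since $g_0(p):=\tfrac{H_2(p)}{1+p}$ has a strict interior maximum at $p^\star$ with $g_0''(p^\star)<0$, a standard perturbation-of-a-maximum argument (bound the loss $g_0(p^\star)-g_0(p(\alpha))$ by the size of the $O(\alpha\log\tfrac1\alpha)$ perturbation) shows that the constrained maximiser $p(\alpha)$ of $g(\cdot,\alpha)$ over $[\sqrt{\alpha\bar\alpha},1/2]$ is interior for small $\alpha$ and satisfies $p(\alpha)=p^\star+O(\sqrt\alpha)$ — close enough that $\bigl(\tfrac{p(\alpha)}{1+p(\alpha)}-\tfrac{p^\star}{1+p^\star}\bigr)\alpha\log_2\tfrac1\alpha=o(\alpha)$ and $(p(\alpha)-p^\star)^2=O(\alpha)$. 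Substituting back,
\[
C^{\mathrm{BSC}}(\alpha)=\log_2\varphi-\frac{p^\star}{1+p^\star}\,\alpha\log_2\tfrac1\alpha+O(\alpha)=\log_2\varphi-\frac{5-\sqrt5}{10}\,\alpha\log_2\tfrac1\alpha+O(\alpha).
\]

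Next I would invoke the asymptotic expansion of the non-feedback capacity from \cite{Han_asymptotics_symmertic}, which is of the form $C^{\mathrm{nfb}}(\alpha)=\log_2\varphi-c_{\mathrm{nfb}}\,\alpha\log_2\tfrac1\alpha+O(\alpha)$; here $c_{\mathrm{nfb}}=\tfrac{5+\sqrt5}{10}=\tfrac{1}{1+p^\star}$ (the stationary probability of a $0$ in the maxentropic input), but the only feature we actually use is the strict inequality $c_{\mathrm{nfb}}>\tfrac{5-\sqrt5}{10}$. Subtracting the two expansions,
\[
C^{\mathrm{BSC}}(\alpha)-C^{\mathrm{nfb}}(\alpha)=\Bigl(c_{\mathrm{nfb}}-\tfrac{5-\sqrt5}{10}\Bigr)\,\alpha\log_2\tfrac1\alpha+O(\alpha)=\frac{1}{\sqrt5}\,\alpha\log_2\tfrac1\alpha+O(\alpha),
\]
and since $\alpha\log_2\tfrac1\alpha$ dominates the $O(\alpha)$ remainder as $\alpha\downarrow 0$, the difference is strictly positive for all sufficiently small $\alpha>0$, which is the assertion of the theorem.

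The main obstacle is the feedback expansion: cleanly separating the order-$\alpha\log\tfrac1\alpha$ contributions from the genuine order-$\alpha$ ones in \eqref{eq:BSC_capacity}, and rigorously pinning the location of the constrained maximiser $p(\alpha)$ — verifying that neither endpoint $\sqrt{\alpha\bar\alpha}$ nor $1/2$ is active for small $\alpha$ (both fail since $g_0(0)=g_0(1/2)<g_0(p^\star)$) and that the displacement from $p^\star$ is $O(\sqrt\alpha)$, which already suffices. A secondary, lighter obstacle is that we treat \cite{Han_asymptotics_symmertic} as a black box on the non-feedback side; should that reference furnish only one-sided asymptotics, an upper bound on $C^{\mathrm{nfb}}(\alpha)$ with leading coefficient any constant strictly exceeding $\tfrac{5-\sqrt5}{10}$ is all that the argument requires.
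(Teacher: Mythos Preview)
Your approach matches the paper's: extract the $\alpha\log\alpha$ coefficient from each capacity and compare (Theorems~\ref{theorem:asymptotic} and~\ref{theorem_Han}). Your feedback coefficient $\tfrac{p^\star}{1+p^\star}=\tfrac{5-\sqrt5}{10}=\tfrac{2-\lambda}{3-\lambda}$ agrees with Theorem~\ref{theorem:asymptotic}. Two points need correcting, though neither breaks the argument.

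First, the non-feedback coefficient from \cite{Han_asymptotics_symmertic} is $c_{\mathrm{nfb}}=\tfrac{2\lambda+2}{4\lambda+3}=\tfrac{5-\sqrt5}{5}$ (Theorem~\ref{theorem_Han}), not your guessed $\tfrac{5+\sqrt5}{10}=\tfrac{1}{1+p^\star}$; since you explicitly flag that only the inequality $c_{\mathrm{nfb}}>\tfrac{5-\sqrt5}{10}$ is used, and that still holds, the conclusion survives. Second, the perturbation argument you describe (``bound the loss $g_0(p^\star)-g_0(p(\alpha))$ by the size of the perturbation'') yields only $|p(\alpha)-p^\star|=O\bigl(\sqrt{\alpha\log(1/\alpha)}\bigr)$, not $O(\sqrt\alpha)$. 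Squaring gives $(p(\alpha)-p^\star)^2=O(\alpha\log(1/\alpha))$, the \emph{same} order as the term you are isolating, so the claimed $O(\alpha)$ remainder in your feedback expansion is not justified by this route. The simplest repair is to note that for the theorem you never needed the full expansion: the lower bound $C^{\mathrm{BSC}}(\alpha)\ge g(p^\star,\alpha)=\log_2\varphi-\tfrac{p^\star}{1+p^\star}\,\alpha\log_2\tfrac1\alpha+O(\alpha)$, obtained by evaluating at the unperturbed maximiser, already suffices for the comparison with $C^{\mathrm{nfb}}$. The paper does establish the exact $O(\alpha)$ remainder, but by implicit differentiation of the defining equation for $p_\alpha$ (Appendix~\ref{app:shannon}) rather than a maximiser-displacement estimate.
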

As discussed in Section I, this gives a counterexample to a claim of Shannon's from \cite{shannon56}. A subsequent work \cite{andrew_upperbounds} related to the conference version of our paper \cite{sabag_allerton_15} used a novel technique to calculate upper bounds on the non-feedback capacity of the input-constrained BSC. The upper bound in \cite{andrew_upperbounds} is a tighter upper bound than our feedback capacity, which shows that feedback increases capacity not only for small values of $\alpha$, but actually for all $\alpha$.

In order to show Theorem \ref{theorem:increases}, we provide the asymptotic expressions of the input-constrained BSC with and without feedback.
\begin{theorem}\label{theorem:asymptotic}
The feedback capacity of the input-constrained BSC is:
\begin{equation}\label{eq:asymptotic}
 C^{\mathrm{BSC}}(\alpha) = \log \lambda + \frac{2-\lambda}{3-\lambda}\alpha\log\alpha  + \left( \frac{\log(2-\lambda)-(2-\lambda)}{3-\lambda}\right)\alpha + O(\alpha^2\log^2\alpha),
\end{equation}
where $\lambda$ is the golden ratio $\left(\lambda=\frac{1+\sqrt{5}}{2}\right)$.
\end{theorem}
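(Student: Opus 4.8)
The plan is to work from the transcendental characterization of the optimizer supplied by Corollary~\ref{coro:BSC}. Writing $q:=\alpha\bar\alpha$, we have $C^{\mathrm{BSC}}(\alpha)=\log\!\big(\tfrac{1-p_\alpha}{p_\alpha-q}\big)-H_2(\alpha)$, where $p_\alpha\in[\sqrt q,\tfrac12]$ is the unique root of $q^{q}(1-p)^2=(p-q)^{1+q}$, equivalently of $q\ln q+2\ln(1-p)=(1+q)\ln(p-q)$. Since $q=\alpha+O(\alpha^2)$ and $\ln q=\ln\alpha+O(\alpha)$, I would expand $p_\alpha$ about its value at $\alpha=0$, substitute into the capacity formula, and retain every contribution of order $\alpha$ and $\alpha\log\alpha$ while discarding everything that is $O(\alpha^2\log^2\alpha)$. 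Note that the expansion is non-analytic: $q\ln q$ has infinite slope at $q=0$, so $p_\alpha$ must acquire an $\alpha\log\alpha$ correction.

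At $q=0$ the defining equation reduces to $(1-p_0)^2=p_0$, whose root in $[0,\tfrac12]$ is $p_0=\tfrac{3-\sqrt5}{2}$; the crucial point is that this is a golden-ratio quantity, $1-p_0=1/\lambda$ and $p_0=1/\lambda^2=2-\lambda$, so that $\log\!\big(\tfrac{1-p_0}{p_0}\big)=\log\lambda$ produces the leading term. I would also record the identities $\lambda^2=\lambda+1$, $\tfrac1{p_0(1-p_0)}=\lambda^3=2\lambda+1$, $\tfrac1{p_0}+\tfrac2{1-p_0}=3\lambda+1$, $\lambda(\lambda+1)=2\lambda+1$, together with $\tfrac{\lambda}{3\lambda+1}=\tfrac{2-\lambda}{3-\lambda}$ and $\tfrac{2\lambda+1}{3\lambda+1}=\tfrac1{3-\lambda}$ — these are precisely the algebraic relations that collapse the intermediate mess into the stated closed form.

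Next, writing $p_\alpha=p_0+\delta$, substituting into the logarithmic form of the equation, cancelling the zeroth-order relation $2\ln(1-p_0)=\ln p_0$, and linearizing the logarithms gives
\[
\delta=\frac{q\big(\ln q-\ln p_0+\lambda+1\big)}{3\lambda+1}+O(\alpha^2\log^2\alpha),
\]
which is consistent with the a priori bound $\delta=O(\alpha\log\alpha)$ obtained from the monotonicity of the equation in $p$ together with a quantitative implicit-function estimate. Substituting $p_\alpha=p_0+\delta$ into $C^{\mathrm{BSC}}(\alpha)=\log(1-p_\alpha)-\log(p_\alpha-q)-H_2(\alpha)$ and expanding each logarithm via $\log(1+u)=u\log e+O(u^2)$,
\[
C^{\mathrm{BSC}}(\alpha)=\log\lambda-\log e\!\left[\frac{\delta}{1-p_0}+\frac{\delta-q}{p_0}\right]-H_2(\alpha)+O(\alpha^2\log^2\alpha).
\]
Inserting the formula for $\delta$, simplifying the bracket with $\tfrac1{1-p_0}+\tfrac1{p_0}=2\lambda+1$ and $\lambda(\lambda+1)=2\lambda+1$, and using the small-argument expansion $H_2(\alpha)=-\alpha\log\alpha+\alpha\log e+O(\alpha^2)$, I would collect the coefficients of $\alpha\log\alpha$ and of $\alpha$. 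The first reduces to $\tfrac{\lambda}{3\lambda+1}=\tfrac{2-\lambda}{3-\lambda}$, and the second, via $p_0=2-\lambda$ and the same identities, assembles into the $\alpha$-coefficient of~\eqref{eq:asymptotic}.

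The one step demanding real care is the error accounting in this last calculation: because $\delta$ carries a logarithmic factor, the quadratic and cross terms dropped in the linearization ($\delta^2$, $q\delta$, $(\delta-q)^2$) are only $O(\alpha^2\log^2\alpha)$ rather than $O(\alpha^2)$, and likewise the $\alpha^2\log\alpha$ tail of $q\ln q$ must be tracked; it is exactly the worst of these that fixes the remainder at $O(\alpha^2\log^2\alpha)$. Everything else is the lengthy but routine propagation of the golden-ratio identities through the capacity expression.
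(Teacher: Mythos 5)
Your proposal is correct and reaches the claimed expansion, but by a genuinely different computational route from the paper. The paper proceeds by implicit differentiation: Lemma~\ref{lemma:p_derivative} gives the closed-form derivative $p'_\alpha = K_2(\alpha) - K_1(\alpha)\log\alpha$ (valid for $\alpha>0$, divergent as $\alpha\to0$), Lemma~\ref{lemma:strong_coefficient} expands $K_3(\alpha):=x'/(1+p_\alpha)$, and the main argument Taylor-expands the \emph{renormalized} quantity $C^{\mathrm{BSC}}(\alpha)+H_2(\alpha)+K_3(\alpha)\alpha\log\alpha$ so that the $\log\alpha$ blow-ups coming from $p'_\alpha$ and from the compensating term cancel. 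You instead write $p_\alpha=p_0+\delta$, linearize the transcendental defining equation about $(q,p)=(0,p_0)$ to solve for the increment $\delta=\tfrac{q(\ln q-\ln p_0+\lambda+1)}{3\lambda+1}+O(\alpha^2\log^2\alpha)$ (I verified this; the key identity $1/p_0+2/(1-p_0)=\lambda^2+2\lambda=3\lambda+1$ is used in the denominator), then substitute into the capacity formula and expand once. Both approaches must confront the same non-analyticity ($p_\alpha$ has infinite slope at $\alpha=0$), but yours handles it at the source by solving for $\delta$ directly rather than working with the formally divergent derivative and then arranging cancellation; this sidesteps the bookkeeping around $K_1,K_2,K_3$ at the cost of needing the a priori bound $\delta=O(\alpha\log\alpha)$, which you correctly flag. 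Your golden-ratio identities — $\tfrac{\lambda}{3\lambda+1}=\tfrac{2-\lambda}{3-\lambda}$ and $\tfrac{2\lambda+1}{3\lambda+1}=\tfrac{1}{3-\lambda}$ — are exactly what collapses the $\alpha\log\alpha$ and $\alpha$ coefficients into the stated form, and your observation that the $\delta^2$, $q\delta$ and $q\ln q$ tails fix the error at $O(\alpha^2\log^2\alpha)$ is the right accounting. One small note on conventions: the paper uses $H_2(\alpha)=\alpha-\alpha\log\alpha+o(\alpha)$, so the logarithms in~\eqref{eq:asymptotic} are natural; with that convention your $\log e$ factors are all $1$ and the $\alpha$-coefficient $-1-\bigl[\tfrac{(2\lambda+1)(\lambda+1)}{3\lambda+1}-\lambda^2\bigr]+\tfrac{\ln p_0}{3-\lambda}=\tfrac{\ln p_0-p_0}{3-\lambda}$ matches the theorem.
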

The derivation of Theorem \ref{theorem:asymptotic} is more involved than standard Taylor series expansion about $\alpha = 0$, since the second-order term of \eqref{eq:asymptotic} is $O(\alpha\log\alpha)$. The proof of Theorem \ref{theorem:asymptotic} appears in Appendix \ref{app:shannon}.

The asymptotic behaviour of the capacity of the input-constrained BSC without feedback is captured by the following result.
\begin{theorem}\cite[Example 4.1]{Han_asymptotics_symmertic}\label{theorem_Han}
The non-feedback capacity of the $(1,\infty)$-RLL input-constrained BSC is:
\begin{equation}\label{eq:theorem_HAN}
C^{\mathrm{NF}}(\alpha) = \log \lambda + \frac{2\lambda+2}{4\lambda+3}\alpha\log\alpha + O(\alpha).
\end{equation}
\end{theorem}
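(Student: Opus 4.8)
Although the statement is quoted from \cite{Han_asymptotics_symmertic}, we sketch a self-contained derivation. For an input-constrained memoryless channel the non-feedback capacity equals the supremum of the mutual information rate over all stationary input processes supported on the constraint. Writing $Y_i=X_i\oplus Z_i$ with $\{Z_i\}$ i.i.d.\ $\mathrm{Bernoulli}(\alpha)$ independent of $X$, and using that $(X^n,Z^n)\mapsto(Y^n,Z^n)$ is a bijection (so $H(X^n)+H(Z^n)=H(Y^n)+H(Z^n\mid Y^n)$) together with $H(Y^n\mid X^n)=nH_2(\alpha)$, one has
\begin{equation}\label{eq:NF_char}
C^{\mathrm{NF}}(\alpha)=\sup_{\mu}\bigl[\mathcal{H}(Y)-H_2(\alpha)\bigr]=\sup_{\mu}\bigl[\mathcal{H}(X)-\mathcal{H}(Z\mid Y)\bigr],
\end{equation}
where $\mathcal{H}(\cdot)$ denotes entropy rate and $\mu$ ranges over stationary processes on $\{0,1\}^{\mathbb Z}$ avoiding consecutive ones. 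The plan is to expand the right-hand side of \eqref{eq:NF_char} up to order $\alpha\log\alpha$, for which an $O(\alpha)$ remainder (in particular, not the $\alpha$-coefficient) is all that is needed.

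\emph{Reduction to a fixed Markov input.} First I would restrict to first-order Markov inputs, parametrized by $q=\Pr(X_{i+1}=1\mid X_i=0)$ (with $\Pr(X_{i+1}=0\mid X_i=1)=1$ forced), with stationary law $(\pi_0,\pi_1)=(\tfrac1{1+q},\tfrac q{1+q})$ and $\mathcal{H}(X)=\tfrac1{1+q}H_2(q)$, which is maximized at $q^{\star}=2-\lambda=1/\lambda^{2}$, giving $C^{\mathrm{NF}}(0)=\log\lambda$. Since $\mathcal{H}(Z\mid Y)\le\mathcal{H}(Z)=H_2(\alpha)=O(\alpha\log(1/\alpha))$ while $q\mapsto\mathcal{H}(X)(q)$ is smooth with a nondegenerate maximum at $q^{\star}$, re-optimizing $q$ in \eqref{eq:NF_char} perturbs it by $O(\alpha\log(1/\alpha))$ and hence changes the objective only by $O(\alpha^{2}\log^{2}\alpha)=o(\alpha)$; so it suffices to evaluate \eqref{eq:NF_char} at the fixed chain $q=q^{\star}$. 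The only non-routine point here is that the optimizing input for a noisy constrained channel need not be finite-order Markov; one either invokes a known Markov-approximation bound, or runs the whole argument with a fixed large Markov order and checks the extra gain is $o(\alpha)$.

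\emph{Expanding the hidden-Markov entropy rate.} The heart of the proof is the small-$\alpha$ expansion of $\mathcal{H}(Z\mid Y)$ (equivalently of the hidden-Markov entropy rate $\mathcal{H}(Y)$) at the chain $q^{\star}$. I would use the sandwich
\begin{equation}\label{eq:sandwich}
H\bigl(Y_0\mid Y_{-k}^{-1},X_{-k}\bigr)\ \le\ \mathcal{H}(Y)\ \le\ H\bigl(Y_0\mid Y_{-k}^{-1}\bigr),
\end{equation}
whose two sides differ by a quantity decaying geometrically in $k$, uniformly for $\alpha$ near $0$ by mixing of the chain, so a fixed moderately large $k$ already captures every term up to $o(\alpha)$. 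For each of the finitely many admissible length-$k$ conditioning patterns I would expand the conditional law of $Y_0$ in powers of $\alpha$, retaining the dominant error events: with no flip in the window the output window is itself admissible and one reads off the Markov transition, while the $\Theta(\alpha)$ correction is carried by the single-flip configurations, whose posterior probability is linear in $\alpha$ and whose contribution to the conditional entropy produces the tell-tale $-\alpha\log\alpha$ (two or more flips contribute $O(\alpha^{2}\log^{2}\alpha)$). Intuitively a flip that turns an admissible window into one containing ``$11$'' is \emph{localized} by the decoder, whereas one keeping the window admissible is not; weighting the two cases by the stationary measure of $q^{\star}$ and simplifying with $(1-q^{\star})^{2}=q^{\star}$ yields $\mathcal{H}(Z\mid Y)=\tfrac{2\lambda+2}{4\lambda+3}\,\alpha\log(1/\alpha)+O(\alpha)$, and substituting into \eqref{eq:NF_char} gives \eqref{eq:theorem_HAN}.

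The main obstacle is this last step: (i) justifying, uniformly in $\alpha$, that the geometric gap in \eqref{eq:sandwich} is $o(\alpha)$ for fixed $k$ (an analyticity/mixing argument for the entropy rate near the degenerate noiseless point), and (ii) the exact bookkeeping over admissible length-$k$ patterns and their single-flip neighbours that pins down the coefficient $\tfrac{2\lambda+2}{4\lambda+3}$. Note that, unlike the feedback expansion in Theorem~\ref{theorem:asymptotic}, here the $O(\alpha)$ term — typically the most delicate part of such expansions — need not be computed, which substantially simplifies the combinatorics.
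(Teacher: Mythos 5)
The paper does not actually prove this statement; Theorem~\ref{theorem_Han} is quoted verbatim from Han and Marcus \cite{Han_asymptotics_symmertic}, Example~4.1, and is used as a black box in the proof of Theorem~\ref{theorem:increases}. So there is no in-paper proof to compare your sketch against; what you have written is an independent reconstruction, and it goes in essentially the same direction as Han--Marcus (asymptotic expansion of a hidden-Markov entropy rate near the noiseless point).

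As a self-contained argument, your sketch has the right scaffolding but leaves the two places where all of the real work sits unfinished, and you are candid about this. First, the reduction from all stationary admissible inputs to the single Parry chain $q^\star=2-\lambda$ is not established: your perturbation estimate (re-optimizing $q$ costs only $O(\alpha^2\log^2\alpha)$ near the nondegenerate maximum of $\tfrac{1}{1+q}H_2(q)$) is valid \emph{within} the class of first-order Markov inputs, but to turn this into a bound on $C^{\mathrm{NF}}(\alpha)$ one also needs an upper bound showing that higher-order Markov or general stationary inputs cannot gain more than $o(\alpha)$. This is the step you flag as ``non-routine''; it is in fact the crux of an upper-bound proof, not a technicality, and some concrete device (e.g., a conditional-entropy upper bound $H(Y_0\mid Y_{-k}^{-1})$ that can be optimized over all stationary inputs, or a finite-state bound in the spirit of Vontobel et al.) is required. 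Second, the coefficient $\tfrac{2\lambda+2}{4\lambda+3}$ is asserted rather than derived: the crucial single-flip bookkeeping over admissible length-$k$ windows, weighted by the Parry measure and then simplified using $(1-q^\star)^2=q^\star$, is exactly where the number comes from and is the part that most needs to be written out. Third, the uniform-in-$\alpha$ control of the gap between the two sides of your sandwich \eqref{eq:sandwich} deserves more care than ``geometric decay by mixing'': the underlying chain has a forbidden transition, so one cannot invoke off-the-shelf positivity-based analyticity; this is precisely the kind of boundary behaviour Han and Marcus developed machinery to handle. Your identity $C^{\mathrm{NF}}=\sup_\mu[\mathcal{H}(Y)-H_2(\alpha)]=\sup_\mu[\mathcal{H}(X)-\mathcal{H}(Z\mid Y)]$, the value $q^\star=1/\lambda^2$, and the observation that the $O(\alpha)$ remainder need not be pinned down (unlike in Theorem~\ref{theorem:asymptotic}) are all correct; the proposal is a reasonable roadmap, but not yet a proof.
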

It is now easy to prove Theorem \ref{theorem:increases}.
\begin{proof}[Proof of Theorem \ref{theorem:increases}]
The coefficients of the term $\alpha\log \alpha$ in \eqref{eq:asymptotic} and \eqref{eq:theorem_HAN} satisfy $\frac{2\lambda+2}{4\lambda+3}>\frac{2-\lambda}{3-\lambda}$. Therefore, there exists $\alpha^\ast>0$ such that
$C^{\mathrm{BSC}}(\alpha)- C^{\mathrm{NF}}(\alpha)>0$, for all $\alpha<\alpha^\ast$.
\end{proof}

\section{The Optimal Input Distribution}\label{sec:optinput}
In this section, we present the optimal input distribution for the input-constrained BIBO, based on which the capacity-achieving coding scheme of the next section is derived.
The optimization problem that needs to be solved when calculating the feedback capacity of our setting is given in the following theorem.
\begin{theorem}[\cite{Sabag_BEC}, Theorem $3$]\label{theorem:cap_as_DP}
The capacity of an $(1,\infty)$-RLL input-constrained memoryless channel with feedback can be written as:
\begin{equation}\label{eq_capacity_as_DP}
  C^{\mathrm{fb}} = \sup \liminf_{N\rightarrow \infty} \frac{1}{N} \sum_{t=1}^{N}I(X_t;Y_t|Y^{t-1}),
\end{equation}
where the supremum is taken with respect to $\{p_{X_{t}|X_{t-1},Y^{t-1}}:p_{X_{t}|X_{t-1},Y^{t-1}}(1|1,y^{t-1})=0\}_{t\geq 1}$.
\end{theorem}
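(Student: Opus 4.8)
\textbf{Proof proposal for Theorem~\ref{theorem:cap_as_DP}.}
The plan is to reduce the multi-letter feedback-capacity expression for an input-constrained memoryless channel to the single-letter-per-stage directed-information form in \eqref{eq_capacity_as_DP}, and to identify the relevant state that makes the optimization Markovian. I would start from the known characterization of feedback capacity for a finite-state channel, or directly from the directed-information formula $C^{\mathrm{fb}} = \lim_{N\to\infty}\frac1N \sup_{p(x^N\|y^{N-1})} I(X^N\to Y^N)$, where the supremum is over causally conditioned input distributions respecting the $(1,\infty)$-RLL constraint (i.e.\ $f_i(m,y^{i-1})=1 \Rightarrow f_{i+1}=0$, equivalently $p_{X_t|X_{t-1},Y^{t-1}}(1|1,y^{t-1})=0$). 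The first step is the chain-rule decomposition $I(X^N\to Y^N)=\sum_{t=1}^N I(X^t;Y_t|Y^{t-1})$, and then, using memorylessness of the channel, $p(y_t|x^t,y^{t-1})=p_{Y|X}(y_t|x_t)$, to collapse $I(X^t;Y_t|Y^{t-1})$ down to $I(X_t;Y_t|Y^{t-1})$. This last collapse is the step where memorylessness is essential: conditioned on $Y^{t-1}$, the pair $(X_{t-1},X^{t-2})$ influences $Y_t$ only through $X_t$, so $Y_t - X_t - (X^{t-1},Y^{t-1})$ and the extra past inputs drop out of the mutual information.

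The second part of the argument is to argue that it suffices to optimize over input distributions of the form $p_{X_t|X_{t-1},Y^{t-1}}$ rather than the general causally conditioned $p_{X_t|X^{t-1},Y^{t-1}}$. For the objective $\sum_t I(X_t;Y_t|Y^{t-1})$, only the joint law of $(X_t,Y^t)$ at each $t$ matters, and these are determined recursively by the one-step kernels $p_{X_t|X^{t-1},Y^{t-1}}$ together with the fixed channel. A standard concavity/data-processing argument (replace a general kernel by its conditional expectation given $(X_{t-1},Y^{t-1})$) shows the restricted class achieves the same supremum; the input constraint is also naturally expressed in terms of $X_{t-1}$ alone, so it is preserved under this restriction. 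This is exactly the content cited from \cite{Sabag_BEC}, so in the write-up I would invoke that reference for this reduction and for the interchange of $\sup$ and $\liminf$ (the $\liminf$ appearing because, without a Cesàro-type stationarity argument, one only gets the averaged limit inferior), rather than reproving it.

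The main obstacle is the passage from the ``block'' form $\lim_N \frac1N \sup I(X^N\to Y^N)$ to the ``per-stage'' form $\sup \liminf_N \frac1N\sum_t I(X_t;Y_t|Y^{t-1})$: one must justify pulling the supremum inside the limit and show the two optimization problems have the same value. The clean way is to appeal to the finite-state-channel machinery — the input-constrained channel with the ``last input'' as state is an indecomposable/connected finite-state channel, so the feedback capacity equals both the limit of the normalized block optima and the $\sup$-$\liminf$ per-stage value, with the maximizing input process having a Markov structure. Since Theorem~\ref{theorem:cap_as_DP} is quoted verbatim from \cite{Sabag_BEC}, my actual ``proof'' here is a two-paragraph recap: (i) cite the finite-state-channel feedback-capacity theorem to get the directed-information/per-stage formula, (ii) apply the chain rule and memorylessness to get $I(X_t;Y_t|Y^{t-1})$ from $I(X^t;Y_t|Y^{t-1})$, and (iii) note that the constraint $f_{i-1}(m,y^{i-2})=1 \Rightarrow f_i(m,y^{i-1})=0$ from Definition~\ref{def:code} translates exactly into $p_{X_t|X_{t-1},Y^{t-1}}(1|1,y^{t-1})=0$, so the feasible sets match. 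I would flag that the full technical details (achievability and converse matching this expression, and the $\liminf$ being attainable) are in \cite{Sabag_BEC}, and that for our purposes the value of the theorem is that it sets up the DP whose solution is carried out in Section~\ref{sec:DP_formulation}.
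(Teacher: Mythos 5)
The paper does not prove Theorem~\ref{theorem:cap_as_DP} at all --- it is imported verbatim as Theorem~3 of \cite{Sabag_BEC}, and your proposal likewise (correctly) defers the heavy lifting to that reference, so in the sense that matters you and the paper take the same approach. Your sketch of what the reference proves is sound: the directed-information chain rule $I(X^N\to Y^N)=\sum_t I(X^t;Y_t\mid Y^{t-1})$, the Markov chain $Y_t - X_t - (X^{t-1},Y^{t-1})$ from channel memorylessness to drop to $I(X_t;Y_t\mid Y^{t-1})$, the reduction to kernels of the form $p_{X_t\mid X_{t-1},Y^{t-1}}$ because both the objective and the $(1,\infty)$-RLL constraint factor through $(X_{t-1},Y^{t-1})$, and the translation of the code constraint in Definition~\ref{def:code} into $p_{X_t\mid X_{t-1},Y^{t-1}}(1\mid 1,y^{t-1})=0$. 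You also correctly flag the one genuinely delicate point --- moving from the block form $\lim_N \tfrac1N\sup I(X^N\to Y^N)$ to the per-stage $\sup\,\liminf_N \tfrac1N\sum_t I(X_t;Y_t\mid Y^{t-1})$ --- as the place where the finite-state-channel achievability/converse machinery of \cite{Sabag_BEC} is actually needed rather than rederived here. Nothing in your argument would fail; it is an appropriately scoped recap of a cited result.
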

The input at time $t$ depends on the previous channel input, $x_{t-1}$, and the output samples $y^{t-1}$. The description of such an input distribution is difficult since the conditioning contains a time-increasing domain, $\mathcal{Y}^{t-1}$. The essence of the DP formulation is to replace the conditioning on $y^{t-1}$ with $p_{X_{t-1}|Y^{t-1}}(0|y^{t-1})$, which is a sufficient statistic of the outputs tuple. Furthermore, the DP solution in Section \ref{sec:DP_formulation} reveals that the \textit{optimal input distribution} can be described with an even simpler notion called a $Q$-graph, which is suitable for scenarios where the DP state, $p_{X_{t-1}|Y^{t-1}}(0|y^{t-1})$, takes a finite number of values.

\begin{figure}[t]
\centering
        \psfrag{A}[][][1.2]{$Q=3$}
        \psfrag{B}[][][1.2]{$Q=2$}
        \psfrag{C}[][][1.2]{$Q=1$}
        \psfrag{D}[][][1.2]{$Q=4$}
        \psfrag{E}[][][1]{$Y=0$}
        \psfrag{G}[][][1]{$p^{\text{opt}}_\alpha$}
        \psfrag{G}[][][1]{}
        \psfrag{I}[][][1]{$1-q^{\text{opt}}_\alpha$}
        \psfrag{I}[][][1]{}
        \psfrag{H}[][][1]{$q^{\text{opt}}_\alpha$}
        \psfrag{H}[][][1]{}
        \psfrag{E}[][][1]{$Y=1$}
        \psfrag{F}[][][1]{$Y=0$}
        \psfrag{I}[][][1]{$Y=0$}
        \psfrag{H}[][][1]{$Y=1$}
      \psfrag{Z}[][][.8]{}
      \psfrag{X}[][][.8]{}
      \psfrag{Q}[][][.8]{}
        \centerline{\includegraphics[scale=.4]{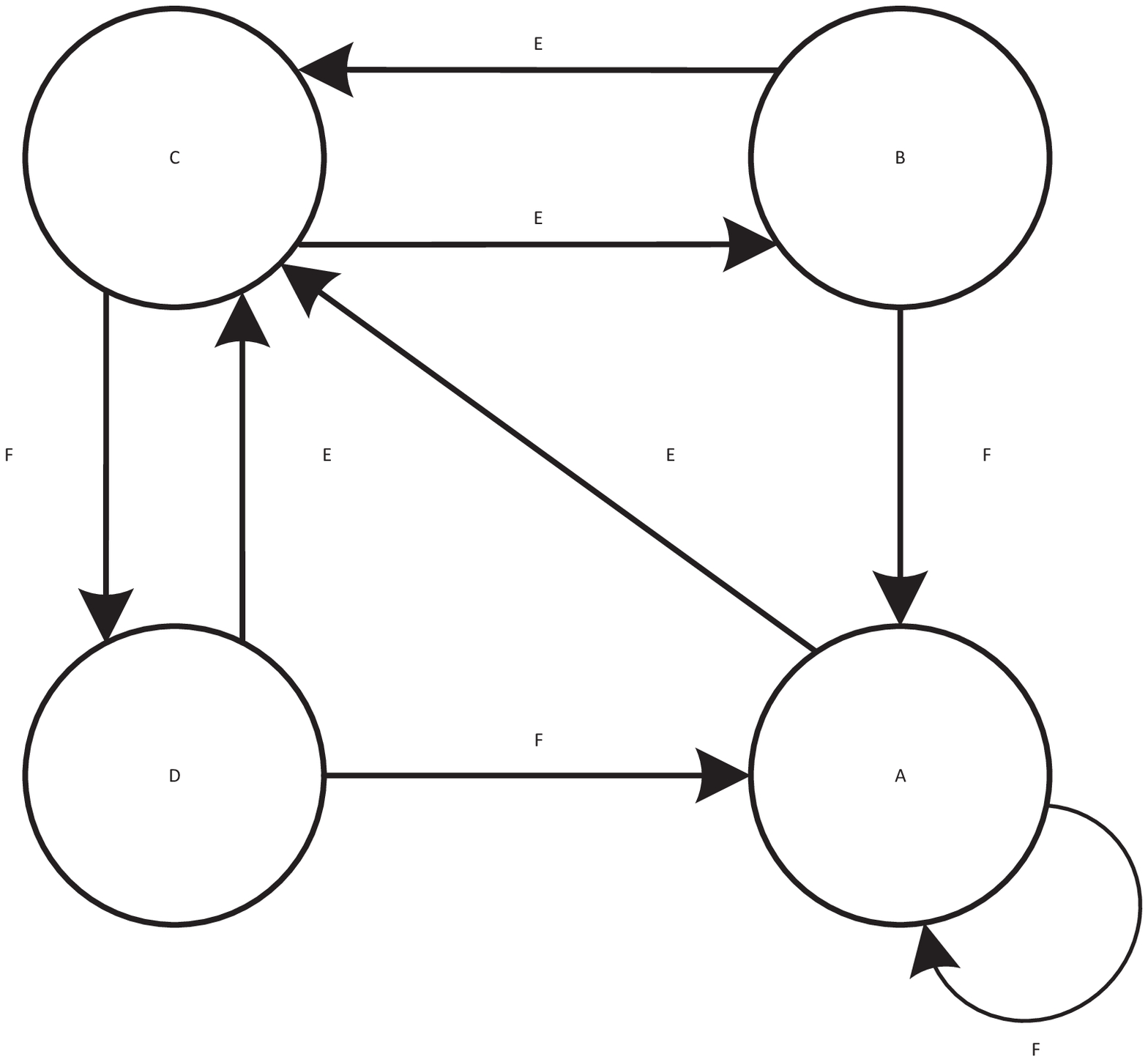}}
\caption{The $Q$-graph that characterizes the optimal input distribution.}
\label{fig:BC_optimal_input}
\end{figure}
\begin{definition}
For an output alphabet, $\mathcal{Y}$, a \textit{$Q$-graph} is a directed, connected and labeled graph. Additionally, each node should have $|\mathcal{Y}|$ outgoing edges, with distinct labels.
\label{def:Qgraph}
\end{definition}

The $Q$-graph depicted in Fig. \ref{fig:BC_optimal_input} will be used to describe the optimal input distribution. Let $\cQ = \{1,2,3,4\}$ denote the set of nodes of this $Q$-graph. We will use a function $g:\cQ \times \{0,1\} \to \cQ$ to record the transitions along the edges of the graph. Specifically, $g(1,0) = 4$, $g(1,1)=2$, $g(2,0) = 3$, $g(2,1)=1$, $g(3,0)=3$, $g(3,1)=1$, $g(4,0) = 3$, and $g(4,1)=1$. Given some initial node $q_0 \in \cQ$ and an output sequence $y^t \in \cY^t$ of arbitrary length, a unique node $q_t \in \cQ$ is determined by a walk on the $Q$-graph starting at $q_0$ and following the edges labeled by $y_1,y_2,\ldots,y_t$, in that order. We will write this as $q_t = \Phi(q_0,y^t)$, where $\Phi:\cQ \times \bigcup_{t \ge 1} \cY^t  \to \cQ$ is the mapping recursively described by $\Phi(q_0, y_1) = g(q_0,y_1)$, and $\Phi(q_0,y^t)=g(\Phi(q_0,y^{t-1}),y_t)$ for $t \ge 2$. The importance of the $Q$-graph for our scheme is that the encoder and decoder need only track the value $\Phi(q_0,y^{t-1})$, instead of the entire output sequence $y^{t-1}$.

For the description of the optimal input distribution, define
\begin{align}\label{eq:BC_capacity}
    z^{\alpha,\beta}_2 &= \argmax_{0\leq z \leq 1} \frac{H_2(\alpha\bar{z} + \bar{\beta} z) + (\alpha\bar{z} + \bar{\beta} z)H_2\left(\frac{\alpha\bar{\beta}}{\alpha\bar{z} + \bar{\beta} z}\right) - (\bar{z} + \bar{\beta}z)H_2(\alpha) - (z + \alpha\bar{z})H_2(\beta)}{1 + \alpha\bar{z} + \bar{\beta} z},
\end{align}
with the following subsequent quantities:
\begin{equation}
\label{eq:def_Zi}
\begin{split}
    z^{\alpha,\beta}_1&\triangleq \frac{\alpha\bar{z}^{\alpha,\beta}_2}{\alpha\bar{z}^{\alpha,\beta}_2 + \bar{\beta}{z}^{\alpha,\beta}_2}\\
    z^{\alpha,\beta}_3&\triangleq \frac{\bar{\alpha}\bar{z}^{\alpha,\beta}_2}{\bar{\alpha}\bar{z}^{\alpha,\beta}_2 + \beta z^{\alpha,\beta}_2}\\
    z^{\alpha,\beta}_4&\triangleq \frac{\bar{\alpha}\bar{\beta}z^{\alpha,\beta}_2}{\bar{\alpha}\bar{\beta}{z}^{\alpha,\beta}_2 + \alpha\beta \bar{z}^{\alpha,\beta}_2}.
\end{split}
\end{equation}
It can be shown that $z^{\alpha,\beta}_1\leq z^{\alpha,\beta}_2\leq z^{\alpha,\beta}_3\leq z^{\alpha,\beta}_4$ for all $\alpha+\beta\le1$. For instance, the relation $z_1\leq z_2$ (superscripts $(\alpha,\beta)$ are omitted) can be simplified to $(\bar{\beta}-\alpha)z^2_2 + 2\alpha z_2 -\alpha\ge0$. Now, the polynomial $(\bar{\beta}-\alpha)x^2 + 2\alpha x -\alpha$ has two roots, one is negative and the other is at $x=z_L$. Since the polynomial is convex, $(\bar{\beta}-\alpha)z^2_2 + 2\alpha z_2 -\alpha\ge0$ is equivalent to $z_2\ge z_L$. Using the same methodology, it can be shown that $z_1 \le z_2 \le z_3 \le z_4$ is equivalent to $z_L\le z_2\le z_U$, which is proved in Lemma \ref{lemma:maximization_domain}.

Define the conditional distributions $p^*_{X|X^-,Q}$ via the conditional probability matrices
\begin{equation}
\begin{split}
  p^\ast_{X|X^-, Q = 1} &=
   \left[
                   \begin{array}{cc}
                     0 & 1 \\
                     1 & 0 \\
                   \end{array}
                 \right] \\
  p^\ast_{X|X^-, Q = 2} &= p^\ast_{X|X^-, Q = 1} \\
  p^\ast_{X|X^-, Q = 3} &=  \left[
                   \begin{array}{cc}
                     1-\frac{z^{\alpha,\beta}_2}{z^{\alpha,\beta}_3} & \frac{z^{\alpha,\beta}_2}{z^{\alpha,\beta}_3} \\
                     1 & 0 \\
                   \end{array}
                 \right]  \\
  p^\ast_{X|X^-, Q = 4} &=  \left[
                   \begin{array}{cc}
                     1-\frac{z^{\alpha,\beta}_2}{z^{\alpha,\beta}_4} & \frac{z^{\alpha,\beta}_2}{z^{\alpha,\beta}_4} \\
                     1 & 0 \\
                   \end{array}
                 \right],
\end{split}
\label{eq:transfer_matrices}
\end{equation}
in which $X^-$ indexes the rows and $X$ indexes the columns. To be precise, the first (resp.\ second) row of each matrix is a conditional pmf of $X$ given $X^- = 0$ (resp.\ $X^-=1$).

The optimal input distribution and alternative capacity expression are given in the following theorem:
\begin{theorem}[Optimal input distribution]\label{theorem:optimal_inputs}
For any $q_0 \in \cQ$, the input distribution $p_{X_i \mid X_{i-1},Y^{i-1}}(x \mid x^-,y^{i-1})= p^*_{X \mid X^-,Q}\bigl(x \mid x^-,\Phi(q_0,y^{i-1})\bigr)$, defined via \eqref{eq:transfer_matrices} and Fig. \ref{fig:BC_optimal_input}, is capacity-achieving. Moreover, the random process $\{(X_i,Q_i)\}_{i\geq1}$ induced by $p^\ast_{X|X^-,Q}$ is an irreducible and aperiodic Markov chain on $\{0,1\} \times \cQ$. The stationary distribution of this Markov chain is given by $\pi_{X^-,Q} = \pi_Q \pi_{X^-|Q}$, where $\pi_Q$ is the pmf on $\cQ$ defined by $[\pi_Q(1),\pi_Q(2),\pi_Q(3),\pi_Q(4)] = \left[ \frac{p}{1 + p}, \frac{pq}{1 + p}, \frac{1 - p}{1 + p}, \frac{p(1-q)}{1+p}\right]$, with $p = \alpha (1-z_2^{\alpha,\beta}) + (1-\beta) z_2^{\alpha,\beta}$ and $q = \frac{\alpha(1-\beta)}{p}$, and $\pi_{X^-|Q}(0|i) = 1-\pi_{X^-|Q}(1|i) = z^{\alpha,\beta}_i$ for $i=1,\dots,4$. The feedback capacity $C^{\mathrm{fb}}(\alpha,\beta)$ can be expressed as $I(X;Y|Q)$, where the joint distribution is $\pi_{Q,X,Y}(q,x,y) = \sum_{x^-}p_{Y|X}(y|x)p_{X|X^-,Q}^\ast(x|x^-,q)\pi_{X^-,Q}(x^-,q)$.
\end{theorem}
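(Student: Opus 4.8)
Write $z_j$ for $z_j^{\alpha,\beta}$, $p=\alpha\bar z_2+\bar\beta z_2$ and $q=\alpha\bar\beta/p$, so that $z_1=\alpha\bar z_2/p$, $\bar z_1=\bar\beta z_2/p$, and (using $z_2+\bar z_2=1$) the two identities $1-p=\bar\alpha\bar z_2+\beta z_2$ and $p-\alpha\bar\beta=\bar\alpha\bar\beta z_2+\alpha\beta\bar z_2$ hold; these, together with the formulas \eqref{eq:def_Zi} for $z_1,z_3,z_4$, are essentially the only algebraic facts the proof uses. The cornerstone, which is exactly what the DP analysis of Section~\ref{sec:DP_formulation} supplies, is the following \emph{belief-collapse} property: under the stated policy the DP state $s_t\triangleq p_{X_{t-1}\mid Y^{t-1}}(0\mid Y^{t-1})$ satisfies $s_t=z_{\Phi(q_0,Y^{t-1})}$ for every $t$ (initializing the belief accordingly; since $C^{\mathrm{fb}}$ is a $\liminf$ of Cesàro averages it does not see the initialization). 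I would check this by a one-step Bayes update: for a fixed node $q$, the policy \eqref{eq:transfer_matrices} together with $p_{X_{t-1}\mid Y^{t-1}}(0)=z_q$ gives $p_{X_t\mid Y^{t-1}}(1)=z_1$ when $q=1$ and $p_{X_t\mid Y^{t-1}}(1)=z_2$ when $q\in\{2,3,4\}$ (for $q\in\{3,4\}$ because $\tfrac{z_2}{z_q}\cdot z_q=z_2$); passing this through the channel and conditioning on $Y_t=y$ returns exactly $z_1,z_2,z_3$ or $z_4$ according to the edge $g(q,y)$ of Fig.~\ref{fig:BC_optimal_input} — e.g.\ from $q=1$, $Y_t=1$ one gets $\tfrac{\alpha\bar z_1}{\alpha\bar z_1+\bar\beta z_1}=\tfrac{z_2}{z_2+\bar z_2}=z_2=z_{g(1,1)}$. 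A first consequence, used repeatedly below, is that the conditional law $p_{X_t,Y_t\mid Y^{t-1}=y^{t-1}}$ depends on $y^{t-1}$ only through the node $q_{t-1}=\Phi(q_0,y^{t-1})$.

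Next I would establish the Markov structure and stationary law. Since the stated policy makes the conditional law of $X_{i+1}$ depend on the past only through $(X_i,Q_i)$, the channel is memoryless, and $Q_{i+1}=g(Q_i,Y_{i+1})$, the process $\{(X_i,Q_i)\}$ is a time-homogeneous Markov chain on $\{0,1\}\times\cQ$, whose transition probabilities are read off from $p^\ast_{X\mid X^-,Q}$, $p_{Y\mid X}$ and $g$. From the first paragraph, $\Pr[Y_i=1\mid Q_{i-1}=q]$ equals $q=\alpha\bar\beta/p$ for $q=1$ and $p$ for $q\in\{2,3,4\}$, so the $Q$-marginal is itself a Markov chain with rows $[0,\,q,\,0,\,1-q]$ (from node $1$) and $[p,\,0,\,1-p,\,0]$ (from nodes $2,3,4$); solving $\pi_Q=\pi_Q P$ yields $\pi_Q(1)=p(1-\pi_Q(1))$, hence $\pi_Q(1)=\frac{p}{1+p}$, and then $\pi_Q(2)=q\pi_Q(1)$, $\pi_Q(4)=(1-q)\pi_Q(1)$, $\pi_Q(3)=\frac{1-p}{1+p}$, as claimed. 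To finish this part I would check that $\pi_Q(q)\,\pi_{X^-\mid Q}(x^-\mid q)$ with $\pi_{X^-\mid Q}(0\mid j)=z_j$ satisfies the balance equations of the joint chain; writing these out for each target state and grouping the sources according to $g$, this collapses — after substituting \eqref{eq:def_Zi} — precisely to the two identities $1-p=\bar\alpha\bar z_2+\beta z_2$ and $p-\alpha\bar\beta=\bar\alpha\bar\beta z_2+\alpha\beta\bar z_2$. Irreducibility and aperiodicity are immediate from the graph: the second row of every matrix in \eqref{eq:transfer_matrices} is $(1,0)$ so from $(1,q)$ one always moves to some $(0,q')$ (the no-consecutive-ones constraint), from $(0,q)$ both input symbols have positive probability, node $3$ has a self-loop of probability $1-p>0$, and all four nodes communicate — in the regime $0<\alpha$, $0<\beta$, $\alpha+\beta\le1$; the boundary cases $\alpha=0$ and $\beta=0$ are the content of the Z- and S-channel corollaries.

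For the capacity identity I would use Theorem~\ref{theorem:cap_as_DP}: it suffices to evaluate $\liminf_{N}\frac1N\sum_{t=1}^{N}I(X_t;Y_t\mid Y^{t-1})$ under the stated policy. Because $p_{X_t,Y_t\mid Y^{t-1}=y^{t-1}}$ depends on $y^{t-1}$ only through $q_{t-1}$ and in fact equals the single-letter law $\pi_{X,Y\mid Q}(\cdot,\cdot\mid q_{t-1})$ obtained from \eqref{eq:transfer_matrices}, we get $I(X_t;Y_t\mid Y^{t-1})=\sum_q p_{Q_{t-1}}(q)\,I(X;Y\mid Q=q)$; since the $Q$-chain is irreducible and aperiodic, $p_{Q_{t-1}}\to\pi_Q$, so the Cesàro averages converge (the $\liminf$ is a true limit) to $\sum_q\pi_Q(q)I(X;Y\mid Q=q)=I(X;Y\mid Q)$ for the stated joint distribution. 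A direct computation then gives $I(X;Y\mid Q=1)=H_2(q)-\bar z_1H_2(\alpha)-z_1H_2(\beta)$ and $I(X;Y\mid Q=j)=H_2(p)-\bar z_2H_2(\alpha)-z_2H_2(\beta)$ for $j=2,3,4$, and since $p\bar z_1=\bar\beta z_2$ and $pz_1=\alpha\bar z_2$,
\[
I(X;Y\mid Q)=\frac{p\,I(X;Y\mid Q=1)+I(X;Y\mid Q=2)}{1+p}=R_{\alpha,\beta}(z_2),
\]
which by Theorem~\ref{theorem:BIBO} together with Lemma~\ref{lemma:maximization_domain} equals $C^{\mathrm{fb}}(\alpha,\beta)$; the matching upper bound (that no policy can exceed $C^{\mathrm{fb}}$) is Theorem~\ref{theorem:BIBO} itself.

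The main obstacle I anticipate is the very first step: establishing the belief-collapse $s_t=z_{Q_{t-1}}$ cleanly and non-circularly. Mechanically it is only a finite Bayes-update computation, but conceptually it is the statement that the $Q$-graph node is a sufficient statistic for the optimal policy, and it is this fact — which must be extracted from the verification of the Bellman equation in Section~\ref{sec:DP_formulation} — on which the Markov reduction, the stationary-distribution computation, and the single-letter formula $I(X;Y\mid Q)$ all rest; additional care is needed in handling the initial belief and in arguing that the long-run averages are insensitive to it.
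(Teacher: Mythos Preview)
Your proposal is correct and follows essentially the same route as the paper's proof: both establish that under the stated policy the DP belief state collapses to the finite set $\{z_1,z_2,z_3,z_4\}$ (what you call ``belief-collapse'' and the paper phrases as closure of $\mathcal Z_*$ under $\text{arg}j\circ\delta^*$), both verify the first-order Markov property of $\{(X_i,Q_i)\}$ and its ergodicity, and both use the ergodic theorem to pass from the time-averaged reward to $I(X;Y\mid Q)$. Your version is more explicit---you actually solve the balance equations for $\pi_Q$ and compute the identity $I(X;Y\mid Q)=R_{\alpha,\beta}(z_2)$---and you cite Theorem~\ref{theorem:BIBO} rather than Theorem~\ref{theorem:sol_bellman} for the capacity value, but since the former is derived from the latter this is only a cosmetic difference; your worry that the belief-collapse must be ``extracted from the Bellman verification'' is overstated, as it is (and the paper treats it as) a self-contained finite Bayes computation independent of optimality.
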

The scheme uses the joint probability distribution $p_{Y|X}p^\ast_{X|X^-,Q}\pi_{X^-,Q}$ induced by the optimal input distribution $p^\ast_{X|X^-,Q}$. Here, $X$ and $X^-$ should be viewed as the channel inputs during the current and previous time instances, respectively, and $Q$ is the value of the node on the $Q$-graph prior to the transmission of $X$. In the analysis of the coding scheme, we will use the Markov property of $\{(X_i,Q_i)\}_{i\ge1}$ to show that $I(X;Y|Q)$ is achievable. The proof of Theorem \ref{theorem:optimal_inputs} is presented at the end of Section \ref{sec:DP_formulation}.

\section{The Coding Scheme}\label{sec:scheme}

The coding scheme we describe here consists of two phases: Phase~I is based on a posterior matching scheme (PMS), and Phase~II is a clean-up phase based on block codes.

The main element of any PMS is the posterior distribution of the message given the channel outputs. The posterior distribution is represented by the lengths of sub-intervals that form a partition of the unit interval. Each sub-interval is associated with a particular message, and henceforth, it will be referred to as a ``message interval''. The initial lengths are equal for all message intervals, since the decoder is assumed to have no prior information about the messages. The lengths of the message intervals are updated throughout the transmission based on the outputs that are made available to the decoder (and to the encoder from the feedback). The encoder's job is to refine the decoder's knowledge about the correct message by simulating samples from a desired input distribution. When this is done properly, as time progresses, the length of the true message interval will increase towards $1$, and the decoder can then successfully declare its estimate of the correct message.

The above description of PMS is generic and applies to any setting of channel coding with feedback. What is specific to our PMS in the input-constrained setting is the input distribution that the encoder attempts to simulate during its operation. Most of the adaptations needed for our PMS that are not present in the baseline PMS for memoryless channels in \cite{shayevitz_posterior_mathcing} and \cite{Li_elgamal_matching} are a natural consequence of the input constraints and the structure of the input distribution in Theorem~\ref{theorem:optimal_inputs}. However, these adaptations complicate the analysis of the scheme as the evolution in time of the involved random variables results in a random process that is difficult to analyze.

The analysis becomes easier upon introducing a certain message-interval splitting operation (described in Section~\ref{sec:recursion}) that induces a Markov chain structure on the time-evolution of the random variables in the scheme. However, the splitting operation prevents the length of the correct message interval from increasing to $1$, but we will show that, with high probability, the length of this interval will eventually go above some positive constant. The PMS output at the end of Phase~I will be a list of messages whose interval lengths are above this constant.

In Phase~II, a fixed-length block coding scheme, asymptotically of zero rate, is used to determine which message in the list produced at the end of Phase~I is the correct message.

The remainder of this section is organized as follows. In Section~\ref{sec:phase1}, the key elements of the PMS of Phase~I are described. This is followed by a description of Phase~II of our coding scheme in Section~\ref{sec:phase2}. The overall coding scheme that combines the two phases is  shown to be capacity-achieving in Section~\ref{sec:combine}. At the heart of the PMS of Phase~I is a recursive construction of message intervals, which is described in detail in Section~\ref{sec:recursion}. This technical description has been left to the end so as not to distract the reader from the main ideas of the coding scheme.

\subsection{Phase~I: PMS} \label{sec:phase1}
The PMS is based on the joint probability distribution $\pi_{Y,X,X^-,Q}$ on $\cY \times \cX \times \cX \times \cQ$ defined by $\pi_{Y,X,X^-,Q} := \pi_{X^-,Q} p^*_{X|X^-,Q}p_{Y|X}$, where $\pi_{X^-,Q}$ and $p^*_{X|X^-,Q}$ constitute the optimal input distribution described in Section~\ref{sec:optinput}, and $p_{Y|X}$ is the channel law. In what is to follow, we routinely use notation such as $\pi_Q$, $\pi_{X^-|Q}$, $\pi_{X,X^-|Q}$ etc. to denote certain marginal and conditional probability distributions specified by $\pi_{X,X^-,Q}$. Thus, for example, $\pi_Q$ denotes the marginal distribution on $Q$ of $\pi_{X,X^-,Q}$. The joint distribution $\pi_{Y,X,X^-,Q}$ is known to the encoder and the decoder.

It is further assumed that the encoder and the decoder share some common randomness:
\begin{itemize}
\item a random initial $Q$-state\footnote{For the purposes of this description, we use the term ``$Q$-state'' to denote a node on the $Q$-graph.} $Q_0$, distributed according to $\pi_Q$;
\item a sequence $(U_i)_{i = 1}^n$ of i.i.d. random variables, each $U_i$ being $\text{Unif}[0,1]$-distributed. 
\end{itemize}
At the $i$th time instant, just prior to the transmission of the $i$th symbol by the encoder, both the decoder and the encoder have the output sequence $y^{i-1}$ available to them. From this, and having shared knowledge of a realization $q_0$ of the initial $Q$-state $Q_0$, each of them can compute $q_{i-1} = \Phi(q_0,y^{i-1})$.

The assumption of shared randomness simplifies much of our analysis. It should be noted that by standard averaging arguments, the shared knowledge can be ``de-randomized'', in the sense that there exists a deterministic instantiation of $Q_0$ and $(U_i)$ for which our probability of error analysis will remain valid.

Aside from the shared randomness, the encoder alone has access to an i.i.d.\ sequence $(V_i)_{i=1}^n$, with $V_i \sim \text{Unif}[0,1]$ for all $i$. We next describe the main elements of our PMS.


\subsubsection{Messages and message intervals}
At the outset, there is a set of messages $\cM = \{1,2,\ldots,2^{nR}\}$, where $n$ is a sufficiently large positive integer and $R < I(X;Y \mid Q)$.\footnote{For ease of description, $2^{nR}$ is assumed to be an integer; we may otherwise take the number of messages to be $\lceil 2^{nR} \rceil$.}  A message $M \in \cM$ is selected uniformly at random, and transmitted using the PMS scheme. Our aim is to show that the probability of error $P_e^{(n)}$ at the decoder goes to $0$ exponentially in $n$.

Message intervals are central to the operation of the PMS scheme. At each time instant $i \ge 1$, again just prior to the transmission of the $i$th symbol by the encoder, the decoder and the encoder can compute a common set, $\cJ_i$, of message intervals, which form a partition of $[0,1)$ into disjoint sub-intervals of varying lengths. The construction is recursive, with $\cJ_1$ being computed from the initial knowledge of $q_0$, and for $i \ge 1$, $\cJ_{i+1}$ being computed from $\cJ_i$ based on the additional knowledge of $y_i$ and $u_i$. Here, $u_i$ denotes a realization of the shared randomness $U_i$. We will describe the recursive construction in Section~\ref{sec:recursion}, for now just noting that each message interval in $\cJ_{i+1}$ is obtained from a unique ``parent'' in $\cJ_i$.

Each message interval in $\cJ_i$ is indexed by an integer between $1$ and $|\cJ_i|$, with $\i(\sfJ)$ denoting the index of the message interval $\sfJ$. Additionally, each message interval is associated with a message from $\cM$, the association being specified by a surjective mapping $\mu_i: \cJ_i \to \cM$. Thus, for each $\sfJ \in \cJ_i$, $\mu_i(\sfJ)$ is the message associated with $\sfJ$. One of the message intervals $\sfJ \in \cJ_i$ with $\mu_i(\sfJ) = m$, where $m \in \cM$ is the actual message to be transmitted, is designated as the \emph{true message interval}, and is denoted by $J_i$. The identity of $J_i$ is \emph{a priori} known only to the encoder. This is the message interval that determines the symbol to be transmitted at time $i$.

Each $\sfJ \in \cJ_i$ has a ``history bit'', denoted by $x_i^-(\sfJ)$. This is the bit that the encoder would have transmitted at time $i-1$ if it were the case that the parent of $\sfJ$ was the true message interval $J_{i-1}$.

A message interval $\sfJ \in \cJ_i$ can be uniquely identified by its left end-point $t_i(\sfJ)$ and its length $s_i(\sfJ)$, so that $\sfJ = [t_i(\sfJ), t_i(\sfJ) + s_i(\sfJ))$. The length of $\sfJ$ equals the posterior probability that $\sfJ$ is the true message interval, given $y^{i-1}$ and $u^{i-1}$, i.e.,
\begin{equation}
s_i(\sfJ) = \Pr[J_i = \sfJ \mid Y^{i-1} = y^{i-1}, U^{i-1} = u^{i-1}].
\label{eq:si}
\end{equation}
In section \ref{sec:recursion}, it will be shown that the lengths $s_i(\sfJ)$, $\sfJ \in \cJ_i$, can be computed recursively as a simple function of the lengths of the parent intervals in $\cJ_{i-1}$.

The left end-points are then computed as
\begin{equation}
t_i(\sfJ) = \pi_{X^- | Q}(0 | q_{i-1}) {\mathbbm 1}_{\{x_i^-(\sfJ) = 1\}} + \sum_{\substack{\sfJ'  : \ \i(\sfJ') < \i(\sfJ), \\ \ \ \ \ \ \ x_i^-(\sfJ') = x_i^-(\sfJ)}} s_i(\sfJ').
\label{eq:ti}
\end{equation}
We will show later (see Lemma~\ref{lem:3}) that
$$
\sum_{\sfJ \in \cJ_i: x_i^-(\sfJ) = 0} s_i(\sfJ) = \pi_{X^- | Q}(0 | q_{i-1}),
$$
so that the positioning of the left end-points in \eqref{eq:ti} implies that the message intervals with history bit $0$ (respectively, $1$) form a partition of $[0,\pi_{X^- | Q}(0 | q_{i-1}))$ (respectively, $[\pi_{X^- | Q}(0 | q_{i-1}),1)$). The positioning of the message intervals is illustrated in Fig. \ref{fig:intervals}.
\begin{figure}[t]
\centering
        \psfrag{Q}[][][1]{$\i(\sfJ)=1$}
        \psfrag{E}[][][1]{$\!\!\!\!\!\!\!\!\!\! x_i^-(\cdot)=0$}
        \psfrag{F}[][][1]{\ \ \ \ \ \ $x_i^-(\cdot)=1$}
        \psfrag{I}[][][1]{}
        \psfrag{H}[][][1]{$\pi_{X^-|Q}(0|q_{i-1})$}
        \psfrag{S}[][][1]{$s_i(\sfJ)$}
        \psfrag{O}[][][1]{$1$}
        \psfrag{Z}[][][1]{$t_i(\sfJ)=0$}
        \centerline{\includegraphics[scale=1]{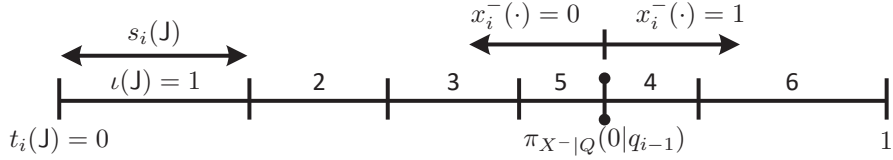}}
\caption{Illustration of the message intervals. Note that the lengths of all messages intervals with $x^-_i=0$ sum up to $\pi_{X^-|Q}(0|q)$.}
\label{fig:intervals}
\end{figure}

In summary, a message interval $\sfJ \in \cJ_i$ stores five pieces of data: $\i(\sfJ), \mu_i(\sfJ)$, $s_i(\sfJ)$, $t_i(\sfJ)$ and $x_i^-(\sfJ)$. As we will see in Section~\ref{sec:recursion}, these are computable at both the encoder and the decoder from their common knowledge at time $i$.

\medskip

\subsubsection{Encoder operation at time $i$}
To describe the symbol transmitted by the encoder at time $i$, we need to introduce the labeling functions $\cL_q: [0,1) \to \{0,1\}$, defined for each $q \in \cQ$. The labeling $\cL_q$ assigns the label `$1$' to the interval $[0,\pi_{X,X^-|Q}(1,0|q))$, and `$0$' to the interval $[\pi_{X,X^-|Q}(1,0|q),1)$. In other words,
\begin{equation}
\cL_q(x) =
\begin{cases}
1 & \text{ if } 0 \le x < \pi_{X,X^-|Q}(1,0|q) \\
0 & \text{ if } \pi_{X,X^-|Q}(1,0|q) \le x < 1.
\end{cases}
\label{Lq}
\end{equation}
The labeling $\cL_q$ is depicted in Fig.~\ref{fig:scheme}(c).

At time $i$, the encoder knows the true message interval $J_i$ in $\cJ_i$. Let $t_i$ and $s_i$ denote its left end-point and length, respectively, and let $x_i^-$ be its history bit. If $x_i^- = 1$, then $J_i$ is contained in $[\pi_{X^- | Q}(0 | q_{i-1}),1)$, which is a subset of $\cL_{q_{i-1}}^{-1}(0) = [\pi_{X,X^-|Q}(1,0|q_{i-1}),1)$. In this case, the encoder transmits $x_i = \cL_{q_{i-1}}(J_i) = 0$, in keeping with the $(1,\infty)$-RLL constraint.

On the other hand, if $x_i^- = 0$, the encoder transmits the bit $x_i = \cL_{q_{i-1}}(w_i)$ with
$$
w_i = t_i + u_i(q_{i-1}) + s_i \cdot v_i \!\! \mod \pi_{X^-|Q}(0|q_{i-1}),
$$
where $u_i(q_{i-1}) := u_i\cdot\pi_{X^-|Q}(0|q_{i-1})$, and $v_i$ denotes a realization of the encoder's private randomness $V_i \sim \text{Unif}[0,1]$. In other words, the encoder picks $w_i$ uniformly at random from the interval $J_i + u_i(q_{i-1}) \mod \pi_{X^-|Q}(0|q_{i-1})$, which is obtained by cyclically shifting the message interval $J_i$ by the amount $u_i(q_{i-1})$, within $[0,\pi_{X^-|Q}(0|q_{i-1}))$.

\begin{remark}
The rationale behind the cyclic shifting is the following: the random variable $W_i = t_i + U_i(q_{i-1}) + s_i \cdot V_i \!\! \mod \pi_{X^-|Q}(0|q_{i-1})$ is uniformly distributed over $[0,\pi_{X^-|Q}(0|q_{i-1}))$. Hence, $X_i = \cL_{q_{i-1}}(W_i)$ is equal to $1$ with probability $\frac{\pi_{X,X^-|Q}(1,0|q_{i-1})}{\pi_{X^-|Q}(0|q_{i-1})} = p^*_{X|X^-,Q}(1|0,q_{i-1})$, and is equal to $0$ with probability $p^*_{X|X^-,Q}(0|0,q_{i-1})$. Thus, the cyclic shifting ensures that the conditional distribution of $X_i$ (given that the previously transmitted bit was $x_i^-$ and the $Q$-state just prior to the transmission of $X_i$ is $q_{i-1}$) matches the optimal input distribution $p^*_{X|X^-,Q}(\cdot | x_i^-,q_{i-1})$.
\end{remark}
\medskip

In the analysis of the PMS schemes of \cite{shayevitz_posterior_mathcing} and \cite{Li_elgamal_matching}, most of the effort goes in showing that the length of the true message interval gets arbitrarily close to $1$, with high probability, as the number of transmissions $n$ goes to $\infty$. In our case, however, the recursive construction of $\cJ_{i+1}$ from $\cJ_i$ involves a key message-interval splitting operation, which prevents the lengths of message intervals from getting too large. Nonetheless, we can show that the length of the true message interval $s_i(J_i)$ either exceeds
\begin{align}\label{eq:Smin}
S_{\min} := \min_{x,q: \ \pi_{X,X^-|Q}(x,0|q) > 0} \pi_{X,X^-|Q}(x,0|q)
\end{align}
at some time $i$, or as the time index $i$ gets close to $n$, $s_i(J_i)$ exceeds a certain threshold $\xi > 0$ with high probability. The following theorem, proved in Section~\ref{sec:proof_scheme}, gives a rigorous statement of this fact.

\begin{theorem}\label{theorem:part_1}
Given $R<I(X;Y|Q)$, there exists $\xi > 0$ (which may depend on $R$) for which the following holds: for any sequence of non-negative integers $(\zeta_n)_{n \ge 1}$ growing as $o(n)$,\footnote{This means that $\zeta_n/n \to 0$ as $n \to \infty$.}
a PMS initiated with $2^{nR}$ messages has
$$\Pr\bigl[s_{n-\zeta_n}(J_{n-\zeta_n})<\xi \mid s_i(J_i) \le S_{\min}, \ i = 1,2,\ldots,n-\zeta_n-1 \bigr] \longrightarrow 0$$
as $n \to \infty$.
\end{theorem}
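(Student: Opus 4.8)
The plan is to analyze the evolution of $\log s_i(J_i)$ as a function of a suitable Markov chain and use a standard martingale/typicality argument to show the conditional probability in the statement vanishes. First I would make precise the recursive update of $s_i(J_i)$ promised in Section~\ref{sec:recursion}: conditioned on the event $\{s_i(J_i) \le S_{\min}\}$, the true message interval has not yet been split (splitting only occurs once a length exceeds a threshold), so the length updates by the usual posterior-matching ratio $s_{i+1}(J_{i+1}) = s_i(J_i)\cdot \frac{\pi_{X,Y|Q}(x_i,y_i\mid q_{i-1})}{\pi_{Y|Q}(y_i\mid q_{i-1})\,\pi_{X\mid X^-,Q}(x_i\mid x_i^-,q_{i-1})}$ (or an analogous expression forced by the constraint when $x_i^-=1$). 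Taking logarithms, $\log s_{i+1}(J_{i+1}) - \log s_i(J_i)$ is, on the conditioning event, a function of $(X_i,Y_i,X_i^-,Q_{i-1})$ whose conditional expectation given the past telescopes, over a block of length $k$, to approximately $-k\,I(X;Y\mid Q)$ by the ergodic theorem for the irreducible aperiodic chain $\{(X_i,Q_i)\}$ from Theorem~\ref{theorem:optimal_inputs}. Hence $\log s_i(J_i) \approx -i\,I(X;Y\mid Q) + (\text{fluctuations})$, and since we start at $\log s_1(J_1) \approx -nR$ with $R < I(X;Y\mid Q)$, the drift is \emph{positive}: the length wants to grow.

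The core of the argument is then a concentration/large-deviations estimate. I would condition on the event $E_{n-\zeta_n-1} := \{s_i(J_i) \le S_{\min},\ i=1,\dots,n-\zeta_n-1\}$ and show that on this event the running sum $\log s_{n-\zeta_n}(J_{n-\zeta_n}) - \log s_1(J_1)$ concentrates around $(n-\zeta_n-1)\,I(X;Y\mid Q)(1+o(1))$. Write $n' := n-\zeta_n$; since $\zeta_n = o(n)$, we have $n' = n(1-o(1))$, so the accumulated drift is at least $n\,(I(X;Y\mid Q) - R - \epsilon)(1+o(1)) > 0$ for small $\epsilon$. Combined with $\log s_1(J_1) = -nR$, this forces $\log s_{n'}(J_{n'}) \ge -\log\xi^{-1}$ — i.e. $s_{n'}(J_{n'}) \ge \xi$ — with probability $1-o(1)$, for $\xi$ chosen small enough (depending on the gap $I(X;Y\mid Q)-R$). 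The fluctuations are controlled because the per-step log-increment is bounded (the matrices in \eqref{eq:transfer_matrices} have strictly positive relevant entries, and $\alpha+\beta\le 1$ with the channel BIBO keeps all $\pi_{X,Y|Q}$ entries bounded away from $0$ and $1$ on the support), so an Azuma–Hoeffding bound applied to the martingale part, or a large-deviations bound for the additive functional of the finite Markov chain, suffices; one must also handle the fact that we condition on $E_{n'-1}$, which I would do by noting the conditioning event is measurable with respect to the past and only \emph{restricts} the sample space, so the drift lower bound still applies path-by-path on that event.

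The main obstacle I anticipate is the interaction between the message-interval splitting operation and the conditioning event. The clean recursion $s_{i+1} = s_i \cdot (\text{ratio})$ only holds when no split of the true interval occurs, and I must argue that the event $\{s_i(J_i)\le S_{\min}\}$ for all $i<n'$ is precisely what guarantees this — i.e., that the splitting threshold is calibrated so that splitting of $J_i$ is triggered only when $s_i(J_i) > S_{\min}$ (or that any split before then does not decrease $s_i(J_i)$ in a way that breaks the drift bound). This requires carefully invoking the recursion of Section~\ref{sec:recursion} and the definition of $S_{\min}$ in \eqref{eq:Smin}. A secondary technical point is that $X_i^-$ is not a function of $Q_{i-1}$ alone, so the log-increment depends on the full state $(X_i^-, Q_{i-1})$; this is handled by working with the chain $\{(X_{i-1}, Q_{i-1})\}$ rather than $\{Q_{i-1}\}$, and the identification $C^{\mathrm{fb}} = I(X;Y\mid Q)$ together with the stationary distribution $\pi_{X^-,Q}$ from Theorem~\ref{theorem:optimal_inputs} is exactly what makes the average drift equal $I(X;Y\mid Q)$ rather than something smaller.
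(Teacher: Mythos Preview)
Your proposal contains a genuine gap in the understanding of the message-interval splitting mechanism. You claim that ``conditioned on the event $\{s_i(J_i)\le S_{\min}\}$, the true message interval has not yet been split (splitting only occurs once a length exceeds a threshold)'', but this is not how the splitting operation works. A split of $J_i$ is triggered whenever the cyclically-shifted interval $J_i^{u_i}$ straddles a boundary point of the labeling $\cL_{q_{i-1}}$, and this can happen at \emph{any} length $s_i(J_i)>0$. The role of the condition $s_i(J_i)\le S_{\min}$ is only to ensure that the shifted interval cannot straddle \emph{both} boundary points simultaneously, so that the split is at most two-way rather than three-way. Consequently the clean recursion you posit does not hold on the conditioning event: at each step there is a positive probability that $\tilde S_i<S_i$, with the reduction factor $\tilde S_i/S_i$ random (depending on $U_i$ and the position of $J_i$). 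Your additive martingale approach on $\log s_i$ would then need to control these random reductions, and since the reduction factor can be arbitrarily close to $0$, the per-step increment $\log(\tilde S_i/S_i)$ is unbounded below, so Azuma--Hoeffding does not apply directly.

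The paper circumvents this by working with $\Lambda(s)=s^{-\rho}$ for small $\rho>0$ and bounding $\E[\Lambda(S_n)]$ multiplicatively. The key technical step (Lemma~\ref{lemma:bounds_NEW}) computes $\E[(S_{i+1}/S_i)^{-\rho}\mid S_i=s,X_i^-,Q_{i-1},X_i,Q_i]$ explicitly, integrating over the random shift $U_i$ and the split location, and shows it equals $\bigl(p_{Y|X}(y|x)/\pi_{Y|Q}(y|q)\bigr)^{-\rho}$ times a factor $2^{\delta(s)}$ with $\delta(s)\to 0$ as $s\to 0$. This is what absorbs the splitting loss: the $\rho$-th negative moment of the split ratio is bounded uniformly in $\rho\in[0,1)$ (the integrals over the boundary regions $I_2,I_3$ contribute $O(s)$), which an additive $\log$-scale analysis cannot exploit. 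The iterated bound on $\E[\Lambda(S_n)]$ then reduces to an expectation of a product of $\phi$-factors over the Markov chain $(X_i,Q_i)$, handled by a typicality decomposition, and finally Markov's inequality on $\Lambda(S_n)$ yields the result. Your identification of the drift $I(X;Y\mid Q)$ and the Markov structure $(X_i,Q_i)$ is correct and is indeed what makes the typical-set part of the paper's argument go through, but the route to that point requires the moment machinery rather than a pathwise recursion. (A minor additional issue: even absent a split, the correct one-step update from Lemma~\ref{lem:2} is $S_{i+1}=\tilde S_i\cdot p_{Y|X}(y_i|x_i)/\pi_{Y|Q}(y_i|q_{i-1})$, not the ratio you wrote.)
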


Recall that the decoder does not know which of the intervals in $\cJ_i$ is the true message interval, but it is able to compute the lengths $s_i(\sfJ)$ for all $\sfJ \in \cJ_i$. The theorem above allows the decoder to create a relatively short list of potential true message intervals.

\subsubsection{Decoder decision}\label{sec:PMS_decoder}

Let $\xi^* = \min\{\xi, S_{\min}\}$, where $\xi$ is as in Theorem~\ref{theorem:part_1} above, and $S_{\min}$ is as defined in \eqref{eq:Smin}. The decoder halts operations at time $T = n - \lceil\sqrt{n}\rceil$, and outputs the list of messages
\begin{equation}
\widehat{\cM} = \{\mu_i(\sfJ): \sfJ \in \cJ_i, \ s_i(\sfJ) \ge \xi^*, 1 \le  i \le T\}. \label{eq:hatcM}
\end{equation}
In words, this is the set of messages associated with message intervals whose lengths exceed either $S_{\min}$ or $\xi$ at any point during the operation of the PMS. This signals the end of Phase~I of our coding scheme.

 Note that since the lengths of all message intervals at time $i$ must sum to $1$, there can be at most $\lfloor 1/\xi^* \rfloor$ messages contributed by $\cJ_i$ to $\widehat{\cM}$, for each $i \in \{1,2,\ldots,T\}$. Thus, we have $|\widehat{\cM}| \le \frac{1}{\xi^*} T \le \frac{1}{\xi^*} n$.

\subsection{Phase~II: Clean-Up} \label{sec:phase2}

The message list generated at the end of Phase~I serves as the input for Phase~II, a complementary coding scheme to determine the correct message within $\widehat{\cM}$. The rate of the coding scheme in this phase can be made to go to zero, since it only has to distinguish between $O(n)$ many messages in ${\cM}$. Each message in $\widehat{\cM}$ is represented using $k_n = \lceil\log_2(n/\xi^*)\rceil$ bits, $b_1b_2\dots b_{k_n}$, which are to be transmitted successively.

\subsubsection{Encoding}
Given a string of $k$ bits, $b_1,\dots b_{k_n}$, the encoder transmits each bit $b_i$ using a length-$L_n$ codeword $b_i0b_i0\ldots b_i0$, where $L_n$ is a suitably chosen even number. Thus, $k_nL_n$ channel uses are required to transmit these $k_n$ bits, but as we will see, $L_n$ will be chosen so that $k_nL_n = o(n)$, so that this is an asymptotically vanishing fraction of the overall number of channel uses, $n$. Note that the encoder does not make use of feedback.

\subsubsection{Decoding}
Based on the sequence of $L_n/2$ channel outputs received in response to the $L_n/2$ repetitions of the bit $b_i$, the decoder declares $\hat{b}_i=0$ if the output sequence lies in the typical set $\mathcal{T}^{(L_n/2)}_{\epsilon} (p_{Y|X=0})$ for a well-chosen $\epsilon > 0$, and declares $\hat{b}_i=1$ otherwise. The decoder's estimate of the transmitted message $M$ is the message $\hat{M}$ represented by $\hat{b}_1\hat{b}_2\dots \hat{b}_{k_n}$.

The next lemma shows that the probability of decoding error for one message bit $b$ can be made arbitrarily small, by choosing $L_n$ as a suitably slowly growing function of $n$. As usual, $\hat{b}$ in the statement of the lemma refers to the decoder's estimate of $b$.
\begin{lemma}\label{lemma:part_2}
If the capacity of the BIBO channel (without feedback or input constraints) is nonzero, then there exists a constant $C_0 > 0$ such that for any $L_n \ge C_0 \log k_n$, we have $\Pr[\hat{b} \neq b] \leq \frac{1}{k_n^2}$.
\end{lemma}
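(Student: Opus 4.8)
The plan is to reduce the claim to a standard large-deviations bound on the probability that a product measure lands outside a typical set. Fix the message bit $b$, say $b = 0$ (the case $b=1$ is symmetric, using the sequence $b\,0\,b\,0\cdots = 1\,0\,1\,0\cdots$, which still satisfies the $(1,\infty)$-RLL constraint). The encoder sends $L_n/2$ copies of $X = 0$, interleaved with $0$s that carry no information; the decoder looks only at the $L_n/2$ outputs $Y^{L_n/2}$ generated i.i.d.\ from $p_{Y|X=0}$. Thus $\Pr[\hat b \neq b \mid b = 0] = \Pr\bigl[Y^{L_n/2} \notin \cT^{(L_n/2)}_\epsilon(p_{Y|X=0})\bigr]$, which by the standard typical-set estimate (e.g.\ via Hoeffding or Chernoff applied to the empirical distribution) is at most $2|\cY|\,e^{-c_\epsilon L_n/2}$ for a constant $c_\epsilon > 0$ depending only on $\epsilon$ and $p_{Y|X=0}$.

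For $b = 1$, the decoder errs only if $Y^{L_n/2}$ (now generated i.i.d.\ from $p_{Y|X=1}$) \emph{does} fall in $\cT^{(L_n/2)}_\epsilon(p_{Y|X=0})$. Here is where the hypothesis that the (unconstrained, non-feedback) capacity is nonzero enters: nonzero capacity is equivalent to $p_{Y|X=0} \neq p_{Y|X=1}$ as distributions on $\cY$, i.e.\ $\alpha \neq 1-\beta$, so these two rows of $P_{Y|X}$ are distinct and the relative entropy $D(p_{Y|X=1}\,\|\,p_{Y|X=0}) > 0$ is strictly positive. Choosing $\epsilon$ small enough that the $\epsilon$-typical ball of $p_{Y|X=0}$ is bounded away (in total variation, hence in the empirical-distribution sense used to define $\cT_\epsilon$) from $p_{Y|X=1}$, a Sanov-type / Chernoff bound gives $\Pr\bigl[Y^{L_n/2} \in \cT^{(L_n/2)}_\epsilon(p_{Y|X=0}) \mid b = 1\bigr] \le 2|\cY|\,e^{-c'_\epsilon L_n/2}$ for some $c'_\epsilon > 0$. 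Taking $c := \tfrac12\min\{c_\epsilon, c'_\epsilon\} > 0$, we get $\Pr[\hat b \neq b] \le 2|\cY|\,e^{-cL_n}$ uniformly over $b \in \{0,1\}$.

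It remains to absorb the constant: choose $C_0 := 3/c$ (any constant with $cC_0 > 2$ and large enough that $2|\cY| \le e^{(cC_0 - 2)\log k_n}$ for all $k_n \ge 2$ works). Then for $L_n \ge C_0 \log k_n$ we have $2|\cY|\,e^{-cL_n} \le 2|\cY|\,k_n^{-cC_0} \le k_n^{-2}$, which is the desired bound.

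The only genuine subtlety — and the step I would be most careful about — is the choice of $\epsilon$ in the second paragraph: one must verify that the $\epsilon$-typical set of $p_{Y|X=0}$, defined via closeness of the empirical distribution of $Y^{L_n/2}$ to $p_{Y|X=0}$, has the property that under the law $p_{Y|X=1}^{\otimes L_n/2}$ it has exponentially small probability with a rate that does \emph{not} degrade as $n \to \infty$. This is where one uses that $\alpha \neq 1-\beta$ (equivalently, nonzero channel capacity) to separate the two row-distributions by a fixed total-variation gap, pick $\epsilon$ to be, say, a third of that gap, and then invoke Sanov's theorem (or a direct Chernoff bound on the indicator that each $Y_j$ equals a fixed symbol) to get an exponent $c'_\epsilon$ that is a fixed positive constant. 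Everything else is the textbook joint-typicality error analysis.
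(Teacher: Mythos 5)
Your proof is correct and follows essentially the same route as the paper's: use that nonzero capacity forces $p_{Y|X=0} \neq p_{Y|X=1}$ so the two typical sets separate, invoke standard Chernoff-type exponential decay of atypicality probabilities in $L_n$, and then choose $C_0$ so that $L_n \ge C_0 \log k_n$ drives the error below $k_n^{-2}$. You are somewhat more explicit about the constants and the Sanov/Chernoff bookkeeping, but there is no substantive difference in approach.
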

\begin{proof}
We use a standard typicality argument based on the fact that if capacity is non-zero, then
\begin{align*}
   D(p_{Y|X=0}||p_{Y|X=1})&\neq 0,
\end{align*}
and so there exists a sequence $\epsilon(\ell)>0$ such that, for all sufficiently large $\ell$, $\mathcal{T}^{(\ell)}_{\epsilon(\ell)} (p_{Y|X=0})\bigcap\mathcal{T}^{(\ell)}_{\epsilon(\ell)} (p_{Y|X=1})=\emptyset$ and both typical sets are nonempty.

Without loss of generality, assume that the transmitted bit is $b=1$, so that the sequence of channel outputs in response to the sequence of transmitted $1$s is i.i.d.\ $\sim p_{Y|X=1}$. By standard arguments, it can be shown that the probability that a length-$L/2$ output sequence is in $\mathcal{T}^{(L/2)}_{\epsilon'} (p_{Y|X=1})$ (for some $0<\epsilon'<\epsilon(L/2)$) goes to $1$ exponentially quickly in $L$, while the probability that the output sequence is in $\mathcal{T}^{(L/2)}_{\epsilon'} (p_{Y|X=0})$ decays to $0$ exponentially quickly in $L$. Therefore, $\Pr[\hat{b} \neq b]$ can be made smaller than $\frac{1}{k_n^2}$ by choosing $L_n = C_0 \log k_n$ for a sufficiently large positive constant $C_0$.
\end{proof}

\subsection{Combining Phases~I and II} \label{sec:combine}

We now describe how Phases~I and II are combined to obtain a capacity-achieving coding scheme.
Fix an $R < I(X;Y | Q)$ and let $\xi$ be as in Theorem~\ref{theorem:part_1}, which in turn determines $\xi^* = \min\{\xi, S_{\min}\}$. We will apply Theorem~\ref{theorem:part_1} with $\zeta_n = \lceil\sqrt{n}\rceil$. Let $k_n = \lceil \log_2(n/\xi^*)\rceil$, and $L_n = \zeta_n/k_n$ be the parameters of the coding scheme in Phase~II. Note that $L_n \ge C_0 \log k_n$ for all sufficiently large $n$, where $C_0$ is the constant in the statement of Lemma~\ref{lemma:part_2}.

%

We will run the PMS of Phase~I on a message set $\cM$ of size $2^{nR}$. A message $M \sim \text{Unif}(\cM)$ is transmitted using $n$ uses of the channel as follows: The PMS of Phase~I is executed until time $T = n-\zeta_n$ (i.e., $T=n-\lceil\sqrt n\rceil$), at which time a list $\widehat{M}$ as in \eqref{eq:hatcM} is produced. We then execute Phase~II for the remaining $k_nL_n = \zeta_n$ channel uses, at the end of which the decoder produces an estimate $\hat{M}$ of $M$. The coding rate of the overall scheme is $\frac{1}{n}\,\log_2 |\cM| = R$.

To assess the probability of error $P_e^{(n)}$, we observe that
\begin{align}
P_e^{(n)} & \ = \ \Pr[M \ne \hat{M}] \notag \\
& \ \le \ \Pr[M \notin \widehat{\cM}] +  \Pr[M \neq \hat{M} \mid M \in \widehat{\cM}] \notag \\
& \ \le \ \Pr[M \notin \widehat{\cM}] + \sum_{i=1}^{k_n} \Pr[\hat{b}_i \ne b_i] \notag \\
&\ \le \ \Pr[M \notin \widehat{\cM}] + \frac{1}{k_n}, \label{ineq:Pe}
\end{align}
the last inequality above being valid for all $n$ large enough that $L_n \ge C_0 \log k_n$, so that the conclusion of Lemma~\ref{lemma:part_2} holds. The probability that $M \notin \widehat{\cM}$ can be bounded as follows:
\begin{align*}
\Pr[M \notin \widehat{\cM}]  & \ \le \ \Pr[s_i(J_i) \le S_{\min} \text{ for } i = 1,2,\ldots,n-\zeta_n-1, \text{ and }  s_{n-\zeta_n}(J_{n-\zeta_n}) < \xi] \\
& \ \le \Pr\bigl[s_{n-\zeta_n}(J_{n-\zeta_n}) < \xi \mid s_i(J_i) \le S_{\min} \text{ for } i = 1,2,\ldots,n-\zeta_n-1 \bigr],
\end{align*}
which, by Theorem~\ref{theorem:part_1}, goes to $0$ as $n \to \infty$. Hence, by \eqref{ineq:Pe}, we also have $P_e^{(n)} \to 0$ as $n \to \infty$.

We have thus shown that any rate $R < I(X;Y|Q)$ is $(1,\infty)$-achievable. Recall from Theorem~\ref{theorem:optimal_inputs} that $I(X;Y|Q) = C^{\mathrm{fb}}(\alpha,\beta)$. Thus, we have proved the main result of this section, stated below.

\begin{theorem}\label{theorem:scheme_optimal}
For an input-constrained BIBO channel, any rate $R < C^{\mathrm{fb}}(\alpha,\beta)$ is $(1,\infty)$-achievable using a coding scheme that combines Phases~I and II.
\end{theorem}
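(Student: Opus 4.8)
The plan is to assemble Theorem~\ref{theorem:scheme_optimal} directly from the pieces already in place, treating it as a bookkeeping exercise that combines Phase~I (the PMS) and Phase~II (the clean-up block code). First I would fix an arbitrary target rate $R < C^{\mathrm{fb}}(\alpha,\beta)$. By Theorem~\ref{theorem:optimal_inputs}, $C^{\mathrm{fb}}(\alpha,\beta) = I(X;Y|Q)$ for the optimal input distribution, so $R < I(X;Y|Q)$ and Theorem~\ref{theorem:part_1} applies, furnishing a constant $\xi > 0$ and hence $\xi^* = \min\{\xi, S_{\min}\}$. I would then commit to the parameter choices $\zeta_n = \lceil\sqrt n\rceil$, $T = n - \zeta_n$, $k_n = \lceil \log_2(n/\xi^*)\rceil$, and $L_n = \zeta_n / k_n$, and verify the one quantitative consistency condition needed downstream: $L_n \ge C_0 \log k_n$ for all large $n$, where $C_0$ is the constant from Lemma~\ref{lemma:part_2}. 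This holds because $\zeta_n = \lceil\sqrt n\rceil$ grows polynomially in $n$ while $k_n \log k_n = \Theta(\log n \cdot \log\log n)$ grows only polylogarithmically, so $L_n = \zeta_n/k_n \to \infty$ faster than $\log k_n$.

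Next I would spell out the operation of the combined scheme over the $n$ channel uses: run the PMS of Phase~I on the message set $\cM$ of size $2^{nR}$ for times $i = 1,\dots,T$, producing the list $\widehat{\cM}$ as in \eqref{eq:hatcM}; then spend the remaining $k_n L_n = \zeta_n$ channel uses on Phase~II to pin down the correct element of $\widehat{\cM}$, using that $|\widehat{\cM}| \le n/\xi^*$ so that $k_n = \lceil \log_2(n/\xi^*)\rceil$ bits suffice to index it. The overall coding rate is $\tfrac1n \log_2|\cM| = R$, and since $\zeta_n = o(n)$, the fraction of channel uses devoted to Phase~II vanishes, so no rate is lost. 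For the error analysis I would reproduce the union bound: $P_e^{(n)} \le \Pr[M \notin \widehat{\cM}] + \Pr[M \ne \hat M \mid M \in \widehat{\cM}]$, bound the second term by $\sum_{i=1}^{k_n} \Pr[\hat b_i \ne b_i] \le k_n \cdot k_n^{-2} = 1/k_n$ via Lemma~\ref{lemma:part_2} (valid once $L_n \ge C_0 \log k_n$), and bound the first term using the observation that $M \notin \widehat{\cM}$ forces $s_i(J_i) \le S_{\min}$ for all $i \le n - \zeta_n - 1$ together with $s_{n-\zeta_n}(J_{n-\zeta_n}) < \xi$, which by Theorem~\ref{theorem:part_1} (applied with $\zeta_n = \lceil\sqrt n\rceil$, indeed $o(n)$) tends to $0$. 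Both terms vanish as $n \to \infty$, so $P_e^{(n)} \to 0$, establishing $(1,\infty)$-achievability of $R$.

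The main obstacle here is not really in the combination argument itself — that is routine once the ingredients are granted — but in ensuring the interface conditions between the two phases are mutually consistent: specifically, that the list size $|\widehat{\cM}| \le n/\xi^*$ guaranteed by the decoder halting rule in Section~\ref{sec:PMS_decoder} matches the number of bits $k_n$ and the budget $\zeta_n$ allotted to Phase~II, and that $L_n = \zeta_n/k_n$ simultaneously satisfies $k_n L_n = \zeta_n$ (no leftover channel uses) and $L_n \ge C_0 \log k_n$ (so Lemma~\ref{lemma:part_2} applies). Since all of Theorems~\ref{theorem:part_1} and \ref{theorem:optimal_inputs} and Lemma~\ref{lemma:part_2} are already available, I would keep this proof short, essentially collecting \eqref{ineq:Pe} and the subsequent display, and concluding that any $R < C^{\mathrm{fb}}(\alpha,\beta)$ is achievable; the de-randomization of the shared randomness $(Q_0, (U_i))$ is handled by the standard averaging remark in Section~\ref{sec:phase1} and needs only a one-line mention.
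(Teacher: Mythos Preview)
Your proposal is correct and follows essentially the same argument as the paper: fix $R<I(X;Y|Q)=C^{\mathrm{fb}}(\alpha,\beta)$, choose $\zeta_n=\lceil\sqrt n\rceil$, $k_n=\lceil\log_2(n/\xi^*)\rceil$, $L_n=\zeta_n/k_n$, run Phase~I for $n-\zeta_n$ steps and Phase~II for the rest, then bound $P_e^{(n)}$ by $\Pr[M\notin\widehat{\cM}]+1/k_n$ and dispatch the two terms via Theorem~\ref{theorem:part_1} and Lemma~\ref{lemma:part_2} respectively. The only additions you make beyond the paper's own write-up are the explicit growth-rate check for $L_n\ge C_0\log k_n$ and the remark on de-randomization, both of which are welcome but not new ideas.
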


It remains to tie a couple of loose ends in the description of our coding scheme, namely, the recursive construction of message intervals, and a proof of Theorem~\ref{theorem:part_1}. The former is presented in the subsection below, while the latter is given in Section~\ref{sec:proof_scheme}.

\subsection{Recursive Construction of Message Intervals}\label{sec:recursion}

\subsubsection{Initialization --- Construction of $\cJ_1$} $\cJ_1$ consists of the $2^{nR}$ intervals $\sfJ^{(j)} = [(j-1)2^{-nR},j2^{-nR})$ of equal length, indexed by $j \in \{1,2,\ldots,2^{nR}\}$, in 1-1 correspondence with the $2^{nR}$ messages in $\cM$. The length of each message interval is $s_1 := 2^{-nR}$. For the $j$th message interval $\sfJ^{(j)}$, the index $\i(\sfJ^{(j)})$ and associated message $\mu(\sfJ^{(j)})$ are both set to be $j$. The history bit of the $\sfJ^{(j)}$ is set to be
$$
x_1^-(\sfJ^{(j)}) = \begin{cases} 0 & \text{ if }  (j-1) 2^{-nR} \le \pi_{X^- | Q} (0 | q_0) \\ 1 & \text{ otherwise,} \end{cases}
$$
where $q_0$ is the initial $Q$-state known to both encoder and decoder.

We will, for simplicity of description, assume that $n$ is chosen so that the message intervals in $\cJ_1$ with history bit equal to $0$ form a partition of $[0,\pi_{X^- | Q} (0 | q_0))$.\footnote{If this is not the case, we can get this to happen by splitting into two the message interval that straddles the boundary point $\pi_{X^- | Q} (0 | q_0)$, as described in the next subsection.}  For a uniformly random message $M \in \cM$, the true message interval is $J_1 = \sfJ^{(M)}$, and is \emph{a priori} known only to the encoder. Note that with $X_0 = x_1^-(J_1)$, the pair $(X_0,Q_0)$ has probability distribution $\pi_{X^- | Q} \pi_Q = \pi_{X^-,Q}$.

\medskip

\subsubsection{Recursion --- Construction of $\cJ_{i+1}$ from $\cJ_i$} Recall that, for $i \ge 1$, the encoder and decoder can compute the set $\cJ_i$ based upon their common knowledge of $q_0$, $y^{i-1}$ and $u^{i-1}$. 
After determining $\cJ_i$, they make use of their shared knowledge of $u_i$ to compute a new partition, $\tcJ_i$, of $[0,1)$ into message intervals. This is an intermediate step in the construction of $\cJ_{i+1}$ from $\cJ_i$. Message intervals $\sfJ \in \tcJ_i$ also store five pieces of data: an index $\ti(\sfJ)$, a message $\tmu_i(\sfJ) \in \cM$, a length $\ts_i(\sfJ)$, a left end-point $\tildet_i(\sfJ)$, and a bit $x_i(\sfJ)$. These are explained as part of the description to follow.

Each message interval $\sfJ \in \cJ_i$ is either retained as is in $\tcJ_i$, or is split into two intervals in $\tcJ_i$ using a procedure to be described shortly. Any message interval $\sfJ \in \cJ_i$ with history bit $x_i^-(\sfJ) = 1$ is retained as is in $\tcJ_i$. It retains its index, message, length and left end-point: $\ti(\sfJ) = \i(\sfJ)$, $\tmu_i(\sfJ) = \mu_i(\sfJ)$, $\ts_i(\sfJ) = s_i(\sfJ)$, and $\tildet_i(\sfJ) = t_i(\sfJ)$. The bit $x_i(\sfJ)$ is set to be equal to $0$, in keeping with the input constraint.

\paragraph{Cyclic shifting and message-interval splitting}
To describe what happens to those $\sfJ \in \cJ_i$ with history bit $x_i^-(\sfJ) = 0$, we first recall that such message intervals form a partition of $[0,\pi_{X^-|Q}(0|q_{i-1}))$ --- see Lemma~\ref{lem:3}.

The fate of message intervals $\sfJ \in \cJ_i$ with history bit $x_i^-(\sfJ) = 0$ is determined by the labeling $\cL_{q_{i-1}}$, as defined in \eqref{Lq}. (Recall that $q_{i-1}$ can be determined at both the encoder and the decoder from their common knowledge of $q_0$ and $y^{i-1}$.) Recall from the description of the encoder operation in Section~\ref{sec:phase1} that when the history bit $x_i^-$ for the true message interval $J_i$ is equal to $0$, then the encoder determines the next bit to be transmitted as $\cL_{q_{i-1}}(w_i)$, where $w_i$ is chosen uniformly at random from the interval obtained by cyclically shifting $J_i$ within the interval $[0,\pi_{X^-|Q}(0|q_{i-1}))$ by $u_i(q_{i-1}) := u_i \cdot \pi_{X^-|Q}(0|q_{i-1})$. Since the decoder does not know the true $J_i$, it attempts to keep up with the encoder by applying the cyclic shifting operation to each $\sfJ \in \cJ_i$. This results in a cyclically-shifted interval $\sfJ^{u_i}$, for each $\sfJ \in \cJ_i$, with left end-point $t_i(\sfJ^{u_i}) := t_i(\sfJ) + u_i(q_{i-1}) \!\! \mod \pi_{X^-|Q}(0|q_{i-1})$, and right end-point $r_i(\sfJ^{u_i}) := t_i(\sfJ) + s_i(\sfJ) + u_i(q_{i-1}) \!\! \mod \pi_{X^-|Q}(0|q_{i-1})$. If $\sfJ^{u_i} \subseteq \cL_{q_{i-1}}^{-1}(\mathsf{b})$ for some $\mathsf{b} \in \{0,1\}$, then $\sfJ^{u_i}$ is included in $\tcJ_i$, with
$$
\ti(\sfJ^{u_i}) = \i(\sfJ), \ \ \tmu_i(\sfJ^{u_i}) = \mu_i(\sfJ), \ \ \ts_i(\sfJ^{u_i}) = s_i(\sfJ), \ \ \tildet_i(\sfJ^{u_i}) = t_i(\sfJ),
$$
and
$$
x_i(\sfJ^{u_i}) = \begin{cases}
0 & \text{ if } \sfJ^{u_i} \subseteq \cL_{q_{i-1}}^{-1}(0) \\
1 & \text{ if } \sfJ^{u_i} \subseteq \cL_{q_{i-1}}^{-1}(1).
\end{cases}
$$
Note that $x_i(\sfJ^{u_i})$ would have been the bit transmitted by the encoder at time $i$, had $\sfJ$ been the true message interval.

A problem arises when $\sfJ^{u_i} \not\subseteq \cL_{q_{i-1}}^{-1}(\mathsf{b})$ for any $\mathsf{b} \in \{0,1\}$, as it then straddles at least one of the boundary points, $0$ and $\pi_{X,X^-|Q}(1,0|q_{i-1})$, of the labeling $\cL_{q_{i-1}}$. This means that the value of the bit $x_i$ transmitted by the encoder at time $i$, had this $\sfJ$ been the true message interval, is determined by the precise location of the random point $w_i$ within $\sfJ^{u_i}$. While this is not a problem for the encoder, it creates an issue for the decoder as it will no longer know what to assign as the bit $x_i(\sfJ^{u_i})$. We deal with this by \emph{splitting} $\sfJ^{u_i}$ into two or three parts. To describe this, we first observe that if the length of $\sfJ^{u_i}$ (which is the same as $s_i(\sfJ$)) is at most $S_{\min}$ (Eq. \eqref{eq:Smin}), then any $\sfJ^{u_i} \not\subseteq \cL_{q_{i-1}}^{-1}(\mathsf{b})$, $\mathsf{b} = 0,1$, can straddle exactly one of the boundary points of $\cL_{q_{i-1}}$.

If $\sfJ^{u_i}$ straddles only the boundary point $0$, then we split $\sfJ^{u_i}$ into two intervals $\sfJ' = [t_i(\sfJ^{u_i}),\pi_{X^-|Q}(0|q_{i-1}))$ and $\sfJ'' = [0,r_i(\sfJ^{u_i}))$ to be included in $\tcJ_i$, with $\tmu_i(\sfJ') = \tmu_i(\sfJ'') = \mu_i(\sfJ)$. The left end-points $\tildet_i$ and lengths $\ts_i$ of $\sfJ'$ and $\sfJ''$ are self-evident. Note that $\sfJ'$ sits entirely within $\cL_{q_{i-1}}^{-1}(0)$, so that we set $x_i(\sfJ') = 0$. By similar reasoning, we set $x_i(\sfJ'') = 1$. Finally, we set $\ti(\sfJ') = \i(\sfJ)$, while $\sfJ''$ is assigned a brand new index: $\ti(\sfJ'')$ is set to be equal to the least positive integer that has not yet been assigned as an index to any message interval in $\tcJ_i$.

\begin{figure}[t]
\centering
        \psfrag{A}[][][1]{Message}
        \psfrag{B}[][][1]{Random}
        \psfrag{C}[][][1]{Labelling}
        \psfrag{D}[][][1]{Intervals}
        \psfrag{E}[][][1]{$\!\!\!\!\!\!\!\!\!\! x_i^-(\cdot)=0$}
        \psfrag{F}[][][1]{\ \ \ \ \ \ $x_i^-(\cdot)=1$}
        \psfrag{G}[][][1]{$'1'$}
        \psfrag{H}[][][1]{$'0'$}
        \psfrag{I}[][][1]{$'0'$}
        \psfrag{J}[][][0.9]{$\pi_{X,X^-|Q}(1,0|q_{i-1})$}
        \psfrag{K}[][][0.9]{$\pi_{X^-|Q}(0|q_{i-1})$}
        \psfrag{L}[][][1]{$s_i(\sfJ)$}
        \psfrag{M}[][][1]{intervals}
        \psfrag{N}[][][1]{cyclic shift}
        \psfrag{O}[][][1]{$\cL_{q_{i-1}}$}
        \psfrag{P}[][][1]{after split}
        \psfrag{Q}[][][0.8]{$\i(\sfJ)=1$}
        \psfrag{Z}[][][0.8]{$\tilde{s}_i(\sfJ')$}
        \psfrag{S}[][][0.8]{$t_i(\sfJ)$}
        \psfrag{T}[][][0.8]{$t_i(\sfJ^{u_i})$}
        \psfrag{V}[][][0.75]{$\tilde{\i}(\sfJ')=1$}
        \psfrag{R}[][][0.75]{$\tilde{\i}(\sfJ'')=7$}
        \centerline{\includegraphics[scale=1]{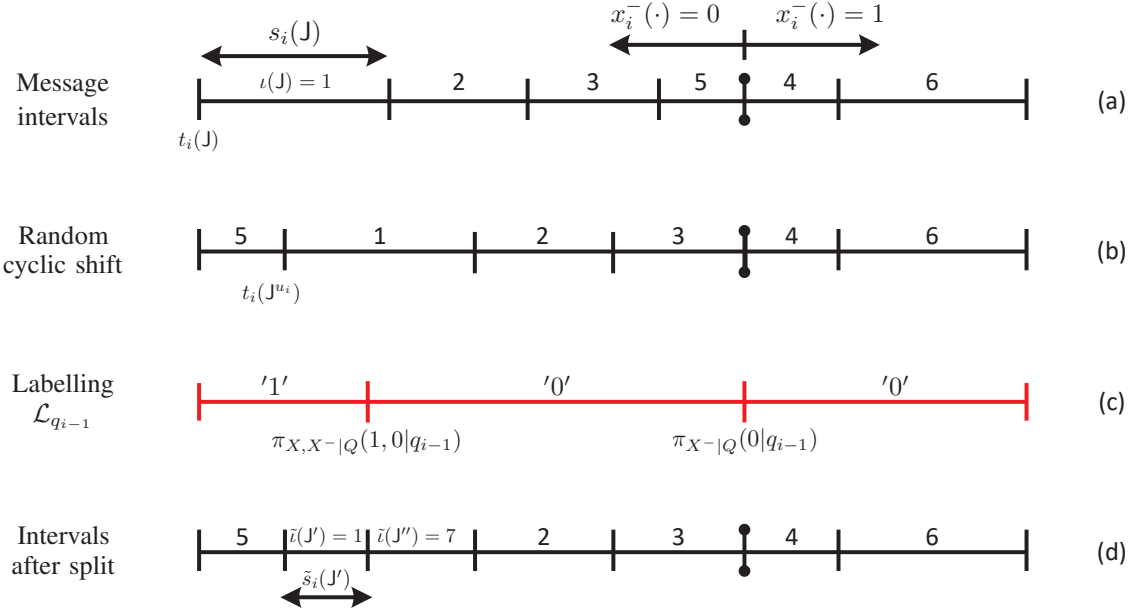}}
\caption{Illustration of the cyclic shifting and splitting of message intervals. The lengths of message intervals are determined by their posterior probabilities, and the message intervals are positioned in ascending order of their indices, based on their history bits. Each message interval $\sfJ$ with $x_i^-(\sfJ)=0$ is cyclically shifted within the interval $[0,\pi_{X^-|Q}(0|q_{i-1}))$ by adding $u_i(q_{i-1})$, and its new left end-point is $t_i(\sfJ^{u_i})$. The labeling $\cL_{q_{i-1}}$ implies that the message interval indexed with $\i(\sfJ)=1$ crosses the boundary point $\pi_{X,X^-|Q}(1,0|q_{i-1})$. Thus, this message interval is split into two intervals $\sfJ'$ and $\sfJ''$, belonging to $\tilde{\cJ}_{i}$, and indexed with $\tilde{\i}(\sfJ')=1$ and $\tilde{\i}(\sfJ'')=7$.}
\label{fig:scheme}
\end{figure}

If $\sfJ^{u_i}$ straddles only the boundary point $\pi_{X,X^-|Q}(1,0|q_{i-1})$, then we split $\sfJ^{u_i}$ into the two intervals $\sfJ' = [t_i(\sfJ^{u_i}),\pi_{X,X^-|Q}(1,0|q_{i-1}))$ and $\sfJ'' = (\pi_{X,X^-|Q}(1,0|q_{i-1}), r_i(\sfJ^{u_i})$ to be included in $\tcJ_i$. For the new message intervals $\sfJ'$ and $\sfJ''$, we set $\tmu_i$ and $\ti$ as above, while $x_i(\sfJ') = 1$ and $x_i(\sfJ'') = 0$. Fig.~\ref{fig:scheme} illustrates the cyclic shifting and splitting operations on message intervals.

Finally, if $s_i(\sfJ) > S_{\min}$,\footnote{While this case needs to be included for a complete description of the recursive construction of $\cJ_{i+1}$ from $\cJ_i$, it will play no role in the analysis of the PMS in Section~\ref{sec:proof_scheme}. The analysis there is carried out under the simplifying assumption that the length of the true message interval never exceeds $S_{\min}$.}  it is possible for $\sfJ^{u_i}$ to stretch across both $0$ and $\pi_{X,X^-|Q}(1,0|q_{i-1})$. In this case, we split $\sfJ^{u_i}$ into three intervals $\sfJ'$, $\sfJ''$, $\sfJ'''$ such that each of these intervals lies entirely within one of $\cL_{q_{i-1}}^{-1}(0)$ and $\cL_{q_{i-1}}^{-1}(1)$. Then, for each of these new intervals, we set the bit $x_i$ to be the $\mathsf{b} \in \{0,1\}$ for which the interval lies in $\cL_{q_{i-1}}^{-1}(\mathsf{b})$. The left end-points $\tildet_i$ and lengths $\ts_i$ are determined in the obvious manner. Finally, $\tmu_i(\sfJ') = \tmu_i(\sfJ'') = \tmu_i(\sfJ''') = \mu_i(\sfJ)$, and $\ti(\sfJ') = \i(\sfJ)$, while $\sfJ''$ and $\sfJ'''$ get brand new indices.

Thus, each $\sfJ \in \cJ_i$ gives rise to either one interval (no splitting) or two to three message intervals (after splitting) in $\tcJ_i$; we will refer to the interval(s) in $\tcJ_i$ as the \emph{child(ren)} of $\sfJ \in \cJ_i$, and to $\sfJ$ as their \emph{parent}. Note that splitting affects only those message intervals in $\cJ_i$ that, when cyclically shifted by $u_i(q_{i-1})$, straddle the boundary points $0$ and $\pi_{X,X^-|Q}(1,0|q_{i-1})$. Thus, splitting causes the number of intervals to increase by at most two: $|\cJ_i| \le |\tcJ_i| \le |\cJ_i|+2$.

The splitting of message intervals lends itself to a simple alternative description of the encoding operation described earlier in Section~\ref{sec:phase1}. Suppose that $J_i = \sfJ \in \cJ_i$. If $\sfJ^{u_i}$ has exactly one child $\tilde{\sfJ} \in \tcJ_i$, then $x_i = x_i(\tilde{\sfJ})$. If $\sfJ$ has two children $\sfJ', \sfJ''$ in $\tcJ_i$, then the encoder uses $v_i$ to determine the transmitted bit $x_i$:
\begin{equation}
x_i = \begin{cases}
x_i(\sfJ') & \text{ if } 0 \le v_i < \frac{\ts_i(\sfJ')}{s_i(\sfJ)} \\
x_i(\sfJ'') & \text{ if } \frac{\ts_i(\sfJ')}{s_i(\sfJ)} \le v_i < 1.
\end{cases}
\label{eq:xi}
\end{equation}
The case when $\sfJ$ has three children is handled analogously:
\begin{equation}
x_i = \begin{cases}
x_i(\sfJ') & \text{ if } 0 \le v_i < \frac{\ts_i(\sfJ')}{s_i(\sfJ)} \\
x_i(\sfJ'') & \text{ if }  \frac{\ts_i(\sfJ')}{s_i(\sfJ)} \le v_i < \frac{\ts_i(\sfJ')+\ts_i(\sfJ'')}{s_i(\sfJ)} \\
x_i(\sfJ''') & \text{ if } \frac{\ts_i(\sfJ') + \ts_i(\sfJ'')}{s_i(\sfJ)} \le v_i < 1.
\end{cases}
\label{eq:xi3}
\end{equation}

\medskip

\paragraph{Construction of $\cJ_{i+1}$} The message intervals in $\cJ_{i+1}$ are in 1-1 correspondence with those in the set $\tcJ_i$. Specifically, for each $\tilde{\sfJ} \in \tcJ_i$, we introduce a message interval $\sfJ \in \cJ_{i+1}$, with $\i(\sfJ) = \ti(\tilde{\sfJ})$, $\mu_{i+1}(\sfJ) = \tmu_i(\tilde{\sfJ})$ and $x_{i+1}^-(\sfJ) = x_i(\tilde{\sfJ})$. We will refer to the message interval $\sfJ$ as the \emph{image} of $\tilde{\sfJ}$. The new interval lengths
\begin{equation}
s_{i+1}(\sfJ) = \Pr[J_{i+1} = \sfJ \mid Y^i = y^i, U^i = u^i], \ \ \forall \, \sfJ \in \cJ_{i+1},
\label{eq:si+1}
\end{equation}
can be determined from the set of interval lengths in $\tcJ_i$, as we will describe shortly. Once the interval lengths $s_{i+1}(\sfJ)$ have been determined, the left end-points can be computed as
\begin{equation}
t_{i+1}(\sfJ) = \pi_{X^- | Q}(0 | q_{i}) {\mathbbm 1}_{\{x_{i+1}^-(\sfJ) = 1\} +}\sum_{\substack{\sfJ'  : \ \i(\sfJ') < \i(\sfJ), \\ \ \ \ \ \ \ x_{i+1}^-(\sfJ') = x_{i+1}^-(\sfJ)}} s_{i+1}(\sfJ').
\label{eq:ti+1}
\end{equation}

To be able to describe how the lengths of intervals in $\cJ_{i+1}$ are computed from the lengths of intervals in $\tcJ_i$, we need to understand how the encoder decides the symbol $x_i$ to be transmitted at time $i$, and how it picks the true message interval $J_{i+1} \in \cJ_{i+1}$.
\medskip

\paragraph{Choice of the true message interval $J_{i+1} \in \cJ_{i+1}$}

Suppose that $J_i = \sfJ \in \cJ_i$. If $\sfJ$ has exactly one child $\tilde{\sfJ} \in \tcJ_i$, then $J_{i+1}$ is taken to be the image of $\tilde{\sfJ}$ in $\cJ_{i+1}$, i.e., the message interval in $\cJ_{i+1}$ that has the same index as $\sfJ$ (and $\tilde{\sfJ}$). On the other hand, if $\sfJ$ has two children $\sfJ', \sfJ''$ in $\tcJ_i$, then $J_{i+1}$ is set to be the image of $\sfJ'$ (respectively, $\sfJ''$) if the transmitted bit $x_i$ in \eqref{eq:xi} equals $x_i(\sfJ')$ (respectively, $x_i(\sfJ'')$). In any case, note that we always have $x_{i+1}^-(J_{i+1}) = x_i$. The case when $\sfJ$ has three children is similarly handled, based on \eqref{eq:xi3}.

\medskip

\subsubsection{Computing $s_{i+1}(\sfJ)$, $\sfJ \in \cJ_{i+1}$, from $\ts_i(\sfJ)$, $\sfJ \in \tcJ_i$}\label{sec:si+1}

In what follows, we use $\sfJ$ to denote a message interval in $\tcJ_i$ as well as its image in $\cJ_{i+1}$. We state a preliminary lemma first.

\begin{lemma} For any $\sfJ \in \cJ_{i+1}$, we have
$\Pr[J_{i+1} = \sfJ \mid Y^{i-1} = y^{i-1}, U^i = u^i] = \ts_i(\sfJ)$.
\label{lem:1}
\end{lemma}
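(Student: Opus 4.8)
The plan is to unwind the definitions of $\ts_i(\sfJ)$ from the recursive construction and match them against the claimed conditional probability. Recall that each $\sfJ \in \cJ_{i+1}$ is the image of a unique $\tilde{\sfJ} \in \tcJ_i$, and each $\tilde{\sfJ} \in \tcJ_i$ is a child of a unique parent $\sfJ_{\mathrm{par}} \in \cJ_i$. So I would start by fixing $\sfJ \in \cJ_{i+1}$, letting $\tilde{\sfJ} \in \tcJ_i$ be its pre-image and $\sfJ_{\mathrm{par}} \in \cJ_i$ be the parent of $\tilde{\sfJ}$, and writing
\begin{equation*}
\Pr[J_{i+1} = \sfJ \mid Y^{i-1}=y^{i-1}, U^i=u^i] = \Pr[J_{i+1}=\sfJ, \ J_i = \sfJ_{\mathrm{par}} \mid Y^{i-1}=y^{i-1}, U^i=u^i],
\end{equation*}
since $J_{i+1}=\sfJ$ forces $J_i=\sfJ_{\mathrm{par}}$. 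Then I would factor this as $\Pr[J_{i+1}=\sfJ \mid J_i=\sfJ_{\mathrm{par}}, Y^{i-1}=y^{i-1}, U^i=u^i]\cdot \Pr[J_i=\sfJ_{\mathrm{par}}\mid Y^{i-1}=y^{i-1}, U^i=u^i]$. The second factor equals $\Pr[J_i=\sfJ_{\mathrm{par}}\mid Y^{i-1}=y^{i-1}, U^{i-1}=u^{i-1}] = s_i(\sfJ_{\mathrm{par}})$ by \eqref{eq:si}, because $J_i$ is determined by $Y^{i-1}$ and $U^{i-1}$ alone (it does not depend on $U_i$, which is used only at step $i$).

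Next I would evaluate the first factor by cases, according to how many children $\sfJ_{\mathrm{par}}$ has in $\tcJ_i$ under the cyclic shift by $u_i$. If $\sfJ_{\mathrm{par}}$ has exactly one child, then conditioned on $J_i=\sfJ_{\mathrm{par}}$ we have $J_{i+1}=\sfJ$ with probability $1$, and indeed $\ts_i(\tilde{\sfJ}) = s_i(\sfJ_{\mathrm{par}})$ by the construction (the interval is retained with its length unchanged), so both sides match. If $\sfJ_{\mathrm{par}}$ splits into two children $\sfJ', \sfJ''$, then by the rule \eqref{eq:xi} for choosing $x_i$ (equivalently for choosing $J_{i+1}$), conditioned on $J_i = \sfJ_{\mathrm{par}}$ and given $u_i$, the true message interval becomes the image of $\sfJ'$ with probability $\ts_i(\sfJ')/s_i(\sfJ_{\mathrm{par}})$ and the image of $\sfJ''$ with probability $\ts_i(\sfJ'')/s_i(\sfJ_{\mathrm{par}})$ — this is exactly the distribution of $V_i \sim \mathrm{Unif}[0,1]$ being used, and crucially $V_i$ is independent of everything in the conditioning. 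Multiplying by $s_i(\sfJ_{\mathrm{par}})$ gives $\ts_i(\sfJ')$ or $\ts_i(\sfJ'')$ respectively, i.e.\ $\ts_i(\sfJ)$ for whichever of the two $\sfJ$ is the image of. The three-children case is identical, using \eqref{eq:xi3} in place of \eqref{eq:xi}. This completes the verification in all cases.

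The only subtle point — and the step I would be most careful about — is the independence and measurability bookkeeping: I need that $J_i$ is a deterministic function of $(Y^{i-1}, U^{i-1})$ (so that conditioning on $\{J_i = \sfJ_{\mathrm{par}}\}$ together with $(Y^{i-1}=y^{i-1}, U^i=u^i)$ is consistent and the second factor reduces to $s_i(\sfJ_{\mathrm{par}})$), and that the encoder's private randomizer $V_i$, which selects among the children in \eqref{eq:xi}--\eqref{eq:xi3}, is independent of $(M, Y^{i-1}, U^i)$. Both facts are built into the scheme's construction — $\cJ_i$ and hence the identity of $J_i$ are computed from $q_0, y^{i-1}, u^{i-1}$, and $V_i$ is drawn fresh and independent — so once these are stated explicitly the case analysis above closes the argument. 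I would therefore present the proof as: (i) condition on the parent and factor; (ii) identify the second factor as $s_i(\sfJ_{\mathrm{par}})$ via \eqref{eq:si}; (iii) evaluate the first factor in the one-, two-, and three-children cases using the definition of $\ts_i$ and the splitting rules; (iv) observe that the product is $\ts_i(\sfJ)$ in every case.
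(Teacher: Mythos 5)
Your proposal is correct and follows essentially the same route as the paper's proof: decompose via the unique parent $\sfJ^{\#}\in\cJ_i$, factor as $\Pr[J_i=\sfJ^{\#}\mid\cdot]\cdot\Pr[J_{i+1}=\sfJ\mid J_i=\sfJ^{\#},U_i=u_i]$, identify the first factor as $s_i(\sfJ^{\#})$ via \eqref{eq:si}, and identify the second factor as $\ts_i(\sfJ)/s_i(\sfJ^{\#})$ from the child-selection rule. The only difference is cosmetic: the paper compresses the evaluation of the second factor into the phrase ``follows from the way that $J_{i+1}$ is chosen,'' while you usefully unpack it into the one-, two-, and three-children cases via \eqref{eq:xi}--\eqref{eq:xi3} and make the independence of $V_i$ explicit.
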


\begin{proof}
Let $\sfJ^{\#} \in \cJ_i$ be the parent of $\sfJ$, so that $J_{i+1} = \sfJ$ implies that $J_i = \sfJ^{\#}$. Then,
\begin{align*}
\Pr[J_{i+1} = \sfJ & \mid Y^{i-1} = y^{i-1}, U^i = u^i]  \\
&= \Pr[J_{i+1} = \sfJ, J_i = \sfJ^{\#}  \mid Y^{i-1} = y^{i-1}, U^i = u^i] \\
&= \Pr[J_i = \sfJ^{\#} \mid Y^{i-1} = y^{i-1}, U^{i-1} = u^{i-1}] \, \Pr[J_{i+1} = \sfJ \mid J_i = \sfJ^{\#}, U_i = u_i] \\
&= s_i(\sfJ^{\#}) \, \frac{\ts_i(\sfJ)}{s_i(\sfJ^{\#})},
\end{align*}
where $\Pr[J_{i+1} = \sfJ \mid J_i = \sfJ^{\#}, U_i = u_i] = \frac{\ts_i(\sfJ)}{s_i(\sfJ^{\#})}$ follows from the way that $J_{i+1}$ is chosen, given $J_i$ and $u_i$.
\end{proof}

\medskip

We are now in a position to derive the means of computing $s_{i+1}(\sfJ)$ from $\ts_i(\sfJ)$.

\begin{lemma} For each $\sfJ \in \cJ_{i+1}$, we have
$$
s_{i+1}(\sfJ) := \Pr[J_{i+1} = \sfJ \mid Y^{i} = y^{i}, U^i = u^i] = \ts_i(\sfJ) \, \frac{p_{Y|X}(y_i \mid x_i(\sfJ))}{\pi_{Y|Q}}(y_i \mid q_{i-1}),
$$
where $q_{i-1} = \Phi(q_0,y^{i-1})$.
\label{lem:2}
\end{lemma}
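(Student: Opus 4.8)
The plan is to compute the posterior probability in \eqref{eq:si+1} by conditioning on the value of $y_i$ and applying Bayes' rule, using Lemma~\ref{lem:1} as the starting point. First I would write
\[
s_{i+1}(\sfJ) = \Pr[J_{i+1} = \sfJ \mid Y^{i} = y^{i}, U^i = u^i] = \Pr[J_{i+1} = \sfJ \mid Y_i = y_i, Y^{i-1} = y^{i-1}, U^i = u^i],
\]
and then expand this conditional probability via Bayes' rule in the form
\[
\Pr[J_{i+1} = \sfJ \mid Y_i = y_i, \cdot] = \frac{\Pr[Y_i = y_i \mid J_{i+1} = \sfJ, \cdot]\,\Pr[J_{i+1} = \sfJ \mid \cdot]}{\Pr[Y_i = y_i \mid \cdot]},
\]
where $\cdot$ stands for the conditioning event $\{Y^{i-1} = y^{i-1}, U^i = u^i\}$. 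The second factor in the numerator is exactly $\ts_i(\sfJ)$ by Lemma~\ref{lem:1}.

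For the first factor in the numerator, the key observation is that conditioned on $J_{i+1} = \sfJ$, the bit $x_i$ transmitted at time $i$ is deterministic and equal to $x_i(\sfJ)$: indeed, by the construction in Section~\ref{sec:recursion} (the ``Choice of the true message interval $J_{i+1}$'' paragraph), we always have $x_{i+1}^-(J_{i+1}) = x_i$, and $x_{i+1}^-(\sfJ) = x_i(\sfJ)$ by definition of the image map. Hence $\{J_{i+1} = \sfJ\}$ forces the channel input at time $i$ to be $x_i(\sfJ)$, and since the channel is memoryless, $\Pr[Y_i = y_i \mid J_{i+1} = \sfJ, \cdot] = p_{Y|X}(y_i \mid x_i(\sfJ))$. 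The denominator $\Pr[Y_i = y_i \mid Y^{i-1} = y^{i-1}, U^i = u^i]$ should be identified with $\pi_{Y|Q}(y_i \mid q_{i-1})$; this is where one needs to invoke the ``matching'' property of the scheme, namely that by the cyclic-shifting construction (and the Remark following the encoder description) the conditional law of $X_i$ given $q_{i-1}$ (and the history bit) is the optimal input distribution, so that averaging over the message intervals / history bits with the correct weights $s_i(\cdot)$ reproduces the marginal $\pi_{X|Q}$, and convolving with $p_{Y|X}$ gives $\pi_{Y|Q}(y_i \mid q_{i-1})$. Assembling the three pieces yields the claimed formula.

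The main obstacle I anticipate is justifying that the denominator $\Pr[Y_i = y_i \mid Y^{i-1} = y^{i-1}, U^i = u^i]$ equals $\pi_{Y|Q}(y_i \mid q_{i-1})$ rather than something that still depends on the full history $y^{i-1}$. This requires carefully unwinding the recursive definition of the $s_i(\cdot)$'s as genuine posteriors \eqref{eq:si}, together with Lemma~\ref{lem:3} (the history-bit-$0$ intervals summing to $\pi_{X^-|Q}(0|q_{i-1})$) and the encoder's cyclic-shifting rule, to see that the conditional distribution of the transmitted symbol $X_i$ given $\{Y^{i-1}=y^{i-1}, U^i = u^i\}$ depends on $y^{i-1}$ only through $q_{i-1} = \Phi(q_0, y^{i-1})$ and in fact coincides with $\pi_{X|Q}(\cdot \mid q_{i-1})$. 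Once that conditional-independence-through-$q_{i-1}$ fact is in hand, the rest is a one-line Bayes computation, and care is only needed to handle the three possible numbers of children uniformly (the per-child split probabilities $\ts_i(\sfJ')/s_i(\sfJ^{\#})$ telescope correctly, which is precisely what Lemma~\ref{lem:1} packages).
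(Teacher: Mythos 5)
Your proposal is correct and follows essentially the same route as the paper: Bayes' rule (or equivalently, the paper writes it as a ratio of joint to marginal), with the numerator factored via Lemma~\ref{lem:1} and the channel memorylessness, and the denominator reduced to $\pi_{Y|Q}(y_i\mid q_{i-1})$ by summing $\ts_i(\tilde\sfJ)\,p_{Y|X}(y_i\mid x_i(\tilde\sfJ))$ over $\tilde\sfJ\in\tcJ_i$ and using the fact that intervals with $x_i(\tilde\sfJ)=1$ form a partition of $\bigl[0,\pi_{X,X^-|Q}(1,0\mid q_{i-1})\bigr)$. One small bookkeeping note: since Lemma~\ref{lem:3} is itself proved by induction using Lemma~\ref{lem:2}, the partition fact you invoke should be understood as the inductive hypothesis (Lemma~\ref{lem:3} at time $i$), as the paper does by appealing to ``the construction of $\tcJ_i$ from $\cJ_i$'', so the two lemmas are really being established jointly by induction on $i$.
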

\begin{proof} We start with
\begin{align*}
\Pr[J_{i+1} = \sfJ \mid Y^{i} = y^{i}, U^i = u^i]
 &= \frac{\Pr[J_{i+1} = \sfJ, Y_i = y_i \mid Y^{i-1} = y^{i-1}, U^i = u^i]}{\Pr[Y_i = y_i \mid Y^{i-1} = y^{i-1}, U^i = u^i]} \\
 &\stackrel{\text{(a)}}{=} \ts_i(\sfJ) \, \frac{ \Pr[Y_i = y_i \mid J_{i+1} = \sfJ, Y^{i-1} = y^{i-1}, U^i = u^i]}{\Pr[Y_i = y_i \mid Y^{i-1} = y^{i-1}, U^i = u^i]} \\
 &\stackrel{\text{(b)}}{=} \ts_i(\sfJ) \, \frac{\Pr[Y_i = y_i \mid X_i = x_i(\sfJ)]}{\Pr[Y_i = y_i \mid Y^{i-1} = y^{i-1}, U^i = u^i]},
\end{align*}
where (a) is by Lemma~\ref{lem:1}, and (b) is due to the fact that if $J_{i+1}=\sfJ$, then the bit transmitted at time instant $i$ must have been $x_{i+1}^-(\sfJ) = x_i(\sfJ)$.

The denominator on the right-hand-side above can be expressed as
\begin{align}
\Pr[Y_i = y_i \mid Y^{i-1} = y^{i-1}, U^i = u^i]
 &= \sum_{\tilde{\sfJ} \in \cJ_{i+1}} \Pr[J_{i+1} = \tilde{\sfJ}, Y_i = y_i \mid Y^{i-1} = y^{i-1}, U^i = u^i]  \notag \\
 &= \sum_{\tilde{\sfJ} \in \tcJ_{i}} \ts_i(\tilde{\sfJ}) \, \Pr[Y_i = y_i \mid X_i = x_i(\tilde{\sfJ})] \label{eq:denom}
\end{align}
Now, the last expression above can be written as
$$\left(\sum_{\tilde{\sfJ} \in \tcJ_i: \ x_i(\tilde{\sfJ}) = 1} \ts_i(\tilde{\sfJ})\right) \Pr[Y_i = y_i \mid X_i = 1]  + \left(\sum_{\tilde{\sfJ} \in \tcJ_i: \ x_i(\tilde{\sfJ}) = 0} \ts_i(\sfJ)\right) \Pr[Y_i = y_i \mid X_i = 0].$$
By construction of $\tcJ_i$ from $\cJ_i$, the message intervals $\tilde{\sfJ} \in \tcJ_i$ with $x_i(\tilde{\sfJ}) = 1$ form a partition of $[0,\pi_{X,X^-|Q}(1|0,q_{i-1}))$, so that
$$
\sum_{\tilde{\sfJ} \in \tcJ_i: \ x_i(\tilde{\sfJ}) = 1} \ts_i(\tilde{\sfJ}) = \pi_{X,X^-|Q}(1,0|q_{i-1})
$$
and
\begin{align*}
\sum_{\tilde{\sfJ} \in \tcJ_i: \ x_i(\tilde{\sfJ}) = 0} \ts_i(\tilde{\sfJ})
&= 1- \sum_{\tilde{\sfJ} \in \tcJ_i: \ x_i(\tilde{\sfJ}) = 1} \ts_i(\tilde{\sfJ}) \\
&= 1 - \pi_{X,X^-|Q}(1,0|q_{i-1}).
\end{align*}
Thus, \eqref{eq:denom} simplifies to
\begin{align*}
\Pr[Y_i = y_i & \mid Y^{i-1} = y^{i-1}, U^i = u^i] \\
& = \ \pi_{X,X^-|Q}(1,0|q_{i-1}) \, p_{Y|X}(y_i|1) + \left(1-\pi_{X,X^-|Q}(1,0|q_{i-1})\right) \, p_{Y|X}(y_i|0) \\
& = \sum_{x,x^-} \pi_{X,X^-|Q}(x,x^-|q_{i-1}) \, p_{Y|X}(y_i|x),
\end{align*}
recalling that $\pi_{X,X^-|Q}(x,x^-|q_{i-1}) = 0$ for $(x,x^-) = (1,1)$.
\end{proof}

\medskip

One simple consequence of Lemma~\ref{lem:2} is that for each $\sfJ \in \cJ_{i+1}$, we have
$$
s_{i+1}(\sfJ) \le s_i(\sfJ^{\#}) \, \max_{x,y,q} \frac{p_{Y|X}(y|x)}{\pi_{Y|Q}(y | q)},
$$
where $\sfJ^{\#} \in \cJ_i$ is the parent of $\sfJ$. Recursively applying this inequality, we obtain \begin{equation}
s_{i+1}(\sfJ) \le 2^{-nR} \, {\left(\max_{x,y,q} \frac{p_{Y|X}(y|x)}{\pi_{Y|Q}(y | q)}\right)}^i,
\label{s_upbnd}
\end{equation}
which is a crude, but useful, upper bound on interval lengths in $\cJ_{i+1}$.

A second consequence of Lemma~\ref{lem:2} is the fact, crucial for our description of cyclic shifting and message-interval splitting, that the message intervals $\sfJ \in \cJ_i$ with history bit $x_i^-(\sfJ) = 0$ form a partition of $[0,\pi_{X^-|Q}(0|q_{i-1}))$. This follows from the next lemma.

\begin{lemma}
For any $i \ge 1$, we have
$$
\sum_{\sfJ \in \cJ_i: \ x_i^-(\sfJ) = 0} s_i(\sfJ) = \pi_{X^- | Q}(0 \mid q_{i-1}).
$$
\label{lem:3}
\end{lemma}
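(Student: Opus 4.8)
The plan is to recast the claim as a statement about posterior probabilities and prove it by induction. By \eqref{eq:si}, $\sum_{\sfJ \in \cJ_i:\, x_i^-(\sfJ) = 0} s_i(\sfJ) = \Pr[x_i^-(J_i) = 0 \mid Y^{i-1} = y^{i-1}, U^{i-1} = u^{i-1}]$, and the recursive construction forces $x_i^-(J_i)$ to equal the bit $X_{i-1}$ actually transmitted at time $i-1$. Hence it suffices to prove, by induction on $j \ge 0$, that
$$
\Pr[X_j = x \mid Y^j = y^j, U^j = u^j] = \pi_{X^-|Q}\bigl(x \mid \Phi(q_0,y^j)\bigr), \qquad x \in \{0,1\},
$$
for every realization $(y^j, u^j)$; the lemma is the $x = 0$ case at $j = i-1$. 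The base case $j = 0$ is immediate, since the construction of $\cJ_1$ makes $(X_0, Q_0) \sim \pi_{X^-,Q}$ with $Q_0 = q_0$.

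For the inductive step $j - 1 \to j$, write $q_{j-1} = \Phi(q_0, y^{j-1})$. Applying the reduction above one step earlier, the induction hypothesis at $j-1$ says exactly that the history-bit-$0$ intervals of $\cJ_j$ have lengths summing to $\pi_{X^-|Q}(0 | q_{j-1})$ --- this is Lemma~\ref{lem:3} at index $j$ --- so by the positioning rule \eqref{eq:ti} they tile $[0, \pi_{X^-|Q}(0 | q_{j-1}))$. The cyclic shift by $u_j(q_{j-1})$ is a measure-preserving bijection of this interval, and the labeling boundary $\pi_{X,X^-|Q}(1,0 | q_{j-1})$ lies inside it; so, after the encoder's private randomness $V_j$ is averaged out, $X_j = 1$ occurs with conditional probability exactly $\pi_{X,X^-|Q}(1,0 | q_{j-1}) = \pi_{X|Q}(1 | q_{j-1})$, i.e.\ $\Pr[X_j = x \mid Y^{j-1} = y^{j-1}, U^j = u^j] = \pi_{X|Q}(x | q_{j-1})$. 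A Bayes update using the memorylessness of the channel then yields $\Pr[X_j = x \mid Y^j = y^j, U^j = u^j] = \pi_{X|Q}(x|q_{j-1})\, p_{Y|X}(y_j|x)\, /\, \pi_{Y|Q}(y_j|q_{j-1})$, which I abbreviate $\pi_{X|Q,Y}(x | q_{j-1}, y_j)$, with $\pi_{Y|Q}(y|q) := \sum_{x'} \pi_{X|Q}(x'|q)\, p_{Y|X}(y|x')$ (positive for every $Q$-graph edge used with positive probability; the rest never occur).

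What is left, and what I expect to be the main obstacle, is the \emph{graph identity} $\pi_{X|Q,Y}(x \mid q, y) = \pi_{X^-|Q}(x \mid g(q,y))$, valid for each edge $(q,y)$ of the $Q$-graph of Fig.~\ref{fig:BC_optimal_input}; applied at $(q,y) = (q_{j-1}, y_j)$, where $g(q_{j-1},y_j) = \Phi(q_0,y^j)$, it closes the induction. This is the one place where the specific optimal distribution \eqref{eq:transfer_matrices}--\eqref{eq:def_Zi} enters, and I would verify it by a finite computation on the four nodes, writing $z_r = \pi_{X^-|Q}(0 | r)$. Since the $X^- = 1$ row of every matrix in \eqref{eq:transfer_matrices} is $[1\ 0]$, we get $\pi_{X|Q}(1 | q) = z_q\, p^*_{X|X^-,Q}(1|0,q)$, equal to $z_1$ at node $1$ and to $z_2$ at nodes $2, 3, 4$; reading off $g$, all in-neighbours of any fixed node therefore share the same input marginal, so $\pi_{X|Q,Y}(\cdot | q,y)$ depends on $(q,y)$ only through $g(q,y)$. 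One then checks node by node that the posterior equals $z_r$ at the target node $r$: e.g.\ node $1$ is reached via $y = 1$ from nodes $2,3,4$ (input marginal $\Pr[X=1] = z_2$), and $\Pr[X = 0 \mid Y = 1] = \bar z_2 \alpha / (\bar z_2 \alpha + z_2 \bar\beta)$, which is $z_1$ by the first line of \eqref{eq:def_Zi}; node $3$ (via $y=0$ from the same nodes) gives $\bar z_2 \bar\alpha / (\bar z_2 \bar\alpha + z_2 \beta) = z_3$; and nodes $2, 4$ (reached from node $1$, marginal $\Pr[X=1] = z_1$) give $z_2$ and $z_4$ after substituting $z_1 = \alpha\bar z_2 / (\alpha\bar z_2 + \bar\beta z_2)$. (An equivalent but more economical route avoids the posterior induction altogether: by Lemma~\ref{lem:2}, $\sum_{\sfJ \in \cJ_{i+1}:\, x_{i+1}^-(\sfJ) = 0} s_{i+1}(\sfJ) = \bigl(p_{Y|X}(y_i|0)/\pi_{Y|Q}(y_i|q_{i-1})\bigr)\sum_{\sfJ \in \tcJ_i:\, x_i(\sfJ)=0}\ts_i(\sfJ)$, and the inner sum is $\pi_{X|Q}(0|q_{i-1})$ --- computed within the proof of Lemma~\ref{lem:2} --- so the total is $\pi_{X|Q,Y}(0 | q_{i-1},y_i)$, to which one again applies the graph identity. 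Either way the argument remains an induction on $i$, since the construction of $\tcJ_i$ and the validity of Lemma~\ref{lem:2} at index $i$ already presuppose Lemma~\ref{lem:3} at index $i-1$.)
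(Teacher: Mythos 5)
Your proof is correct. The parenthetical ``economical route'' at the end is, in fact, the paper's proof: induct on $i$, use Lemma~\ref{lem:2} to propagate the sum of lengths forward, use the induction hypothesis to evaluate $\sum_{\sfJ \in \tcJ_i: x_i(\sfJ)=0}\ts_i(\sfJ) = 1-\pi_{X,X^-|Q}(1,0\mid q_{i-1}) = \pi_{X|Q}(0\mid q_{i-1})$, and then verify that the Bayes-updated quantity $p_{Y|X}(y_i\mid 0)\,\pi_{X|Q}(0\mid q_{i-1})/\pi_{Y|Q}(y_i\mid q_{i-1})$ equals $\pi_{X^-|Q}(0\mid q_i)$ by direct substitution of the optimal-distribution formulas.

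Your primary route is a modest reframing of the same argument rather than a genuinely different one: you translate the sum of interval lengths into the posterior $\Pr[X_{i-1}=0 \mid Y^{i-1},U^{i-1}]$, re-derive that the encoder's cyclic-shift construction makes $\Pr[X_j=x \mid Y^{j-1},U^j] = \pi_{X|Q}(x\mid q_{j-1})$ (which the paper establishes operationally in a remark in Section~\ref{sec:phase1}), and then apply Bayes. Both formulations bottom out at the same finite check --- your ``graph identity'' $\pi_{X|Q,Y}(x\mid q,y) = \pi_{X^-|Q}(x\mid g(q,y))$ is exactly the paper's step~(b), which it leaves to the reader to ``verify directly.'' Your node-by-node verification is correct and is a welcome addition: the equalities $\pi_{X|Q}(1\mid 1)=z_1$, $\pi_{X|Q}(1\mid q)=z_2$ for $q\in\{2,3,4\}$, and the posteriors matching $z_{g(q,y)}$ all check out against \eqref{eq:def_Zi}. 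Two trivial imprecisions: the labeling boundary $\pi_{X,X^-|Q}(1,0\mid q)$ coincides with the right endpoint $\pi_{X^-|Q}(0\mid q)$ for $q\in\{1,2\}$ (where $p^*_{X|X^-,Q}(0|0,q)=0$) rather than lying strictly inside, which is harmless; and the final sentence should say Lemma~\ref{lem:3} at index $i$ (not $i-1$) is what the construction of $\tcJ_i$ and Lemma~\ref{lem:2} presuppose.
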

\begin{proof}
The proof is by induction on $i$. By construction, the statement is true for $i = 1$. So, suppose that it holds for some $i \ge 1$. We then consider $\sum_{\sfJ \in \cJ_{i+1}:\  x_{i+1}^-(\sfJ) = 0} s_{i+1}(\sfJ)$. By Lemma~\ref{lem:2}, we have
\begin{align*}
\sum_{\sfJ \in \cJ_{i+1}: \ x_{i+1}^-(\sfJ) = 0} s_{i+1}(\sfJ) &= \sum_{\sfJ \in \tcJ_i: \ x_i(\sfJ) = 0} \ts_i(\sfJ) \, \frac{p_{Y|X}(y_i \mid x_i(\sfJ))}{\pi_{Y|Q}(y_i \mid q_{i-1})} \\
&= \frac{p_{Y|X}(y_i \mid 0)}{\pi_{Y|Q}(y_i \mid q_{i-1})} \, \left(\sum_{\sfJ \in \tcJ_i: \ x_i(\sfJ) = 0} \ts_i(\sfJ)\right) \\
&\stackrel{\text{(a)}}{=} \frac{p_{Y|X}(y_i \mid 0)}{\pi_{Y|Q}(y_i \mid q_{i-1})} \, \left(1-\pi_{X,X^-|Q}(1,0 \mid q_{i-1})\right) \\
&= \frac{\sum_{x^-} p_{Y|X}(y_i \mid 0) \, \pi_{X,X^-|Q}(0,x^- \mid q_{i-1})}{\sum_{x,x^-} p_{Y|X}(y_i|x) \, \pi_{X,X^-|Q}(x,x^- \mid q_{i-1})}  \\
&= \frac{\sum_{x^-} p_{Y|X}(y_i \mid 0) \, p^*_{X|X^-,Q}(0 \mid x^-, q_{i-1}) \, \pi_{X^-|Q}(x^-|q_{i-1})}{\sum_{x,x^-} p_{Y|X}(y_i|x) \, p^*_{X|X^-,Q}(x|x^-,q_{i-1}) \, \pi_{X^-|Q}(x^-|q_{i-1})} \\
&\stackrel{\text{(b)}}{=} \pi_{X^-|Q}(0 \mid q_i),
\end{align*}
where (b) can be verified directly from the known expressions for $p_{Y|X}$, $p^*_{X|X^-,Q}$ and $\pi_{X^-|Q}$. The induction hypothesis has been used in (a) above to validate the construction of $\tcJ_i$ from $\cJ_i$ via message splitting, which yields the fact that $\sum_{\sfJ \in \tcJ_i: x_i(\sfJ) = 0} \ts_i(\sfJ) = 1-\pi_{X,X^-|Q}(1,0 \mid q_{i-1})$.
\end{proof}

\begin{remark}
The PMS construction described in this section and its analysis to come in the next section are also valid for non-optimal input distributions and for a broader class of channels called unifilar finite state channels (with feedback) \cite{PermuterCuffVanRoyWeissman08}, provided that one condition is met. This condition, which may easily go unnoticed within the details of the PMS construction, is that of Lemma~\ref{lem:3}: the sum of the lengths of all the messages intervals with the same history bit $x_i^-$ is a function of the $Q$-state $q_{i-1}$ only. This condition is essentially equivalent to the BCJR-invariant property that was introduced in \cite[Section III.A]{Sabag_UB_IT}. We emphasize that this property is immediately satisfied when an input distribution satisfies the Bellman equation in the corresponding DP problem. However, this property can also be verified directly, as has been done in Lemma \ref{lem:3}.
\end{remark}

\section{PMS Analysis: Proof of Theorem~\ref{theorem:part_1}}\label{sec:proof_scheme}

The statement to be proved concerns the probability of $s_{n-\zeta_n}(J_{n-\zeta_n}) < \xi$, conditioned on the occurrence of the event $s_i(J_i) \le S_{\min}, \ i = 1,2,\ldots,n-\zeta_n-1$. Thus, throughout this section, we assume that $s_i(J_i) \le S_{\min}, \ i = 1,2,\ldots,n-\zeta_n-1$, holds. All probabilities and expectations in this section are implicitly conditioned on this event. This results in a simplified analysis of the PMS in Phase~I, since, under this assumption, the true message interval $J_i$ cannot split into more than two children at any point of the PMS, as is evident from the description of message-interval splitting in Section~\ref{sec:recursion}.

Since this proof is concerned only with the sequence of true message intervals $J_i$, $i = 1,2,3,\ldots$, we will use some simplified notation: $S_{i+1} = s_{i+1}(J_{i+1})$, $\tilde{S}_{i} = \ts_{i}(J_{i+1})$, $X_i = x_i(J_i)$, $X_i^- = x_i^-(J_i) = x_{i-1}(J_{i-1})$.

\subsection{Preliminaries}
Define for $\rho\in[0,1)$,
\begin{align}\label{eq:def_conditional_phi_NEW}
  \phi_{q,x,q^+}(\rho)&\triangleq  \left( \frac{p_{Y|X}(y|x)}{\pi_{Y|Q}(y|q)}\right)^{-\rho},
\end{align}
where $y$ is the unique solution to the equation $q^+=g(q,y)$. We also use $\phi_{(x,q)^{i}_{i-1}}(\rho)$ as a shorthand for $\phi_{q_{i-1},x_i,q_i}(\rho)$. Throughout the analysis, it is assumed that $p_{Y|X}(y|x)>0$, otherwise, define $\phi_{q,x,q^+}(\rho)=0$ and the derivations can be easily repeated.

Define also
$$
   \psi^s_{x^-,q,x,q^+}(\rho) \triangleq  \E[(S_{i+1}/S_{i})^{-\rho} \mid S_{i}=s, X^-_{i}=x^-,Q_{i-1}=q,X_{i}=x,Q_{i}=q^+].
$$
Indeed, there ought also be a time index in the notation $\psi$, but it will be shown in the proof of Lemma \ref{lemma:bounds_NEW} below that the expected value does not depend on the time index $i$.

\subsection{Analysis}
The following lemma comprises the core of our PMS analysis:
\begin{lemma}\label{lemma:bounds_NEW}
For all $\delta>0$, there exists $s^\ast(\delta)$ such that
\begin{align*}
  \psi^s_{x^-,q,x,q^+}(\rho)&\leq \phi_{q,x,q^+}(\rho)2^\delta,
\end{align*}
for all $s\leq s^\ast(\delta)$, $0\leq \rho<1$, and all $(x^-,q,x,q^+)$.
\end{lemma}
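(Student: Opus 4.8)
The plan is to write $\psi^s_{x^-,q,x,q^+}(\rho)$ as the product of the deterministic quantity $\phi_{q,x,q^+}(\rho)$ and a ``splitting correction'' that can be bounded by $2^\delta$ once $s$ is small. First I would note that conditioning on both $Q_{i-1}=q$ and $Q_i=q^+$ forces $y_i$ to be the unique $y$ with $q^+=g(q,y)$, and that, by Lemma~\ref{lem:2} applied to $\sfJ=J_{i+1}$ (recalling $x_{i+1}^-(J_{i+1})=X_i$ and $\tilde{S}_i=\ts_i(J_{i+1})$), we then have $S_{i+1}=\tilde{S}_i\cdot p_{Y|X}(y_i\mid X_i)/\pi_{Y|Q}(y_i\mid q_{i-1})$. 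Hence $(S_{i+1}/S_i)^{-\rho}=(\tilde{S}_i/s)^{-\rho}\,\phi_{q,x,q^+}(\rho)$ under this conditioning, so that
\[
\psi^s_{x^-,q,x,q^+}(\rho)=\phi_{q,x,q^+}(\rho)\cdot\E\bigl[(\tilde{S}_i/s)^{-\rho}\,\big|\,S_i=s,\,X_i^-=x^-,\,Q_{i-1}=q,\,X_i=x,\,Q_i=q^+\bigr].
\]
Next I would drop the conditioning on $Q_i$ in the remaining expectation: given $\cJ_i$ and $q_{i-1}$, both $\tilde{S}_i$ and $X_i$ are functions of the randomizers $(U_i,V_i)$, whereas $Y_i$ — and hence $Q_i=g(q_{i-1},Y_i)$ — is, by the memorylessness of the channel, independent of $(U_i,V_i)$ given $X_i$. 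The same observations, together with the fact that the cyclic shift by $U_i(q_{i-1})$ places the shifted interval $\sfJ^{u_i}$ uniformly at random within $[0,\pi_{X^-|Q}(0\mid q_{i-1}))$, regardless of the position of $J_i$ or of the configuration of the other intervals, show that the conditional law of $\tilde{S}_i/s$ depends only on $(s,x^-,q,x)$; this is why the time index can be suppressed in $\psi$.

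It then remains to bound $E(s):=\E[(\tilde{S}_i/s)^{-\rho}\mid S_i=s,\,X_i^-=x^-,\,Q_{i-1}=q,\,X_i=x]$ by $2^\delta$. If $x^-=1$, the true interval lies in $[\pi_{X^-|Q}(0\mid q_{i-1}),1)\subseteq\cL_{q_{i-1}}^{-1}(0)$ and is retained without splitting, so $\tilde{S}_i=s$ and $E(s)=1$. If $x^-=0$, set $c=\pi_{X^-|Q}(0\mid q)$ and $b=\pi_{X,X^-|Q}(1,0\mid q)$, and view $[0,c)$ — on which the history-bit-$0$ intervals lie by Lemma~\ref{lem:3} — as a circle of circumference $c$ carrying the two boundary points $0$ and $b$ of the labeling $\cL_q$. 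Because $s\le S_{\min}$, which by \eqref{eq:Smin} does not exceed either positive arc length $b$ or $c-b$, the uniformly placed length-$s$ arc $\sfJ^{u_i}$ straddles at most one boundary point: it straddles $0$ with probability $s/c$, straddles $b$ with probability $s/c$, and otherwise lies within a single arc. When a split does occur, one child has length $L\sim\text{Unif}(0,s)$ and its sibling has length $s-L$; by \eqref{eq:xi} the child carrying the label $x$ becomes $J_{i+1}$ with probability equal to its length divided by $s$, and $\tilde{S}_i$ is then the length of that child.

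Evaluating $\E[(\tilde{S}_i/s)^{-\rho}\mathbbm{1}_{\{X_i=x\}}]$ by conditioning on the (disjoint) no-split and two split events, the no-split event contributes $\Pr[X_i=x]-s/c$ (since there $\tilde{S}_i=s$), while each of the two split events contributes $\frac{s}{c}\int_0^1 t^{1-\rho}\,dt=\frac{s}{c(2-\rho)}$ — the key point being that the size-biasing inherent in selecting the child proportionally to its length turns the potentially divergent $t^{-\rho}$ into the integrable $t^{1-\rho}$. Dividing by $\Pr[X_i=x]=p^*_{X|X^-,Q}(x\mid 0,q)$, which equals $b/c$ for $x=1$ and $(c-b)/c$ for $x=0$, then gives $E(s)=1+\frac{s\rho}{\ell(2-\rho)}$, with $\ell=b$ if $x=1$ and $\ell=c-b$ if $x=0$, in the non-degenerate regime $0<b<c$; in the degenerate regime (one of the two arcs empty, as happens e.g.\ at $q\in\{1,2\}$) the conditioning event either has probability zero or forces no split, so $E(s)=1$.

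Finally, since $\rho/(2-\rho)<1$ for $\rho\in[0,1)$ and $\ell\ge S_{\min}$ whenever $\ell>0$, we get $E(s)\le 1+s/S_{\min}$ in all cases. Taking $s^\ast(\delta):=\min\{S_{\min},(2^\delta-1)S_{\min}\}$ ensures both that $s\le s^\ast(\delta)$ keeps us in the at-most-two-children regime and that $E(s)\le 1+s^\ast(\delta)/S_{\min}\le 2^\delta$; hence $\psi^s_{x^-,q,x,q^+}(\rho)=\phi_{q,x,q^+}(\rho)\,E(s)\le\phi_{q,x,q^+}(\rho)\,2^\delta$, as desired. I expect the main obstacle to be the bookkeeping in the two middle paragraphs: fixing the straddle probabilities under the cyclic shift and — the more delicate part — correctly coupling the $V_i$-driven choice of which child becomes the next true interval (proportional to child length) with the resulting value of $\tilde{S}_i$, since it is precisely this size-biasing that keeps the conditional moment finite and, for small $s$, close to $1$.
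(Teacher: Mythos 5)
Your proof is correct and follows essentially the same route as the paper's: you factor $\psi$ into $\phi$ times a ``splitting correction'' via Lemma~\ref{lem:2}, drop the conditioning on $Q_i$ using the same conditional-independence argument the paper encodes in the Markov chain $Q_i-(X_i,Q_{i-1})-(S_i,X_i^-,\tau_i,V_i)$, use the uniformity of $\tau_i$ over $[0,\pi_{X^-|Q}(0\mid q))$ to both suppress the time index and split the expectation into a no-split piece plus two boundary-straddle pieces, and compute the size-biased moment $\frac{1}{s}\int_0^s (w/s)^{1-\rho}\,dw=\frac{1}{2-\rho}$ exactly as the paper does over its intervals $I_2,I_3$. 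Your write-up is in fact slightly tighter in two spots the paper glosses over: you make the $\rho$-dependence of the correction explicit and bound $\rho/(2-\rho)<1$ so that $s^\ast(\delta)$ can be chosen uniformly in $\rho$ (the paper's $\delta_{q,x}(s)$ tacitly depends on $\rho$), and you dispose of the degenerate $Q$-states $q\in\{1,2\}$ (where $\pi_{X,X^-|Q}(0,0\mid q)=0$) explicitly, which the paper's interval decomposition implicitly assumes away.
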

\begin{proof}
In this proof we show that $\psi^s_{x^-,q,x,q^+}(\rho)$ can be made arbitrarily close to $\phi_{q,x,q^+}(\rho)$ if we take $s$ to be small enough. From Lemma \ref{lem:2}, we have
\begin{align}\label{eq:PSI_expected}
  \psi^s_{x^-,q,x,q^+}(\rho) &=  \E[ S_{i+1}^{-\rho}/S_{i}^{-\rho} \mid S_{i}=s, X^-_{i}=x^-,Q_{i-1}=q,X_{i}=x,Q_{i}=q^+]\nn\\
  &= \left(\frac{p_{Y|X}(y|x)}{\pi_{Y|Q}(y|q)}\right)^{-\rho}\E[\tilde{S}_{i}^{-\rho}/S_{i}^{-\rho} \mid S_{i}=s,X^-_{i}=x^-,Q_{i-1}=q,X_{i}=x,Q_{i}=q^+]\nn\\
  &= \phi_{q,x,q^+}(\rho) \E[\tilde{S}_{i}^{-\rho}/S_{i}^{-\rho} \mid S_{i}=s, X^-_{i}=x^-,Q_{i-1}=q,X_{i}=x, Q_{i}=q^+],
\end{align}
where in the second equality above, $y$ is the unique solution of $q^+=g(q,y)$.
Our interest is in showing an upper bound on the expected value in \eqref{eq:PSI_expected}. The simpler case is when $X^-_{i}=1$, since there is no split (i.e., $\tilde{S}_{i}=S_{i}$), so that $\psi^s_{1,q,x,q^+}(\rho)=\phi_{q,x,q^+}(\rho)$.

To deal with the other case, $X^-_{i}=0$, we need the conditional probability density function $f_{\tau_{i},V_{i}\mid S_{i}, Q_{i-1},X^-_{i},X_{i},Q_{i}}(u,v \mid s,q,0,x,q^+)$, where $\tau_{i} \triangleq t_{i}(J_{i}) + U_{i}(q) \!\! \mod \pi_{X^-|Q}(0|q)$ denotes the left end-point of the message interval $J_{i-1}$ after cyclic shifting by $U_{i}(q)$. Observe that
\begin{align}\label{eq:density_UV}
  & f_{\tau_{i},V_{i}\mid S_{i}, Q_{i-1},X^-_{i},X_{i},Q_{i}}(u,v \mid s,q,0,x,q^+) \nn \\
  &\ \ \ \ \ = \ \frac{\Pr[X_i=x,Q_{i}=q^+ \mid S_{i}=s, X^-_{i}=0,Q_{i-1}=q,\tau_{i}=u,V_{i}=v]}{\Pr[X_{i}=x,Q_i=q^+ \mid S_{i}=s,X^-_{i}=0,Q_{i-1}=q]}f_{\tau_{i},V_{i}}(u,v) \nn\\
  &\ \ \ \ \ \stackrel{(a)}= \ \frac{\Pr[X_{i}=x \mid S_{i}=s,X^-_{i}=0,Q_{i-1}=q,\tau_{i}=u,V_{i}=v]}{\Pr[X_i=x \mid S_{i}=s,X^-_{i}=0,Q_{i-1}=q]}f_{U_{i}(q),V_{i}}(u,v)\nn\\
  &\ \ \ \ \ \stackrel{(b)}= \ \frac{\Pr[X_{i}=x \mid S_{i}=s,X^-_{i}=0,Q_{i-1}=q,\tau_{i}=u,V_{i}=v]}{\Pr[X_i=x \mid X^-_{i}=0,Q_{i-1}=q]}f_{U_{i}(q),V_{i}}(u,v)\nn\\
  &\ \ \ \ \ \stackrel{(c)}= \ \frac{\Pr[X_{i}=x \mid S_{i}=s,X^-_{i}=0,Q_{i-1}=q,\tau_{i}=u,V_{i}=v]}{P_{x,0,q}}f_{U,V}\left(\frac{u}{\pi_{X^-|Q}(0|q)},v\right),
\end{align}
where $(a)$ follows from the Markov chain $Q_i-(X_i,Q_{i-1})-(S_{i},X^-_{i},\tau_{i},V_{i})$ and the fact that $\tau_{i}$ is distributed uniformly on $[0,\pi_{X^-|Q}(0|q)]$, $(b)$ follows from the Markov chain $X_{i}-(X^-_{i},Q_{i-1})-S_{i}$ and $(c)$ is due to replacement of the random variable $U_{i}(q)$ with $U_{i}$, and the notation $P_{x,0,q}\triangleq \pi_{X,X^-|Q}(x,0|q)$. The Markov chain $Q_i-(X_i,Q_{i-1})-(S_{i},X^-_{i},\tau_{i},V_{i})$ follows from the fact that $Q_i$ is a function of $(Y_i,Q_{i-1})$ and the memoryless property. The second Markov chain $X_{i}-(X^-_{i},Q_{i-1})-S_{i}$ is shown in two steps: first, if $X^-_{i}=1$, then $X_i=0$. For the other case, $X^-_{i}=0$, note that $W_{i}$, given $X^-_{i}=0$ and $Q_{i-1}=q$, is distributed uniformly on $[0,\pi_{X^-|Q}(0|q)]$, and $X_i$ is a function of $W_i$.

Note that the numerator of \eqref{eq:density_UV} is an indicator function of the event that the point $(u + sv)\bmod \pi_{X^-|Q}(0|q)$ is mapped to $X=x$. Also, we have $f_{U,V}\left(\frac{u}{\pi_{X^-|Q}(0|q)},v\right)=1$ for all $(u,v)\in[0,\pi_{X^-|Q}(0|q)]\times[0,1]$. Thus, the density $f_{\tau_{i},V_{i}\mid S_{i}, Q_{i-1},X^-_{i},X_{i},Q_{i}}$ does not depend on the time index $i$. Since the expected value in \eqref{eq:PSI_expected} is determined from this density function, the time index can be omitted from the notation of $\psi^s_{x^-,q,x,q^+}(\rho)$.

We begin with the calculation of the expected value in \eqref{eq:PSI_expected} for the case $X_i=0$:
\begin{align}\label{eq:expected_value}
   \E & [\tilde{S}_{i}^{-\rho}/S_{i}^{-\rho} \mid S_{i}=s, X^-_{i}=0, Q_{i-1}=q, X_{i}=0, Q_{i}=q^+]\nn\\
   &\stackrel{(a)}= s^\rho \int_{[0,\pi_{X^-|Q}(0|q)]} \int_{[0,1]} f_{\tau_{i},V_{i} \mid S_{i},X^-_{i},Q_{i-1},X_{i},Q_{i}}(u,v \mid s,0,q,0,q^+) \tilde{s}(q,u,s,0)^{-\rho} dv du\nn\\
   &\stackrel{(b)}= \frac{s^\rho}{P_{0,0,q}} \int_{I_1\cup I_2\cup I_3} \int_{[0,1]} \Pr[X_{i}=0 \mid S_{i}=s,X^-_{i}=0,Q_{i-1}=q,\tau_{i}=u,V_{i}=v] \tilde{s}(q,u,s,0)^{-\rho} dv du,
\end{align}
where
\begin{itemize}
  \item[$(a)$] follows by defining $\tilde{s}(q,u,s,0)$ to be the length of the new true message interval after a possible split, which is a function of $(Q_{i-1}=q,\tau_{i} = u, S_{i}=s,X_i = 0)$ only; and
  \item[$(b)$] follows from substituting \eqref{eq:density_UV} and restricting the integration over $u$ to the intervals where $\Pr[X_{i}=0 \mid S_{i}=s,X^-_{i}=0,Q_{i-1}=q,\tau_{i} = u,V_{i}=v]=1$ for some $v \in [0,1]$, i.e., the intervals
\begin{align}\label{eq:intervals}
I_1 &\triangleq \left[\pi_{X,X^-|Q}(1,0|q), \pi_{X^-|Q}(0|q)-s\right]\nn\\
I_2 &\triangleq \left[\pi_{X,X^-|Q}(1,0|q)-s, \pi_{X,X^-|Q}(1,0|q)\right]\nn\\
I_3 &\triangleq \left[\pi_{X^-|Q}(0|q)-s, \pi_{X^-|Q}(0|q) \right],
\end{align}
which are illustrated in Fig. \ref{fig:Navin}.
\end{itemize}

\begin{figure}[!t]
\centering
        \psfrag{V}[][][1]{$v$}
        \psfrag{U}[][][1]{$u$}
        \psfrag{S}[][][0.75]{$I_2$}
        \psfrag{T}[][][0.75]{$I_3$}
        \psfrag{D}[][][0.75]{$I_1$}
        \psfrag{O}[][][.6]{$P_1 - s$}
        \psfrag{P}[][][.6]{$P_1$}
        \psfrag{K}[][][.6]{$P_2-s$}
        \psfrag{L}[][][.6]{$P_2$}
        \psfrag{W}[][][.6]{$1$}
        \centerline{\includegraphics[scale=0.3]{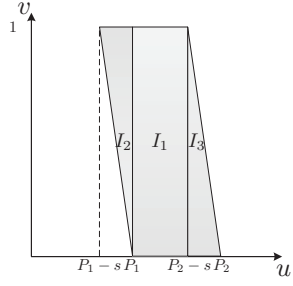}}
\caption{Illustration of the intervals in \eqref{eq:intervals}. We use $P_1 \triangleq  \pi_{X,X^-|Q}(1,0|q)$, $P_2 \triangleq \pi_{X^-|Q}(0|q)$. The shaded area corresponds to values of $u$ and $v$ for which  $\Pr[X_{i}=0 \mid S_{i}=s,X^-_{i}=0,Q_{i-1}=q,\tau_{i}=u,V_{i}=v]=1$.}
\label{fig:Navin}
\end{figure}

For all $u \in I_1$, note that $\tilde{s}(q,u,s,0) = s$, so
\begin{align*}
&\frac{s^\rho}{P_{0,0,q}} \int_{I_1} \int_{[0,1]} \Pr[X_{i}=0 \mid S_{i}=s,X_{i-1}=0,Q_{i-1}=q,\tau_i=u,V_i=v] \tilde{s}(q,u,s,0)^{-\rho} dv du\\
&= \frac1{P_{0,0,q}} |I_1|\\
&= \frac{P_{0,0,q}-s}{P_{0,0,q}}.
\end{align*}

For the interval $I_2$, note from Fig. \ref{fig:Navin} that $\Pr[X_{i}=0|S_{i}=s,X_{i-1}=0,Q_{i-1}=q,\tau_i=u,V_i=v]=1$ only when $v\in\left[\frac{-u+P_{0,0,q}}{s},1\right]$. Therefore,
\begin{align*}
&\frac{s^\rho}{P_{0,0,q}} \int_{I_2} \int_{[0,1]} \Pr[X_{i}=0|S_{i}=s,X_{i-1}=0,Q_{i-1}=q,\tau_i=u,V_i=v] \tilde{s}(q,u,s,0)^{-\rho} dv du\\
&\stackrel{(a)}= \frac{s^\rho}{P_{0,0,q}}\frac1{s} \int_{I_2} \left(u-P_{0,0,q}+s\right) \left(u-P_{0,0,q}+s\right)^{-\rho} du\\
&= \frac{s^\rho}{P_{0,0,q}} \frac1{s} \int_{[0,s]} u^{-\rho+1} du\\
&= \frac1{P_{0,0,q}} \frac{s}{(-\rho+2)},
\end{align*}
where $(a)$ follows from $\tilde{s}(q,u,s,0) = u-P_{0,0,q}+s$. The calculation for the third interval, $I_3$, is similar to that for $I_2$, and results in the same value for the integral.

To conclude, we have shown that
\begin{align*}
\psi^s_{0,q,x,q^+}(\rho)&= \phi_{q,x,q^+}(\rho)\left[ \frac{\pi_{X,X^-|Q}(x,0|q)-s}{\pi_{X,X^-|Q}(x,0|q)} + 2\frac1{\pi_{X,X^-|Q}(x,0|q)} \frac{s}{(-\rho+2)}\right] \\
&\triangleq \phi_{q,x,q^+}(\rho)2^{\delta_{q,x}(s)}
\end{align*}
for $\rho\in[0,1)$. The last step is to define $\delta(s)\triangleq\max_{q,x}\delta_{q,x}(s)$, which goes to zero when $s\to0$. Now, it is clear that $\psi^s_{x^-,q,x,q^+}(\rho)\le\phi_{q,x,q^+}(\rho)2^{\delta(s)}$ for all $(x^-,q,x,q^+)$, $0\le\rho<1$ and $s\le S_{\min}$, as required.
\end{proof}

\begin{proof}[Proof of Theorem \ref{theorem:part_1}]
Throughout this proof, $\zeta_n$ is a sequence of integers that satisfies $\frac{\zeta_n}{n}\to 0$. Our aim is to show that for $|\cM| = 2^{nR}$, there exists $\xi>0$ such that $\Pr[S_{n-\zeta_n} \le \xi]\to 0$ as $n$ increases. For convenience, we replace the variable $n$ by $n + \zeta_n$, so that the analysis is made for a message set of size $|\cM| = 2^{(n+\zeta_n)R}$ and the probability analysis is for  $\Pr[S_{n} \le \xi]$.

From Lemma \ref{lemma:bounds_NEW}, for all $\delta>0$, there exists $s^\ast(\delta)$, such that for all $s\le s^\ast(\delta)$
\begin{align}\label{eq:scheme_fundamental_bound}
  \psi^s_{x^-,q,x,q^+}(\rho)&\leq \phi_{q,x,q^+}(\rho)2^\delta.
\end{align}
We will utilize \eqref{eq:scheme_fundamental_bound} to provide a vanishing upper bound on $\E[\Lambda(S_{n})]$, with $\Lambda(s)=s^{-\rho}$ for a judiciously chosen $\rho > 0$. This, by the Markov inequality, will imply that the probability that $\Lambda(S_n)$ is above a certain threshold is vanishingly small. Since $\Lambda(s)$ is a decreasing function of $s$, we then obtain that the probability that $S_n$ lies below a certain threshold is vanishingly small, as desired.

We introduce some convenient notation for the upcoming analysis. Set $\Delta_i = (X_i,Q_i,S_i)$, and let $(X,Q)^{j}_i$ stand for $(X^{j}_i,Q^{j}_i)$ when $i<j$. Consider the following chain of inequalities:
\begin{align*}
  &\E_{\Delta_n}[\Lambda(S_n)]\\
  &\stackrel{(a)}= \E_{\Delta_{n-1}}[\Lambda(S_{n-1})\E_{\Delta_n|\Delta_{n-1}}[\Lambda(S_n/S_{n-1})|\Delta_{n-1}]] \\
  &\stackrel{}= \E_{\Delta_{n-1}}[\Lambda(S_{n-1})\E_{(X_n,Q_n)|\Delta_{n-1}}[\E_{S_{n}|(X_n,Q_n),\Delta_{n-1}}[\Lambda(S_n/S_{n-1})|(X_n,Q_n),\Delta_{n-1}]]] \\
  &\stackrel{(b)}\leq \E_{\Delta_{n-1}}[\Lambda(S_{n-1})\E_{(X_n,Q_n)|\Delta_{n-1}}[\phi_{(X,Q)_{n-1}^n}(\rho)]] 2^\delta\\
  &\stackrel{}= \E_{(X,Q)_{n-1}^n}[\phi_{(X,Q)_{n-1}^n}(\rho)\E_{S_{n-1}|(X,Q)_{n-1}^n}[\Lambda(S_{n-1})|(X,Q)_{n-1}^n]] 2^\delta\\
  &\stackrel{(a)}= \E_{(X,Q)_{n-1}^n} [\phi_{(X,Q)^{n}_{n-1}}(\rho) \E_{\Delta_{n-2}|(X,Q)_{n-1}^n}[\Lambda(S_{n-2})\E_{S_{n-1}|(X,Q)_{n-1}^n,\Delta_{n-2}}[\Lambda(S_{n-1}/S_{n-2})|(X,Q)_{n-1}^n,\Delta_{n-2}]]]2^\delta\\
  &\stackrel{(c)}= \E_{(X,Q)_{n-1}^n} [\phi_{(X,Q)^{n}_{n-1}}(\rho) \E_{\Delta_{n-2}|(X,Q)_{n-1}^n}[\Lambda(S_{n-2})\E_{S_{n-1}|(X_{n-1},Q_{n-1}),\Delta_{n-2}}[\Lambda(S_{n-1}/S_{n-2})|(X_{n-1},Q_{n-1}),\Delta_{n-2}]]]2^\delta\\
  &\stackrel{(b)}\leq \E_{(X,Q)_{n-1}^n} [\phi_{(X,Q)^{n}_{n-1}}(\rho) \E_{\Delta_{n-2}|(X,Q)_{n-1}^n}[\Lambda(S_{n-2})\phi_{(X,Q)^{n-1}_{n-2}}(\rho)]]2^{2\delta}\\
  &\stackrel{}= \E_{(X,Q)_{n-2}^n} [\phi_{(X,Q)^{n}_{n-1}}(\rho)\phi_{(X,Q)^{n-1}_{n-2}}(\rho) \E_{S_{n-2}|(X,Q)_{n-2}^n}[\Lambda(S_{n-2})|(X,Q)_{n-2}^n]]2^{2\delta}\\
  &\stackrel{(d)}\leq \E_{(X,Q)_{1}^n} \left[\prod_{i=1}^{n}\phi_{(X,Q)^{i}_{i-1}}(\rho) \E_{S_{0}|(X,Q)_{1}^n}[\Lambda(S_{0})|(X,Q)_{1}^n]\right]2^{n\delta}\\
  &\stackrel{(e)}= 2^{(n+\zeta_n)R\rho} 2^{n\delta}\E_{(X,Q)_{1}^n}\left[\prod_{i=1}^{n}\phi_{(X,Q)^{i}_{i-1}}(\rho)\right],
  \end{align*}
  where:
  \begin{itemize}
    \item[(a)] follows from the law of total expectation;
    \item[(b)] follows from \eqref{eq:scheme_fundamental_bound} since $\psi^{S_{n-1}}_{X_{n-1},Q_{n-1},X_{n},Q_{n}}(\rho) = \E_{S_{n}|(X_n,Q_n),\Delta_{n-1}}[\Lambda(S_n/S_{n-1})|(X_n,Q_n),\Delta_{n-1}]$;
    \item[(c)] follows from the Markov chain $S_{i} - (\Delta_{i-1},(X_i,Q_i)) - (X,Q)_{i+1}^n$ for all $i$, and specifically for $i=n-1$;\footnote{This Markov chain follows from the same argument used in Lemma~\ref{lemma:bounds_NEW} for the Markov chain $X_{i+1}-(X_{i},Q_{i})-S_{i}$.}
    \item[(d)] follows from applying the above steps $n-2$ times; and
    \item[(e)] follows from the fact that $|\cM| = 2^{(n+\zeta_n)R}$.
  \end{itemize}
The expectation above can be decomposed into non-typical and typical sequences with respect to the Markov distribution $p(q^+,x|q,x^-):= \sum_y \mathbbm{1}\{q^+=g(q,y)\}p_{Y|X}(y|x)p^*_{X|X^-,Q}(x|x^-,q)$. With some abuse of notation, since $q$ and $q^+$ determine a unique $y$ such that $q^+=g(q,y)$, we refer to $\phi_{(X,Q)^{i}_{i-1}}(\rho)$ as $\phi_{Q_{i-1},X_i,Y_i}(\rho)$. Consider
\begin{align}\label{eq:bound_Expec_summa}
  \E_{\Delta_n}[\Lambda(S_n)]&\stackrel{(a)}\leq 2^{n(R\rho(1 + \frac{\zeta_n}{n})+\delta)}\left[\epsilon_n [\max_{q,x,y} \phi_{q,x,y}(\rho)]^n  + \prod_{q,x,y} \phi_{q,x,y}(\rho)^{n\pi_{Q,X,Y}(q,x,y) + \kappa_n}\right]\nn\\
  &= 2^{n(R\rho + R\rho\frac{\zeta_n}{n}+\delta+K\rho)}\epsilon_n + 2^{n(R\rho + R\rho\frac{\zeta_n}{n}+\delta)} \prod_{q,x,y} 2^{(n\pi_{Q,X,Y}(q,x,y)+\kappa_n)\log \phi_{q,x,y}(\rho)}\nn\\
  &= 2^{n(R\rho + R\rho\frac{\zeta_n}{n}+\delta+K\rho)}\epsilon_n + 2^{n(R\rho + R\rho\frac{\zeta_n}{n}+\delta)} 2^{-\rho \sum_{q,x,y}(n\pi_{Q,X,Y}(q,x,y)+\kappa_n)\log \left( \frac{p_{Y|X}(y|x)}{\pi_{Y|Q}(y|q)}\right)}\nn\\
  &\stackrel{(b)}\le 2^{n(R\rho + R\rho\frac{\zeta_n}{n}+\delta+K\rho)}\epsilon_n + 2^{-n\rho(I(X;Y|Q)-R(1+\frac{\zeta_n}{n})-\frac{\delta}{\rho} + \frac{\kappa'_n}{n})},
  \end{align}
where $(a)$ follows from separating the contributions made to the expected value by non-typical and typical sequences:  we let $\epsilon_n$ denote the probability that a sequence is not in the typical set, $K$ stands for $\max_{q,x,y} \log \left( \frac{\pi_{Y|Q}(y|q)}{p_{Y|X}(y|x)}\right)$ and, finally, $\kappa_n$ denotes the maximum deviation of the empirical distribution from $\pi_{Q,X,Y}(q,x,y)$ for a typical sequence. Item $(b)$ follows from the notation $\kappa'_n\triangleq\kappa_n|\mathcal Q||\mathcal X||\mathcal Y|K$. Now, since $\epsilon_n$ decreases exponentially with $n$, there exists a choice of $(\rho^\ast,\delta^\ast)$ such that $2^{n(R\rho^* + \frac{\zeta_n}{n}+\delta^*+K\rho^*)}\epsilon_n$ is arbitrarily small, while $R$ can be made arbitrarily close to $I(X;Y|Q)$.

Finally, the main result can be derived with $\delta^\ast$ and $\rho^\ast$:
\begin{align}\label{eq:probability}
  \Pr[S_n\leq s^\ast(\delta^\ast)]&\stackrel{(a)}= \Pr[\Lambda(S_n)\geq \Lambda(s^\ast(\delta^\ast))] \nn\\
  &\stackrel{(b)}\leq \frac{\E[\Lambda(S_n)]}{\Lambda(s^\ast(\delta^\ast))} \nn\\
&\stackrel{(c)}\to 0,
\end{align}
 where $(a)$ follows from the fact that $\Lambda(\cdot)$ is a decreasing function, $(b)$ follows from Markov's inequality and $(c)$ follows from \eqref{eq:bound_Expec_summa}.
\end{proof}

\section{DP formulation and solution}\label{sec:DP_formulation}
This section covers the formulation of feedback capacity as DP and its solution. The solution of the DP problem implies almost immediately the derivations of feedback capacity and optimal input distribution, which were stated earlier as separate results in Theorems \ref{theorem:BIBO} and \ref{theorem:optimal_inputs} (which are proved at the end of this section). We begin with presenting the family of DP problems termed infinite-horizon with average reward.
\subsection{Average reward DP}
Each DP is defined by the tuple $(\mathcal{Z},\mathcal{U},\mathcal{W},F,P_Z,P_W,g)$.
We consider a discrete-time dynamical system evolving according to:
\begin{equation}\label{eq_DP}
  z_t=F(z_{t-1},u_t,w_t),\  t=1,2,\dots
\end{equation}
Each state, $z_t$, takes values in a Borel space $\mathcal{Z}$, each action, $u_t$, takes values in a compact subset $\mathcal{U}$ of a Borel space, and each disturbance, $w_t$, takes values in a measurable space $\mathcal{W}$. The initial state, $z_0$, is drawn from the distribution $P_Z$, and the disturbance, $w_t$, is drawn from $P_{W|Z_{t-1},U_t}$.
The history, $h_t=(z_0,w_1,\dots,w_{t-1})$, summarizes all the information available to the controller at time $t$. The controller at time $t$ chooses the action, $u_t$, by a function $\mu_t$ that maps histories to actions, i.e., $u_t = \mu_t(h_t)$. The collection of these functions is called a policy and is denoted as $\pi=\{\mu_1,\mu_2,\dots\}$. Note that given a policy, $\pi$, and the history, $h_t$, one can compute the actions vector, $u^t$, and the states of the system, $z_1,z_2,\dots,z_{t-1}$.

Our objective is to maximize the average reward given a bounded reward function $r: \mathcal{Z}\times \mathcal{U}\rightarrow \mathbb{R}$. The average reward for a given policy $\pi$ is given by:
\begin{equation*}
\rho_{\pi} = \liminf _{N\rightarrow\infty} \frac{1}{N}\mathbb{E}_{\pi}\left[\sum_{t=1}^{N}r(Z_{t-1}, \mu_t(h_t))\right],
\end{equation*}
where the subscript indicates that actions $u_t$ are subject to the policy $\pi$. The optimal average reward is defined as
\begin{equation*}
\rho = \sup_{\pi} \rho_{\pi}.
\end{equation*}

Having defined the DP problem, we are ready to show the formulation of feedback capacity as DP.
\subsection{Formulation of capacity as DP}
The state of the DP, $z_{t-1}$, is defined as the conditioned probability vector $\beta_{t-1}(x_{t-1})\triangleq p(x_{t-1}|y^{t-1})$. The action space, $\mathcal{U}$, is the set of stochastic matrices, $p_{X_t|X_{t-1}}$, such that $p_{X_t|X_{t-1}}(1|1)=0$. For a given policy and an initial state, the encoder at time $t-1$ can calculate the state, $\beta_{t-1}$, since the tuple $y^{t-1}$ is available from the feedback. The disturbance is taken to be the channel output, $w_t=y_{t}$, and the reward gained at time $t-1$ is chosen as $I(Y_{t};X_{t}|y^{t-1})$. These definitions imply that the optimal reward of this DP is equal to the feedback capacity given in Theorem \ref{theorem:cap_as_DP}.

It can also be shown that the DP states satisfy the following recursive relation,
\begin{align}\label{eq:DP_function}
\beta_t(x_t)&=p(x_t|y^{t}) \nonumber\\
            &= \frac{\sum_{x_{t-1}} \beta_{t-1}(x_{t-1})u_t(x_t,x_{t-1})p_{Y|X}(y_{t}|x_t)}{\sum_{x_t,x_{t-1}} \beta_{t-1}(x_{t-1})u_t(x_t,x_{t-1})p_{Y|X}(y_{t}|x_t)},
\end{align}
where $u_t(x_t,x_{t-1})$ corresponds to $p(x_t|x_{t-1},y^{t-1})$, the dependence on $y^{t-1}$ being left out of the notation for $u_t$. In \cite{Sabag_BEC}, this formulation was shown to satisfy the Markov nature required in DP problems and it was also shown that the optimal average reward is exactly the capacity expression in Theorem \ref{theorem:cap_as_DP}. Note that this formulation is valid for any memoryless channel with our input constraint; moreover, minor variations can also yield a similar formulation with different input constraints.

\subsection{The DP for the BIBO channel}\label{subsec:form_for_BC}
Here, each element in the formulation above will be calculated for the BIBO channel; the DP state at time $t-1$, $z_{t-1}$, is the probability vector $[p_{X_{t-1},Y^{t-1}}(0|y^{t-1}), p_{X_{t-1}|Y^{t-1}}(1|y^{t-1})]$. Since the components of this vector sum to $1$, the notation can be abused as $z_{t-1}\triangleq p_{X_{t-1}|Y^{t-1}}(0|y^{t-1})$, i.e., the first component will be the DP state. Each action, $u_{t}$, is a constrained $2\times2$ stochastic matrix, $p_{X_t|X_{t-1}}$, of the form:
\begin{equation*}
u_{t}=\left[\begin{array}{cc}
 p_{X_t|X_{t-1}}(0|0) & p_{X_t|X_{t-1}}(1|0) \\
1 & 0 \end{array}\right].
\end{equation*}
The disturbance $w_t$ is the channel output, $y_t$, and thus, it can take values from $\{0,1\}$.

The notation $\delta_t\triangleq z_{t-1}p_{X_t|X_{t-1}}(1|0)$ is useful and implies the constraint $0\leq\delta_t\leq z_{t-1}$, since $u_t$, by definition, must be a stochastic matrix. Furthermore, given $z_{t-1}$, $u_t$ can be recovered from $\delta_t$ for all $z_{t-1}\neq 0$. For the case $z_{t-1}= 0$, we will see that $p_{X_t|X_{t-1}}(1|0)$ has no effect on the DP, so it can be fixed to zero.
The system equation can then be calculated from \eqref{eq:DP_function}:
\begin{equation}\label{eq:DP_evolution}
z_{t}=\left\{\begin{array}{cc}
 \frac{\bar{\alpha}\bar{\delta}_t}{(1-\alpha)(1-\delta_t) + \beta\delta_t} & \text{if } w_t=0, \\
\frac{\alpha\bar{\delta}_t}{\alpha(1-\delta_t) + (1-\beta)\delta_t} & \text{if } w_t=1. \end{array}\right.
\end{equation}

\begin{table}
\caption{the conditional distribution $p(y_t,x_t,x_{t-1}|z_{t-1},u_t)$}
\label{table:joint_distr}
\centering
\begin{tabular}{|c|c||c|c|}
  \hline
  $x_{t-1}$ & $x_{t}$ & $y_t=0$ & $y_t=1$ \\ \hline \hline
  $0$ & $0$ & $\bar{\alpha}z_{t-1}u_t(1,1)$ & $\alpha z_{t-1}u_t(1,1)$ \\ \hline
  $0$ & $1$ & $\beta z_{t-1}u_t(1,2)$ & $\bar{\beta}z_{t-1}u_t(1,2)$ \\ \hline
  $1$ & $0$ & $\bar{\alpha}(1-z_{t-1})$ & $\alpha(1-z_{t-1})$ \\ \hline
\end{tabular}
\end{table}
The conditional distribution, $p(x_t,x_{t-1},y_t|z_{t-1},u_t)$, is described in Table \ref{table:joint_distr}, so one can calculate the reward
\begin{align*}
r(z_{t-1},u_t) &= I(Y_t;X_t|z_{t-1},u_t) \\
&= H_2(\bar{\alpha}\bar{\delta_t} + \beta\delta_t) - (1-\delta_t) H_2(\alpha) - \delta_t H_2(\beta).
\end{align*}

Before computing the DP operator, it is convenient to define
\begin{equation}\label{eq:def_arg}
\begin{split}
 p_{\alpha,\beta}(\delta)           &= \alpha\bar{\delta} + \bar{\beta}\delta \\
 \text{arg1}_{\alpha,\beta}(\delta) &= \frac{\bar{\alpha}\bar{\delta}}{1 - p_{\alpha,\beta}(\delta)} \\
 \text{arg2}_{\alpha,\beta}(\delta) &= \frac{\alpha\bar{\delta}}{p_{\alpha,\beta}(\delta)}.
 \end{split}
\end{equation}
We will omit the subscripts $\alpha,\beta$ in the notation above when it is clear from the context.
The DP operator is then given by:
\begin{align}\label{eq:DP_operator}
 (Th_{\alpha,\beta})(z) &= \max_{0\leq\delta \leq z} H_2(p(\delta)) - (1-\delta) H_2(\alpha) - \delta H_2(\beta) + (1-p(\delta))h_{\alpha,\beta}(\text{arg1}(\delta)) + p(\delta)h_{\alpha,\beta}(\text{arg2}(\delta)),
\end{align}
for all functions $h_{\alpha,\beta}:[0,1]\rightarrow \mathbb{R}$, parameterized by $(\alpha,\beta)$.

Now when the DP problem for the BIBO channel is well-defined, the Bellman equation which can verify the optimality of rewards, can be used to obtain an analytic solution. However, the Bellman equation cannot be easily solved, and therefore, numerical algorithms are required to estimate the Bellman components. The numerical study of DP problems is not within the scope of this paper, and the reader may find \cite{PermuterCuffVanRoyWeissman08,Ising_channel} to be suitable references for learning this topic in the context of feedback capacities. Therefore, we proceed directly to the statement and the solution of the Bellman equation.

\subsection{The Bellman Equation}
In DP, the Bellman equation suggests a sufficient condition for average reward optimality. This equation establishes a mechanism for verifying that a given average reward is optimal. The next result encapsulates the Bellman equation:
\begin{theorem}[Theorem 6.2, \cite{Arapos93_average_cose_survey}]\label{theorem:bellman}
If $\rho\in\mathbb{R}$ and a bounded function $h:\mathcal{Z}\rightarrow \mathbb{R}$ satisfies for all $z \in \mathcal{Z}$:
\begin{equation}\label{eq_bellman}
    \rho + h(z) = \sup_{u\in\mathcal{U}} r(z,u) + \int p(w|z,u)h(F(z,u,w))dw,
\end{equation}
then $\rho^{\ast}=\rho$. Furthermore, if there is a function $\mu : \mathcal{Z} \rightarrow \mathcal{U}$, such that $\mu(z)$ attains the supremum for each $z$, then $\rho^{\ast}=\rho_\pi$ for $\pi=\{\mu_0, \mu_1,\dots\}$ with $\mu_t(h_t) = \mu(z_{t-1})$ for each $t$.
\end{theorem}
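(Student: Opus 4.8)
The plan is to prove the two assertions — that $\rho$ is the optimal average reward, and that the stationary policy built from the maximizing selector attains it — by the standard telescoping (drift) argument for average-reward dynamic programming, so that the substance of the proof is bookkeeping with the Bellman inequality rather than a new idea.

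First I would record the one-sided consequence of \eqref{eq_bellman}: since its right-hand side is a supremum over $u$, for every $z \in \mathcal{Z}$ and every admissible action $u \in \mathcal{U}$ we have $\rho + h(z) \geq r(z,u) + \int p(w\mid z,u)\,h\bigl(F(z,u,w)\bigr)\,dw$. Fixing an arbitrary policy $\pi = \{\mu_1,\mu_2,\dots\}$, I would apply this inequality at $(z,u)=(z_{t-1},u_t)$ with $u_t = \mu_t(h_t)$ — both of which are determined by the history $h_t$ — and take the conditional expectation in $w_t=y_t$, whose law is exactly $p(\cdot\mid z_{t-1},u_t)$. Since $z_t = F(z_{t-1},u_t,w_t)$, the tower property gives $\rho + \E_\pi[h(Z_{t-1})] \geq \E_\pi[r(Z_{t-1},\mu_t(h_t))] + \E_\pi[h(Z_t)]$. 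Summing over $t=1,\dots,N$ the $h$-terms telescope, leaving
\[
N\rho + \E_\pi[h(Z_0)] \ \geq\ \sum_{t=1}^N \E_\pi\bigl[r(Z_{t-1},\mu_t(h_t))\bigr] + \E_\pi[h(Z_N)].
\]
Dividing by $N$, bounding $|\E_\pi[h(Z_0)] - \E_\pi[h(Z_N)]| \leq 2\sup_z|h(z)| < \infty$ via boundedness of $h$, and taking $\liminf_{N\to\infty}$ yields $\rho_\pi \leq \rho$; since $\pi$ is arbitrary, $\rho^{\ast} = \sup_\pi \rho_\pi \leq \rho$.

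For the reverse direction I would take the stationary policy $\pi^{\ast}$ generated by the maximizing selector $\mu$, i.e.\ $\mu_t(h_t) = \mu(z_{t-1})$. Along this trajectory the Bellman equation holds with \emph{equality} at every step, so the whole chain of inequalities above collapses to equalities and $\frac1N\sum_{t=1}^N \E_{\pi^{\ast}}[r(Z_{t-1},\mu(Z_{t-1}))] = \rho + \frac1N\bigl(\E_{\pi^{\ast}}[h(Z_0)] - \E_{\pi^{\ast}}[h(Z_N)]\bigr) \to \rho$, again by boundedness of $h$. Hence $\rho_{\pi^{\ast}} = \rho$, so $\rho^{\ast} \geq \rho$, and combining with the upper bound gives $\rho^{\ast} = \rho = \rho_{\pi^{\ast}}$, as claimed.

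I expect the genuinely delicate points to be measure-theoretic rather than algebraic: one must check that $w \mapsto h(F(z,u,w))$ is integrable against $p(\cdot\mid z,u)$ so that \eqref{eq_bellman} is well posed (immediate from boundedness of $h$ and measurability of $F$), that $(z,u)\mapsto r(z,u)+\int p(w\mid z,u)h(F(z,u,w))\,dw$ is measurable so that the supremum over the compact set $\mathcal{U}$ and the existence of a measurable selector $\mu$ are meaningful on the Borel space $\mathcal{Z}$, and that the tower property can be iterated legitimately along the history-dependent — hence in general non-Markov — trajectory induced by $\pi$. These are precisely the standing hypotheses under which Theorem~6.2 of \cite{Arapos93_average_cose_survey} is established, which is why we invoke the result directly rather than reproving it here.
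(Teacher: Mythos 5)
Your proof is correct, and it is the standard drift/telescoping argument for average-reward dynamic programming: the one-sided Bellman inequality gives a supermartingale-type relation whose telescoping, after normalizing by $N$ and using the boundedness of $h$ to kill the boundary terms, yields $\rho_\pi \le \rho$ for every policy, with equality for the policy generated by the maximizing selector. The paper does not actually reprove the result --- it states it as a citation of Theorem~6.2 of Arapostathis et al.\ and offers a one-paragraph remark that the triplet $(\rho,h,\mu)$ is a canonical triplet and that the bounded terminal cost is negligible as the horizon grows; your telescoping computation is precisely the content of that remark made explicit, so the two are the same underlying argument at different levels of detail. You are also right to flag the measurability/selector issues as the genuinely non-algebraic part; those are exactly the standing hypotheses the cited survey imposes, and they are the reason the paper invokes the reference rather than reproving the theorem from scratch.
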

This result is a direct consequence of Theorem $6.2$ in \cite{Arapos93_average_cose_survey}; specifically, the triplet $\left(\rho,h(\cdot),\mu(\cdot)\right)$ is a canonical triplet by Theorem $6.2$, since it satisfies \eqref{eq_bellman}. Now, because a canonical triplet defines for all $N$ the $N$-stage optimal reward and policy under terminal cost $h(\cdot)$, it can be concluded that a canonical triplet also defines the optimal reward and policy in the infinite horizon regime, since in this case, the bounded terminal cost has a negligible effect.

Define the function $R_{\alpha,\beta}:[0,1]\rightarrow \mathbb{R}$:
\begin{align}\label{eq:def_reward}
R_{\alpha,\beta}(z)&= \frac{H_2(\alpha\bar{z} + \bar{\beta}z) + (\alpha\bar{z} + \bar{\beta} z)H_2(\frac{\alpha\bar{\beta}}{\alpha\bar{z} + \bar{\beta} z}) - (\bar{z} + \bar{\beta}z)H_2(\alpha) - (z + \alpha\bar{z})H_2(\beta)}{1 + \alpha\bar{z} + \bar{\beta} z},
\end{align}
and two constants,
\begin{align}\label{eq:def_rho}
  \tilde{\rho}_{\alpha,\beta} &= \max_{0\leq z \leq 1} R_{\alpha,\beta}(z)\nn\\
    z^{\text{opt}}_{\alpha,\beta} &= \argmax_{0\leq z \leq 1} R_{\alpha,\beta}(z).
\end{align}

Also, define the functions:
\begin{align}\label{eq:defintion_X_function}
h_1^{\alpha,\beta}(z)&= H_2(p(z)) - (1-z) H_2(\alpha) - z H_2(\beta) \nn\\
X^{\alpha,\beta}(z)&= H_2(p(z)) - (1-z) H_2(\alpha) - z H_2(\beta) - p(z) \tilde{\rho}_{\alpha,\beta} \nn\\
h_2^{\alpha,\beta}(z)& = \frac{X^{\alpha,\beta}(z) + p(z) X^{\alpha,\beta}(\text{arg2}_{\alpha,\beta}(z))}{1-\alpha\bar{\beta}},
\end{align}
for $z\in[0,1]$.
The concatenation of the above functions can be defined:
\begin{align*}
    \tilde{h}_{\alpha,\beta}(z) =
    \begin{cases}
        h_1^{\alpha,\beta}(z);  &\mbox{if } 0 \leq z \leq z^{\alpha,\beta}_1\\
        h_2^{\alpha,\beta}(z); &\mbox{if } z^{\alpha,\beta}_1 < z \leq z^{\alpha,\beta}_2\\
        \tilde{\rho}_{\alpha,\beta} & \mbox{if } z^{\alpha,\beta}_2 < z \leq 1,
    \end{cases}
\end{align*}
where $z^{\alpha,\beta}_1$ and $z^{\alpha,\beta}_2$ were defined in \eqref{eq:def_Zi}. With these definitions, we are ready to state the fundamental theorem of this section.
\begin{theorem}\label{theorem:sol_bellman}
The function $\tilde{h}_{\alpha,\beta}(z)$ and the constant $\tilde{\rho}_{\alpha,\beta}$ satisfy the Bellman equation, i.e.,
\begin{equation*}
\tilde{h}_{\alpha,\beta} + \tilde{\rho}_{\alpha,\beta} = T\tilde{h}_{\alpha,\beta},
\end{equation*}
for all $[\alpha,\beta]\in[0,1]\times[0,1]$ satisfying $\alpha+\beta\leq 1$. Moreover, the maximum in $T\tilde{h}_{\alpha,\beta}$ is achieved when $\delta^\ast(z)=z$ for $z\in[0,z_2^{\alpha,\beta}]$, and $\delta^\ast(z)=z_2^{\alpha,\beta}$ otherwise.
\end{theorem}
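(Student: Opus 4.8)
The plan is to verify the Bellman equation \eqref{eq_bellman} for the operator \eqref{eq:DP_operator} directly. The key simplification is that the quantity maximized in $(T\tilde h_{\alpha,\beta})(z)$, namely
\[
\phi(\delta)\;\triangleq\; h_1^{\alpha,\beta}(\delta)+(1-p(\delta))\,\tilde h_{\alpha,\beta}(\text{arg1}(\delta))+p(\delta)\,\tilde h_{\alpha,\beta}(\text{arg2}(\delta)),
\]
depends on the state $z$ only through the feasibility constraint $\delta\le z$; so it is enough to prove $\tilde h_{\alpha,\beta}(z)+\tilde\rho_{\alpha,\beta}=\max_{0\le\delta\le z}\phi(\delta)$ with the stated maximizer $\delta^\ast(z)$. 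I would first assemble three preliminary facts about the maps in \eqref{eq:def_arg}: (i) $p(\text{arg2}(\delta))=\alpha\bar\beta/p(\delta)$, $p(\delta)\,p(\text{arg2}(\delta))=\alpha\bar\beta$, and $\text{arg2}$ is an \emph{involution}, $\text{arg2}(\text{arg2}(\delta))=\delta$, while $\text{arg1}$ and $\text{arg2}$ are monotone non-increasing (the derivative of $\text{arg1}$ reduces to a single-signed expression); (ii) the breakpoint identities $\text{arg2}(z_1^{\alpha,\beta})=z_2^{\alpha,\beta}$, $\text{arg2}(z_2^{\alpha,\beta})=z_1^{\alpha,\beta}$, $\text{arg1}(z_2^{\alpha,\beta})=z_3^{\alpha,\beta}$, $\text{arg1}(z_1^{\alpha,\beta})=z_4^{\alpha,\beta}$, which together with the chain $z_1^{\alpha,\beta}\le z_2^{\alpha,\beta}\le z_3^{\alpha,\beta}\le z_4^{\alpha,\beta}$ (Lemma \ref{lemma:maximization_domain}) and (i) pin down, for each $\delta$, which of the three pieces of $\tilde h_{\alpha,\beta}$ is evaluated at $\text{arg1}(\delta)$ and at $\text{arg2}(\delta)$; and (iii) the identity $(1+p(\delta))R_{\alpha,\beta}(\delta)=h_1^{\alpha,\beta}(\delta)+p(\delta)\,h_1^{\alpha,\beta}(\text{arg2}(\delta))$, obtained by expanding $h_1^{\alpha,\beta}$ and using $p(\text{arg2}(\delta))=\alpha\bar\beta/p(\delta)$. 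Substituting (iii) and $p(\delta)p(\text{arg2}(\delta))=\alpha\bar\beta$ into the definition \eqref{eq:defintion_X_function} of $h_2^{\alpha,\beta}$ yields the convenient closed form $h_2^{\alpha,\beta}(\delta)=\tilde\rho_{\alpha,\beta}+\tfrac{(1+p(\delta))(R_{\alpha,\beta}(\delta)-\tilde\rho_{\alpha,\beta})}{1-\alpha\bar\beta}$; using $\tilde\rho_{\alpha,\beta}=R_{\alpha,\beta}(z_2^{\alpha,\beta})$ this gives $h_2^{\alpha,\beta}(z_2^{\alpha,\beta})=\tilde\rho_{\alpha,\beta}$ and $h_2^{\alpha,\beta}(z_1^{\alpha,\beta})=h_1^{\alpha,\beta}(z_1^{\alpha,\beta})$ (so $\tilde h_{\alpha,\beta}$ is continuous), while $h_2^{\alpha,\beta}\le\tilde\rho_{\alpha,\beta}$ on $[0,1]$ since $R_{\alpha,\beta}\le\tilde\rho_{\alpha,\beta}$.

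Next I would compute $\phi$ region by region. On $[0,z_1^{\alpha,\beta}]$ both $\text{arg1}(\delta)$ and $\text{arg2}(\delta)$ lie in $[z_2^{\alpha,\beta},1]$, where $\tilde h_{\alpha,\beta}\equiv\tilde\rho_{\alpha,\beta}$, so $\phi(\delta)=h_1^{\alpha,\beta}(\delta)+\tilde\rho_{\alpha,\beta}=\tilde h_{\alpha,\beta}(\delta)+\tilde\rho_{\alpha,\beta}$. On $[z_1^{\alpha,\beta},z_2^{\alpha,\beta}]$ one has $\text{arg1}(\delta)\ge z_2^{\alpha,\beta}$ while $\text{arg2}(\delta)\in[z_1^{\alpha,\beta},z_2^{\alpha,\beta}]$; inserting the definition of $h_2^{\alpha,\beta}$ and using the involution to collapse $X^{\alpha,\beta}(\text{arg2}(\text{arg2}(\delta)))=X^{\alpha,\beta}(\delta)$ together with $p(\delta)p(\text{arg2}(\delta))=\alpha\bar\beta$ and $h_1^{\alpha,\beta}=X^{\alpha,\beta}+p\,\tilde\rho_{\alpha,\beta}$, the expression telescopes cleanly to $h_2^{\alpha,\beta}(\delta)+\tilde\rho_{\alpha,\beta}=\tilde h_{\alpha,\beta}(\delta)+\tilde\rho_{\alpha,\beta}$. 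On $[z_2^{\alpha,\beta},1]$ one has $\text{arg2}(\delta)\le z_1^{\alpha,\beta}$ and $\tilde h_{\alpha,\beta}(\text{arg1}(\delta))\le\tilde\rho_{\alpha,\beta}$, so by (iii), $\phi(\delta)\le h_1^{\alpha,\beta}(\delta)+(1-p(\delta))\tilde\rho_{\alpha,\beta}+p(\delta)h_1^{\alpha,\beta}(\text{arg2}(\delta))=(1+p(\delta))R_{\alpha,\beta}(\delta)+(1-p(\delta))\tilde\rho_{\alpha,\beta}\le 2\tilde\rho_{\alpha,\beta}$, with equality at $\delta=z_2^{\alpha,\beta}$. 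Thus $\phi(\delta)=\tilde h_{\alpha,\beta}(\delta)+\tilde\rho_{\alpha,\beta}$ on $[0,z_2^{\alpha,\beta}]$, $\phi\le 2\tilde\rho_{\alpha,\beta}$ on $[0,1]$, and $\phi(z_2^{\alpha,\beta})=2\tilde\rho_{\alpha,\beta}$.

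To finish, I would show $\phi$ --- equivalently $\tilde h_{\alpha,\beta}$ --- is non-decreasing on $[0,z_2^{\alpha,\beta}]$, i.e.\ $h_1^{\alpha,\beta}$ increasing on $[0,z_1^{\alpha,\beta}]$ and $h_2^{\alpha,\beta}$ increasing on $[z_1^{\alpha,\beta},z_2^{\alpha,\beta}]$ (this also underlies the bound $\tilde h_{\alpha,\beta}\le\tilde\rho_{\alpha,\beta}$ used above). Granted this, for $z\le z_2^{\alpha,\beta}$ the constrained maximum of $\phi$ over $[0,z]$ is attained at $\delta=z$ and equals $\phi(z)=\tilde h_{\alpha,\beta}(z)+\tilde\rho_{\alpha,\beta}$, while for $z>z_2^{\alpha,\beta}$ it is attained at $\delta=z_2^{\alpha,\beta}$ and equals $\phi(z_2^{\alpha,\beta})=2\tilde\rho_{\alpha,\beta}=\tilde h_{\alpha,\beta}(z)+\tilde\rho_{\alpha,\beta}$ --- which is exactly the asserted Bellman identity $\tilde h_{\alpha,\beta}+\tilde\rho_{\alpha,\beta}=T\tilde h_{\alpha,\beta}$ together with the claimed optimal policy $\delta^\ast(z)$.

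I expect the monotonicity step to be the main obstacle. For $h_1^{\alpha,\beta}$ on $[0,z_1^{\alpha,\beta}]$, one has $p(\delta)\le p(z_1^{\alpha,\beta})=\alpha\bar\beta/p(z_2^{\alpha,\beta})\le\sqrt{\alpha\bar\beta}\le\tfrac12$ (using $p(z_2^{\alpha,\beta})\ge\sqrt{\alpha\bar\beta}$, equivalently $z_2^{\alpha,\beta}\ge z_L$, from Lemma \ref{lemma:maximization_domain}), so the derivative $(\bar\beta-\alpha)\log\tfrac{1-p(\delta)}{p(\delta)}+H_2(\alpha)-H_2(\beta)$ is monotone in $\delta$ and it suffices to check non-negativity at $\delta=z_1^{\alpha,\beta}$, which should follow from the first-order stationarity $R_{\alpha,\beta}'(z_2^{\alpha,\beta})=0$ (the same stationarity that produces the alternative formula \eqref{eq:BIBO_alternative}). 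For $h_2^{\alpha,\beta}$, the closed form gives $\partial_\delta h_2^{\alpha,\beta}(\delta)=\tfrac{1}{1-\alpha\bar\beta}\bigl[(\bar\beta-\alpha)(R_{\alpha,\beta}(\delta)-\tilde\rho_{\alpha,\beta})+(1+p(\delta))R_{\alpha,\beta}'(\delta)\bigr]$, whose first term is non-positive since $R_{\alpha,\beta}\le\tilde\rho_{\alpha,\beta}$, so the real content is that $R_{\alpha,\beta}$ is increasing on $[z_1^{\alpha,\beta},z_2^{\alpha,\beta}]$ and increasing steeply enough to dominate it --- a statement about the shape (unimodality/concavity) of $R_{\alpha,\beta}$ on $[z_L,z_U]$ of the same flavour as, and provable by the same techniques behind, Lemma \ref{lemma:maximization_domain}. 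The remaining ingredients --- the arg-map algebra in (i), the identity in (iii), and the region bookkeeping --- are routine once (i), (ii) and the monotonicity are in place.
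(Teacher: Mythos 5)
Your overall strategy is the same as the paper's: verify the Bellman equation region by region, using the breakpoint identities $\mathrm{arg}_j(z_i)=z_{i'}$ to pin down which piece of $\tilde h$ is evaluated, and then conclude by monotonicity of $\tilde h$ on $[0,z_2]$. Your algebraic reformulations are genuine improvements in exposition: the identity $(1+p(\delta))R_{\alpha,\beta}(\delta)=h_1^{\alpha,\beta}(\delta)+p(\delta)h_1^{\alpha,\beta}(\mathrm{arg2}(\delta))$, the involution $\mathrm{arg2}\circ\mathrm{arg2}=\mathrm{id}$, and the resulting closed form
$h_2^{\alpha,\beta}(\delta)=\tilde\rho_{\alpha,\beta}+\frac{(1+p(\delta))\bigl(R_{\alpha,\beta}(\delta)-\tilde\rho_{\alpha,\beta}\bigr)}{1-\alpha\bar\beta}$
make the continuity identities $h_2(z_2)=\tilde\rho$ and $h_2\le\tilde\rho$ immediate and let you treat all of $[z_2,1]$ by a single uniform bound $\phi(\delta)\le(1+p(\delta))R(\delta)+(1-p(\delta))\tilde\rho\le 2\tilde\rho$, rather than the paper's three-way case split over intervals III, IV, V of Table~\ref{table:arguments}. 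That part is correct and does buy a shorter proof.

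The gap is in the monotonicity step, which you correctly identify as the hard part but do not close. Two specific issues. First, for $h_1$ on $[0,z_1]$ you say non-negativity of $h_1'(z_1)$ ``should follow from'' $R_{\alpha,\beta}'(z_2)=0$. It does, but not directly: the chain the paper actually uses (Lemma~\ref{lemma:properties_of_h}) is $h_1'(z_1)=h_2'(z_1)$ (derivative matching at the concatenation point $z_1$, which requires a separate computation, see Eqs.~\eqref{concavity_cond1}--\eqref{concavity_cond2}), then $h_2'(z_1)\ge h_2'(z_2)=0$ from concavity of $h_2$. You never establish the derivative-matching step $h_1'(z_1)=h_2'(z_1)$, which is exactly where the stationarity $R'(z_2)=0$ enters in a non-obvious way. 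Second, for $h_2$ on $[z_1,z_2]$ you appeal to ``unimodality/concavity of $R$ on $[z_L,z_U]$'' and to ``the same techniques behind Lemma~\ref{lemma:maximization_domain}''. But $R_{\alpha,\beta}$ itself is not concave in general; what is concave (and is what your closed form actually needs) is $(1+p(\delta))R_{\alpha,\beta}(\delta)$, and Lemma~\ref{lemma:maximization_domain} proves only that the maximizer lies in $[z_L,z_U]$, not any shape property. The concavity of $(1+p)R$ does follow from your identity (iii) together with concavity of $h_1$ and of $p(\delta)H_2(\alpha\bar\beta/p(\delta))$ (a perspective of $H_2$), but this argument is not in your sketch; the paper spends most of Lemma~\ref{lemma:properties_of_h} precisely on establishing the concavity of the concatenated $\tilde h$ and the matching of one-sided derivatives, which is the load-bearing technical content your proposal defers.
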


The computations needed to verify that $\tilde{h}_{\alpha,\beta}(z)$ and $\tilde{\rho}_{\alpha,\beta}$ indeed satisfy the Bellman equation are given in Appendix~\ref{app:proof_bellman}. To derive this solution in the first place, we started with the numerical techniques used in similar contexts in the prior works\cite{PermuterCuffVanRoyWeissman08,Ising_channel,Sabag_BEC}. The numerical results indicated that the optimal actions are linear in parts and that the number of visited DP states is $4$. These observations were sufficient for some good guesswork that helped us to find the Bellman equation solution. For further reading on solving the feedback capacity when actions may be non-linear functions but the DP visits a finite number of states, the reader is referred to \cite{Sabag_UB_IT}. Loosely speaking, the paper \cite{Sabag_UB_IT} includes upper and lower bounds on the feedback capacity that match if the DP (under optimal actions) visits a finite number of states.

As consequences of Theorem \ref{theorem:sol_bellman}, we obtain the facts that the feedback capacity and the optimal input distribution of the BIBO channel are as stated in Theorems \ref{theorem:BIBO} and \ref{theorem:optimal_inputs}.

\begin{proof}[Proof of Theorem \ref{theorem:BIBO}]
By Theorem \ref{theorem:sol_bellman}, the DP optimal average reward is $\tilde{\rho}_{\alpha,\beta}$, which is the same as the capacity expression in the statement of Theorem \ref{theorem:BIBO}.
\end{proof}

\begin{proof}[Proof of Theorem \ref{theorem:optimal_inputs}]
We first show that the initial DP state may be assumed to be such that the optimal policy visits only a finite set of DP states. From this, we infer the form of the optimal input distribution given in the theorem statement. It is then straightforward to check that $\{(X_i,Q_i)\}_{i \ge 1}$ forms an irreducible and aperiodic Markov chain. It follows from this that the average reward, i.e., the feedback capacity, can also be expressed as $I(X;Y | Q)$.

Recall the optimal actions from Theorem \ref{theorem:sol_bellman}:
\begin{align*}
    \delta^\ast(z) =
    \begin{cases}
        z  &\mbox{ if } 0 \leq z \leq z^{\alpha,\beta}_2\\
        z_2^{\alpha,\beta} &\mbox{ if } z^{\alpha,\beta}_2 < z \leq 1.
    \end{cases}
\end{align*}
The DP state evolution in \eqref{eq:DP_evolution} can be described using the $\text{arg}j_{\alpha,\beta}$ functions in \eqref{eq:def_arg}. It is easy to check that the set $\mathcal{Z}_*\triangleq \{z^{\alpha,\beta}_i: i=1,2,3,4\}$, with $z^{\alpha,\beta}_i$ as defined in \eqref{eq:def_Zi}, is closed under the composite function $\text{arg}j_{\alpha,\beta} \circ \delta^*$, i.e., $\text{arg}j_{\alpha,\beta}\bigl(\delta^\ast(z^{\alpha,\beta}_i)\bigr)\in\mathcal{Z}_*$ for all $i,j$. The functions $\text{arg}j_{\alpha,\beta}\bigl(\delta^\ast(z)\bigr)$, $j=1,2$, create a sink, meaning that there is always a positive probability for a transition being made from any DP state $z \in \mathcal{Z}$ to a state in $\mathcal{Z}_*$, and zero probability of leaving the set $\mathcal{Z}_*$. Therefore, we can assume that the initial DP state $z_0$ is from $\mathcal{Z}_*$. Note also that for the function $g(\cdot,\cdot)$ that describes the transitions in the $Q$-graph in Fig.~\ref{fig:BC_optimal_input}, we have $\ell=g(i,j-1)$ iff $z_\ell^{\alpha,\beta} = \text{arg}j_{\alpha,\beta}\bigl(\delta^\ast(z_i^{\alpha,\beta})\bigr)$, for all $i,\ell \in \{1,2,3,4\}$ and $j\in\{1,2\}$. Therefore, we may identify the set $\mathcal{Z}_*$ with the set of $Q$-states $\cQ = \{1,2,3,4\}$, so that the evolution of the DP states can be described on the $Q$-graph of Fig.~\ref{fig:BC_optimal_input}. The form given for the optimal input distribution in the theorem statement follows directly from this observation.

We next verify the first-order Markov property of $(X_i,Q_i)_{i\ge1}$. Observe that
\begin{align}\label{eq:first_order}
   p(x_{i},q_{i}|x^{i-1},q^{i-1}) &= \sum_{y_i} p(y_i,x_{i},q_{i}|x^{i-1},q^{i-1}) \nn\\
   &\stackrel{(a)}= \sum_{y_i} p(q_{i}|q_{i-1},y_i)p(y_i|x_{i})p^*_{X|X^-,Q}(x_{i}|x_{i-1},q_{i-1}),
\end{align}
where $(a)$ follows from the structure of $p^*_{X|X^-,Q}$ given in \eqref{eq:transfer_matrices}, the memoryless channel property, and the fact that $Q_i$ is a function of $(Q_{i-1},Y_i)$.
It can be verified that the Markov chain $(X_i,Q_i)_{i\ge1}$ is irreducible and aperiodic, and hence, ergodic. It is then a straightforward, albeit tedious, exercise to verify that its (unique) stationary distribution is given by $\pi_{X^-,Q} = \pi_{X^-|Q}\pi_Q$, as in the statement of the theorem.

Any trajectory of states $z_t$, $t \ge 0$, followed by the DP under the optimal policy accumulates an average reward of $\liminf_N \frac{1}{N} \sum_{t=1}^N r\bigl(z_t,\delta^*(z_t)\bigr)$. Since we assume that $z_0 \in \mathcal{Z}_*$, each term in the sum is of the form $r\bigl(z_i^{\alpha,\beta},\delta^\ast(z_i^{\alpha,\beta})\bigr)$ for some $i \in \cQ$, which is equal to $I(Y;X|Q=i)$, the reward at the DP state $z^{\alpha,\beta}_i$. It then follows from the ergodicity of the Markov chain $(X_i,Q_i)_{i \ge 1}$ that, as $N \to \infty$,  the time average $\frac{1}{N} \sum_{t=1}^N r\bigl(z_t,\delta^*(z_t)\bigr)$ converges almost surely to $\sum_{i=1}^4 I(Y;X|Q=i) \pi_Q(i) = I(X;Y|Q)$.
\end{proof}

\section{Summary and concluding remarks}\label{sec:conclusions}
The capacity of the BIBO channel with input constraints was derived using a corresponding DP problem. A by-product of the DP solution is the optimal input distribution, which can be described compactly using $Q$-graphs. For the S-channel, we were able to derive a capacity-achieving coding scheme with simple and intuitive analysis for the achieved rate. For the general BIBO channel, we provided a PMS construction that includes the new element of history bits that captures the memory embedded in the setting. Furthermore, to ease the analysis of the scheme, a message-interval splitting operation was introduced so as to keep a Markov property of the involved random variables. With these ideas, we showed that the constructed PMS achieves the capacity of the BIBO channel.

The elements that were presented here for the PMS in the input-constrained BIBO channel setting may be exploited to derive a PMS for a broader class of finite-state channels (FSC) with feedback. Specifically, a FSC is \textit{unifilar} if the channel state $s_t$ is a deterministic function of the previous channel state $s_{t-1}$, input $x_t$ and output $y_t$. Though several works have proposed the PMS approach for this class, the assertion that this approach is optimal (in the sense of being feedback-capacity achieving) remains to be proved \cite{achilles_conjecture,achilles_scheme_isit_16}. The idea of history bits that was presented in this paper can be extended to ``history states" for unifilar channels, since the encoder can determine the channel state at each time $t$, assuming knowledge of the initial state $s_0$. Moreover, for all unifilar channels with simple capacity expressions, their optimal input distributions have a $Q$-graph representation \cite{Sabag_UB_IT} so that they satisfy the BCJR-invariant property that is crucial for the PMS construction. Thus, the steps of the construction and the analysis can be repeated in order to show that the corresponding PMS operates successfully in the sense of Theorem \ref{theorem:part_1}.

\appendices
\section{Proof of Lemma \ref{lemma:maximization_domain}}\label{app:maximization_domain}
In this appendix, we first show that the argument that achieves the maximum of
\begin{align*}
  R_{\alpha,\beta}(z) &= \frac{H_2(\alpha\bar{z} + \bar{\beta} z) + (\alpha\bar{z} + \bar{\beta} z)H_2\left(\frac{\alpha\bar{\beta}}{\alpha\bar{z} + \bar{\beta} z}\right) - (\bar{z} + \bar{\beta} z)H_2(\alpha) - (z + \alpha\bar{z})H_2(\beta)}{1 + \alpha\bar{z} + \bar{\beta} z}
\end{align*}
lies within $[z_L,z_U]=\left[\frac{\sqrt{\alpha}}{\sqrt{\alpha}+\sqrt{\bar{\beta}}},\frac{\sqrt{\bar{\alpha}}}{\sqrt{\bar{\alpha}}+\sqrt{\beta}}\right]$.

Let $p(z)=\alpha\bar{z} + \bar{\beta}z$ and denote by $p'$ the derivative of $p(z)$. After some simplifications, the derivative equals:
\begin{align*}
  &\frac{d}{dz} R(z)\\
  &=  \frac1{(1+p(z))^2}\left\{ (1-\alpha\bar{\beta}) (H_2(\alpha)-H_2(\beta)) + (\bar{\beta}-\alpha)\left[2\log (1-p(z)) - \log(p(z)-\alpha\bar{\beta})(1+\alpha\bar{\beta}) + \alpha\bar{\beta}\log \alpha\bar{\beta}\right]\right\}
\end{align*}
The above derivative equals zero when the function
\begin{align*}
  f_{\alpha,\beta}(z)&\triangleq (1-\alpha\bar{\beta}) [H_2(\alpha)-H_2(\beta)] + (\bar{\beta}-\alpha)[2\log (1-p(z)) - \log(p(z)-\alpha\bar{\beta})(1+\alpha\bar{\beta}) + \alpha\bar{\beta}\log \alpha\bar{\beta}]
\end{align*}
equals zero. It is easy to note that the function $f_{\alpha,\beta}(z)$ is a decreasing function of its argument.

We will show two facts:
\begin{align}
  f_{\alpha,\beta}(p(z_L))&\ge0 \label{eq:domain_LB}\\
  f_{\alpha,\beta}(p(z_U))&\le0,\label{eq:domain_UB},
\end{align}
from which we can conclude that $R_{\alpha,\beta}(z)$ attains its maximum at some $z \in [z_L,z_U]$. For the BSC, it needs to be shown that $f_{\alpha,\alpha}(0.5)\le 0\label{eq:domain_BSC}$ which can be verified easily.

We begin with an explicit calculation of $f_{\alpha,\beta}(p(z_L))$:
\begin{align}\label{eq:f_subs_ZL}
    f_{\alpha,\beta}(p(z_L))
    &\stackrel{(a)}=(1-\alpha\bar{\beta}) [H_2(\alpha)-H_2(\beta)] + (\bar{\beta}-\alpha)[2\log (1-\sqrt{\alpha\bar{\beta}}) - \log(\sqrt{\alpha\bar{\beta}}-\alpha\bar{\beta})(1+\alpha\bar{\beta}) + \alpha\bar{\beta}\log \alpha\bar{\beta}]\nn\\
    &= (1-\alpha\bar{\beta})\left[H_2(\alpha)-H_2(\beta) + (\bar{\beta}-\alpha)\log \left(\frac{1-\sqrt{\alpha\bar{\beta}}}{\sqrt{\alpha\bar{\beta}}}\right)\right],
\end{align}
where $(a)$ follows from $p(z_L) = \sqrt{\alpha\bar{\beta}}$. Since $1-\alpha\bar{\beta}\ge0$, we need to show that $\frac{f_{\alpha,\beta}(p(z_L))}{1-\alpha\bar\beta}\ge0$.

We now show that the minimal value of \eqref{eq:f_subs_ZL} is $0$. Consider the first derivative, with respect to $\alpha$, of $\frac{f_{\alpha,\beta}(p(z_L))}{1-\alpha\bar\beta}$:
\begin{align}\label{eq:partial_derivative}
\frac{d}{d\alpha} \left[H_2(\alpha)-H_2(\beta) + (\bar{\beta}-\alpha)\log \left(\frac{1-\sqrt{\alpha\bar{\beta}}}{\sqrt{\alpha\bar{\beta}}}\right)\right]
&=\log\left(\frac{(1-\alpha)\sqrt{\alpha\bar{\beta}}}{\alpha(1-\sqrt{\alpha\bar{\beta}})}\right) - \frac{\bar{\beta}-\alpha}{2\alpha (1-\sqrt{\alpha\bar{\beta}})}\nn\\
& \leq \frac{(1-\alpha)\sqrt{\alpha\bar{\beta}}}{\alpha(1-\sqrt{\alpha\bar{\beta}})} -1  - \frac{\bar{\beta}-\alpha}{2\alpha (1-\sqrt{\alpha\bar{\beta}})} \nn\\
& = \frac{-(\sqrt{\bar{\beta}}-\sqrt{\alpha})^2 }{2\alpha(1-\sqrt{\alpha\bar{\beta}})}\nn\\
& \leq 0,
\end{align}
where the first inequality follows from $\log x < x-1$ for all $x>0$ with $x=\frac{(1-\alpha)\sqrt{\alpha\bar{\beta}}}{\alpha(1-\sqrt{\alpha\bar{\beta}})}$.

Therefore, for each $\beta$, the function is non-increasing in $\alpha$, so the function can only be decreased if we substitute $\alpha=\bar{\beta}$. Since $f_{\bar\beta,\beta}(p(z_L))=0$, inequality \eqref{eq:domain_LB} is proven.

We now use a similar methodology to show \eqref{eq:domain_UB}. The inequality that needs to be shown is
\begin{align}\label{eq:f_subs_ZU}
  f_{\alpha,\beta}(p(z_U))&= (1-\alpha\bar{\beta})[H_2(\alpha)-H_2(\beta)] + ({\bar{\beta}-\alpha})(2\log \sqrt{\bar{\alpha}\beta}  +\alpha\bar{\beta}\log \alpha\bar{\beta} - \log(1-\sqrt{\bar{\alpha}\beta}-\alpha\bar{\beta})(1+\alpha\bar{\beta}))\nn\\
&\le 0.
\end{align}
Because it is difficult to prove straightforwardly, we write inequality \eqref{eq:f_subs_ZU} as a sum of simpler components, i.e., $-f_{\alpha,\beta}(p(z_U))=  F_{\alpha,\beta}^1 +  F_{\alpha,\beta}^2$, and we show that $ F_{\alpha,\beta}^1$ and $ F_{\alpha,\beta}^2$ are always non-negative. The functions are:
\begin{align*}
     F_{\alpha,\beta}^1&= \alpha\bar{\beta}\left[(\bar{\beta}-\alpha)\log\left(\frac{1-\sqrt{\bar{\alpha}\beta}-\alpha\bar{\beta}}{\alpha\bar{\beta}}\right) + H_2(\alpha)-H_2(\beta)\right] \\
     F_{\alpha,\beta}^1&= (\bar{\beta}-\alpha)\log\left(\frac{1-\sqrt{\bar{\alpha}\beta}-\alpha\bar{\beta}}{\bar{\alpha}\beta}\right)- (H_2(\alpha)-H_2(\beta))
\end{align*}

As before, we take the first derivative of $F_{\alpha,\beta}^1$:
\begin{align*}
  \frac{d}{d\alpha}\left[\frac{F_{\alpha,\beta}^1}{\alpha\bar{\beta}}\right]
&= (\bar{\beta}-\alpha)\left[-\frac1{\alpha} + \frac{-\bar{\beta} + \frac{\beta}{2 \sqrt{\bar{\alpha}\beta}}}{1 - \alpha\bar{\beta} -  \sqrt{\bar{\alpha}\beta}}\right] + \log\left(\frac{\bar{\alpha}\bar{\beta}}{1-\sqrt{\bar{\alpha}\beta}-\alpha\bar{\beta}}\right)\\
&\le (\bar{\beta}-\alpha)\left[-\frac1{\alpha} + \frac{-\bar{\beta} + \frac{\beta}{2 \sqrt{\bar{\alpha}\beta}}}{1 - \alpha\bar{\beta} -  \sqrt{\bar{\alpha}\beta}}\right]  + \left(\frac{\bar{\alpha}\bar{\beta}}{1-\sqrt{\bar{\alpha}\beta}-\alpha\bar{\beta}}\right)-1\\
&= \frac{(\sqrt{\bar\alpha}-\sqrt{\beta})\sqrt{\beta}}{2\alpha\sqrt{\bar{\alpha}\beta}(1-\sqrt{\bar{\alpha}\beta}-\alpha\bar{\beta})}[\sqrt{\bar\alpha}(\alpha\sqrt{\beta}-\bar\beta\sqrt{\bar{\alpha}}) + (\alpha\beta-\bar\alpha\bar\beta)]\\
&\le 0,
\end{align*}
where the first inequality follows from $\log x\le x-1$ with $x=\frac{\bar{\alpha}\bar{\beta}}{1-\sqrt{\bar{\alpha}\beta}-\alpha\bar{\beta}}$. The last inequality follows from $\alpha\le\bar{\beta}$, which implies, in turn,
$\alpha\sqrt{\beta}-\bar\beta\sqrt{\bar{\alpha}}\le0$ and $\alpha\beta-\bar\alpha\bar\beta\le0$. We thus conclude that $F_{\alpha,\beta}^1$ is non-increasing in $\alpha$, and therefore, if we take $\alpha$ to be $1-\beta$, we get its minimal value. Note that $F_{\bar{\beta},\beta}^1=0$, so we have $F_{\alpha,\beta}^1\ge0$.

Now we take the derivative of $F_{\alpha,\beta}^2$ with respect to $\beta$:
\begin{align*}
\frac{d}{d\beta} F_{\alpha,\beta}^2
  &= \frac{(\beta-\bar{\alpha}) \sqrt{\bar\alpha\beta} (-\beta + 2 \sqrt{\beta\bar\alpha})}{
     2 \beta^2 (1 - \alpha\bar\beta -\sqrt{\bar\alpha\beta})} + \log\left(\frac{\bar\alpha\bar\beta}{1 - \alpha\bar\beta -\sqrt{\bar\alpha\beta}}\right)\\
  &\le \frac{(\beta-\bar{\alpha}) \sqrt{\bar\alpha} (-\sqrt{\beta} + 2 \sqrt{\bar\alpha})}{
     2 \beta (1 - \alpha\bar\beta -\sqrt{\beta\bar\alpha})} + \left(\frac{\bar\alpha\bar\beta}{1 - \alpha\bar\beta -\sqrt{\bar\alpha\beta}}\right)-1\\
  &= \frac{(\sqrt{\beta}-\sqrt{\bar{\alpha}})}{2 \beta(1 - \alpha\bar\beta -\sqrt{\bar\alpha\beta})} (\bar\alpha\sqrt{\beta} +2\beta\sqrt{\beta} +\sqrt{\bar{\alpha}}(2\bar{\alpha}-\beta))\\
  &\le 0.
\end{align*}
The last inequality follows from $\sqrt{\beta}-\sqrt{\bar{\alpha}}\le0$. Repeating the same steps as was done for $F_{\alpha,\beta}^1$, we find that $F_{\alpha,\beta}^2\ge0$, which, in turn, gives that $-f_{\alpha,\beta}(p(z_U))\ge0$ as required. $\hfill\blacksquare$

\section{Proof of Theorem \ref{theorem:asymptotic}}\label{app:shannon}
Throughout this section, we use $x=\alpha\bar{\alpha}$, and $x', x''$ to stand for the first and second derivatives of $x$, respectively. Recall that $p_\alpha$ in Corollary \ref{coro:BSC} is the solution for $(\alpha\bar{\alpha})\log(\alpha\bar{\alpha}) + 2\log(1-p) = (1+\alpha\bar{\alpha})\log(p-\alpha\bar{\alpha})$, and let $p'_\alpha$ denote its first derivative. The next lemma concerns $p'_\alpha$ and is the foundation for the proof of Theorem \ref{theorem:asymptotic}.
\begin{lemma}\label{lemma:p_derivative}
The first derivative of $p_\alpha$ is:
\begin{align*}
  p'_\alpha&= (1-2\alpha)\frac{(1-p_\alpha)(p_\alpha-\alpha\bar{\alpha})}{(\alpha\bar{\alpha}-1)(1+p_\alpha)}\left[\log \left(\frac{p_\alpha-\alpha\bar{\alpha}}{\bar{\alpha}}\right)-\log \alpha\right] + (1-2\alpha)\frac{1-p_\alpha}{1-\alpha\bar{\alpha}}\\
&\triangleq  K_2(\alpha) - K_1(\alpha)\log \alpha,
\end{align*}
with the defined functions:
\begin{align*}
    K_1(\alpha)&\triangleq  (1-2\alpha)\frac{(1-p_\alpha)(p_\alpha-\alpha\bar{\alpha})}{(\alpha\bar{\alpha}-1)(1+p_\alpha)} \\
    K_2(\alpha)&\triangleq K_1(\alpha) \log \left(\frac{p_\alpha-\alpha\bar{\alpha}}{\bar{\alpha}}\right) + (1-2\alpha)\frac{1-p_\alpha}{1-\alpha\bar{\alpha}}.
\end{align*}
\end{lemma}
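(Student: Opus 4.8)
\textbf{Proof plan for Lemma~\ref{lemma:p_derivative}.}
The plan is to differentiate the defining equation for $p_\alpha$ implicitly with respect to $\alpha$ and then solve for $p'_\alpha$. Recall that $p_\alpha$ is characterized by
\begin{equation*}
G(\alpha,p) \triangleq (\alpha\bar{\alpha})\log(\alpha\bar{\alpha}) + 2\log(1-p) - (1+\alpha\bar{\alpha})\log(p-\alpha\bar{\alpha}) = 0.
\end{equation*}
With the shorthand $x = \alpha\bar{\alpha}$ and $x' = \tfrac{d}{d\alpha}(\alpha\bar\alpha) = 1-2\alpha$, I would first compute the partial derivative $\partial G/\partial \alpha$ at fixed $p$. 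Differentiating term by term: $\tfrac{d}{d\alpha}\big(x\log x\big) = x'(\log x + 1)$; $\tfrac{d}{d\alpha}\big(2\log(1-p)\big) = 0$ at fixed $p$; and $\tfrac{d}{d\alpha}\big(-(1+x)\log(p-x)\big) = -x'\log(p-x) + (1+x)\tfrac{x'}{p-x}$. So $\partial G/\partial\alpha = x'\big[\log x + 1 - \log(p-x) + \tfrac{1+x}{p-x}\big]$. Next, $\partial G/\partial p = \tfrac{-2}{1-p} - \tfrac{1+x}{p-x}$. By the implicit function theorem, $p'_\alpha = -\,(\partial G/\partial\alpha)\big/(\partial G/\partial p)$, evaluated at $p = p_\alpha$.

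The remaining work is purely algebraic simplification of this ratio into the stated closed form. I would clear the common factor $x' = 1-2\alpha$ out front, put $\partial G/\partial p$ over the common denominator $(1-p)(p-x)$ to get $\partial G/\partial p = -\tfrac{2(p-x) + (1+x)(1-p)}{(1-p)(p-x)}$, and note that the numerator here equals $(1-x) - p(x-1) = (1-x)(1+p)$ after expansion, so $\partial G/\partial p = -\tfrac{(1-x)(1+p)}{(1-p)(p-x)}$. Substituting back,
\begin{equation*}
p'_\alpha = (1-2\alpha)\,\frac{(1-p_\alpha)(p_\alpha-x)}{(1-x)(1+p_\alpha)}\,\Big[\log x + 1 - \log(p_\alpha-x) + \tfrac{1+x}{p_\alpha-x}\Big].
\end{equation*}
At this point I would use the defining relation $G(\alpha,p_\alpha)=0$ itself, in the rearranged form $(1+x)\log(p_\alpha-x) = x\log x + 2\log(1-p_\alpha)$, to eliminate one of the logarithmic terms; this is what converts the bracket into the combination $\log\big(\tfrac{p_\alpha-x}{\bar\alpha}\big) - \log\alpha$ appearing in $K_1,K_2$, after also splitting $\log x = \log\alpha + \log\bar\alpha$ and absorbing the $\log\bar\alpha$ and the affine pieces appropriately. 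Matching signs carefully — note $(1-x)$ versus $(x-1) = -(1-x)$ in the stated denominator $(\alpha\bar\alpha-1)(1+p_\alpha)$ — yields exactly $K_2(\alpha) - K_1(\alpha)\log\alpha$ with $K_1,K_2$ as defined.

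The main obstacle is bookkeeping rather than any conceptual difficulty: getting every sign right through the simplification (in particular the sign flip hidden in writing the denominator as $(\alpha\bar\alpha-1)$ rather than $(1-\alpha\bar\alpha)$), and correctly using the implicit equation to trade the $\log(p_\alpha-x)$ and the leftover constant/rational terms for the clean $\log\big(\tfrac{p_\alpha-\alpha\bar\alpha}{\bar\alpha}\big)$ and the extra summand $(1-2\alpha)\tfrac{1-p_\alpha}{1-\alpha\bar\alpha}$ in $K_2$. One should also remark that $\partial G/\partial p \neq 0$ along the solution curve (which holds since $0<p_\alpha<1$ and $p_\alpha > \alpha\bar\alpha$, so all factors are strictly positive), justifying the use of the implicit function theorem and the uniqueness/differentiability of $p_\alpha$ already asserted in Corollary~\ref{coro:BSC}.
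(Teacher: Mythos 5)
Your proposal is correct and is essentially the same as the paper's proof: both differentiate the defining equation implicitly with respect to $\alpha$ and solve for $p'_\alpha$; your implicit-function-theorem phrasing and your partials $\partial G/\partial\alpha$, $\partial G/\partial p$ (with the numerator of $\partial G/\partial p$ collapsing to $(1-x)(1+p)$) match the paper's computation exactly.

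One small correction to your outlined final step: you do \emph{not} need to invoke the defining relation $G(\alpha,p_\alpha)=0$ to reduce the bracket. The bracket $\log x + 1 - \log(p_\alpha-x) + \tfrac{1+x}{p_\alpha-x}$ separates cleanly into a log part $\log x - \log(p_\alpha - x) = -\log\!\left(\tfrac{p_\alpha-x}{x}\right)$, which (after splitting $\log x = \log\alpha + \log\bar\alpha$) gives the first term of $K_2 - K_1\log\alpha$, and a rational part $1 + \tfrac{1+x}{p_\alpha-x} = \tfrac{1+p_\alpha}{p_\alpha-x}$, whose $(1+p_\alpha)$ factor cancels against the $(1+p_\alpha)$ in the prefactor's denominator to leave exactly $(1-2\alpha)\tfrac{1-p_\alpha}{1-\alpha\bar\alpha}$. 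So the reduction to the stated form is purely algebraic; reaching for the defining relation here would only add detours.
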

Note that $p_0 = 2-\lambda$, so $K_1(\alpha)$ and $K_2(\alpha)$ are defined at $\alpha=0$:
 \begin{align*}
    K_1(0)&=  -\frac{p_0(1-p_0)}{1+p_0} \\
    K_2(0)&= K_1(0) \log p_0 + 1-p_0.
\end{align*}

\begin{proof}[Proof of Lemma \ref{lemma:p_derivative}]
We calculate the first derivative for each side of $2\log(1-p_\alpha) = (1+\alpha\bar{\alpha})\log(p_\alpha-\alpha\bar{\alpha})-(\alpha\bar{\alpha})\log(\alpha\bar{\alpha})$ so we have:
\begin{align}\label{eq:first_side_deriva}
    \frac{-2}{1-p_\alpha}p'_\alpha &= x'\left[\log \left(\frac{p_\alpha-x}{x}\right) - \frac{1+p_\alpha}{p_\alpha-x}\right] + \frac{1+x}{p_\alpha-x}p'_\alpha.
\end{align}
Arranging both sides of \eqref{eq:first_side_deriva} gives the desired equation:
\begin{align*}
p'_\alpha &= (1-2\alpha)\frac{(1-p_\alpha)(p_\alpha-\alpha\bar{\alpha})}{(\alpha\bar{\alpha}-1)(1+p_\alpha)}\log \left(\frac{p_\alpha-\alpha\bar{\alpha}}{\alpha\bar{\alpha}}\right) + (1-2\alpha)\frac{1-p_\alpha}{1-\alpha\bar{\alpha}}.
\end{align*}
\end{proof}

The next lemma is technical and is made to shorten the proof of Theorem \ref{theorem:asymptotic}:
\begin{lemma}\label{lemma:strong_coefficient}
Define $K_3(\alpha)=\frac{x'}{1+p_\alpha}$, then it can be expressed as
\begin{align*}
  K_3(\alpha) &= \frac{1}{1+p_0} + N\alpha\log \alpha + o(\alpha\log \alpha),
\end{align*}
where $N$ is a constant. 
\end{lemma}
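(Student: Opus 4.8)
The plan is to reduce the statement to the asymptotics of $p_\alpha$ near $\alpha=0$, which follows from Lemma~\ref{lemma:p_derivative} after a single integration, and then to push this expansion through the elementary map $z\mapsto (1-2\alpha)/(1+z)$. Since $x=\alpha\bar\alpha$, we have $x'=1-2\alpha$, hence $K_3(\alpha)=\frac{1-2\alpha}{1+p_\alpha}$ and in particular $K_3(0)=\frac{1}{1+p_0}$, matching the claimed constant term. So it suffices to show that $p_\alpha=p_0-K_1(0)\,\alpha\log\alpha+o(\alpha\log\alpha)$.

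I would carry this out in two passes (a bootstrap). \emph{First pass:} by Lemma~\ref{lemma:p_derivative}, $p'_s=K_2(s)-K_1(s)\log s$, and $K_1,K_2$ are continuous --- hence bounded --- on a neighbourhood $[0,\alpha_0]$ of $0$ (this uses $p_0=2-\lambda$, the $\alpha=0$ root of the equation in Corollary~\ref{coro:BSC}, so that the denominators of $K_1,K_2$ do not vanish at $0$). Since $\int_0^\alpha|\log s|\,ds=\alpha(1-\log\alpha)=O(\alpha|\log\alpha|)$, integrating gives the crude bound $p_\alpha-p_0=\int_0^\alpha p'_s\,ds=O(\alpha|\log\alpha|)$. \emph{Second pass:} $K_1,K_2$ are rational (hence smooth) functions of $(\alpha,p_\alpha)$ near $(0,p_0)$, so $K_j(\alpha)-K_j(0)=O(\alpha)+O(|p_\alpha-p_0|)=O(\alpha|\log\alpha|)$; therefore $(K_1(s)-K_1(0))\log s=O(s\log^2 s)$, which integrates to $O(\alpha^2\log^2\alpha)=o(\alpha\log\alpha)$. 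Combining this with $\int_0^\alpha\log s\,ds=\alpha\log\alpha-\alpha$ and the values of $K_j(0)$ recorded in the excerpt,
\[
p_\alpha-p_0=-K_1(0)\,\alpha\log\alpha+O(\alpha)=-K_1(0)\,\alpha\log\alpha+o(\alpha\log\alpha),
\]
the last step because $\alpha=o(\alpha\log\alpha)$ as $\alpha\to0^+$.

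Finally I would substitute into $K_3(\alpha)=\frac{1-2\alpha}{1+p_\alpha}$. Writing $\Delta:=p_\alpha-p_0$ and Taylor-expanding $z\mapsto 1/(1+z)$ about $z=p_0$,
\[
\frac{1}{1+p_\alpha}=\frac{1}{1+p_0}-\frac{\Delta}{(1+p_0)^2}+O(\Delta^2)=\frac{1}{1+p_0}+\frac{K_1(0)}{(1+p_0)^2}\,\alpha\log\alpha+o(\alpha\log\alpha),
\]
since $\Delta^2=O(\alpha^2\log^2\alpha)=o(\alpha\log\alpha)$. Multiplying by $1-2\alpha$ adds only an $O(\alpha)=o(\alpha\log\alpha)$ correction, so $K_3(\alpha)=\frac{1}{1+p_0}+N\alpha\log\alpha+o(\alpha\log\alpha)$ with $N=\frac{K_1(0)}{(1+p_0)^2}=-\frac{p_0(1-p_0)}{(1+p_0)^3}$. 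The only non-routine point is the error control in the expansion of $p_\alpha$ --- specifically, bounding $\int_0^\alpha(K_1(s)-K_1(0))\log s\,ds$, which is exactly what the two-pass argument above handles; everything else is a one-step Taylor expansion.
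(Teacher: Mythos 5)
Your proof is correct, and it takes a genuinely different route from the paper's. The paper proceeds by \emph{subtracting the singular term}: it forms $g(\alpha)=K_3(\alpha)-\frac{x'K_1(\alpha)}{(1+p_\alpha)^2}\alpha\log\alpha$, computes $g'(\alpha)$ using $p'_\alpha=K_2(\alpha)-K_1(\alpha)\log\alpha$, observes that the $\log\alpha$ contributions cancel so that $g'(\alpha)$ has a finite limit as $\alpha\to 0$, and then writes $g(\alpha)=g(0)+g'(0)\alpha+o(\alpha)$; a second, cruder expansion of $\frac{x'K_1(\alpha)}{(1+p_\alpha)^2}$ to its constant term finishes it off. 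You instead work at the level of $p_\alpha$ directly: integrate Lemma~\ref{lemma:p_derivative} once to get the crude bound $p_\alpha-p_0=O(\alpha|\log\alpha|)$, feed that back in to obtain the sharp asymptotic $p_\alpha=p_0-K_1(0)\alpha\log\alpha+o(\alpha\log\alpha)$, and then push through the smooth map $z\mapsto\frac{1-2\alpha}{1+z}$. The two arguments yield the same $N=\frac{K_1(0)}{(1+p_0)^2}$. Your bootstrap is arguably cleaner: the paper's step ``$g(\alpha)=g(0)+g'(0)\alpha+o(\alpha)$'' is invoked under the heading of a ``Taylor series approximation'' even though $g$ is not analytic at $0$ (one must actually argue that $g'(\alpha)\to g'(0^+)$, which is true but left tacit), whereas you only integrate an explicitly integrable derivative. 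One tiny slip: $K_2$ is not \emph{rational} in $(\alpha,p_\alpha)$ -- it contains $\log\!\bigl((p_\alpha-\alpha\bar\alpha)/\bar\alpha\bigr)$ -- but it is analytic near $(0,p_0)$ because the argument of the logarithm is bounded away from $0$, and analyticity is all you use; so the conclusion $K_j(\alpha)-K_j(0)=O(\alpha)+O(|p_\alpha-p_0|)$ stands.
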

The proof of Lemma \ref{lemma:strong_coefficient} appears in Appendix \ref{subsc:strong_coeffi}. We are now ready to prove the main result of this section.
\begin{proof}[Proof of Theorem \ref{theorem:asymptotic}]
Consider the next chain of equalities:
\begin{align}\label{eq:capacity_taylor}
    & C^{\mathrm{BSC}}(\alpha) + H_2(\alpha) +  K_3(\alpha)\alpha\log\alpha \nonumber \\
    & \stackrel{(a)}= \log (1-p_\alpha) -\log (p_\alpha-x) + K_3(\alpha)\alpha\log\alpha\nonumber \\
    &\stackrel{(b)}= \log \left(\frac{1-p_0}{p_0}\right) + \left[ \frac{-p'_\alpha}{1-p_\alpha} - \frac{p'_\alpha-x'}{p_\alpha-x}  + K'_3(\alpha)\alpha\log\alpha + K_3(\alpha)[1+\log\alpha]\right]_{\alpha=0}\alpha + o(\alpha)\nonumber \\
    &\stackrel{(c)}= \log \left(\frac{1-p_0}{p_0}\right) + \left[ p'_\alpha \frac{x-1}{(1-p_\alpha)(p_\alpha-x)} + \frac{x'}{p_\alpha-x}  + K_3(\alpha)[1+\log\alpha]\right]_{\alpha=0}\alpha + o(\alpha)\nonumber \\
    &\stackrel{(d)}= \log \left(\frac{1-p_0}{p_0}\right) +  \left[ [K_2(\alpha)-K_1(\alpha)\log \alpha ] \frac{x-1}{(1-p_\alpha)(p_\alpha-x)} + \frac{x'}{p_\alpha-x}  + K_3(\alpha)[1+\log\alpha]\right]_{\alpha=0} \alpha+ o(\alpha)\nonumber \\
    &\stackrel{(e)}= \log \left(\frac{1-p_0}{p_0}\right) +  M \alpha+  \left[ K_1(\alpha)\log \alpha \frac{1-x}{(1-p_\alpha)(p_\alpha-x)} + K_3(\alpha)\log\alpha\right]_{\alpha=0} \alpha+ o(\alpha)\nonumber \\
    &= \log \left(\frac{1-p_0}{p_0}\right) + M \alpha+ \left[ \frac{-x'}{1+p_\alpha}\log \alpha + K_3(\alpha)\log\alpha \right]_{\alpha=0}\alpha + o(\alpha)\nonumber \\
    &= \log \left(\frac{1-p_0}{p_0}\right) + M \alpha+ o(\alpha)
\end{align}
where:
 \begin{itemize}
   \item[(a)] follows from Corollary \ref{coro:BSC};
   \item[(b)] follows from the Taylor series approximation $f(\alpha) = f(0) + f'(0) \alpha + O(\alpha^2)$;
   \item[(c)] follows from Lemma \ref{lemma:strong_coefficient}, specifically, $\lim_{\alpha\rightarrow 0} K'_3(\alpha)\alpha\log\alpha=0$;
   \item[(d)] follows from Lemma \ref{lemma:p_derivative}, specifically, $p'_\alpha=K_2(\alpha)-K_1(\alpha)\log \alpha$;
   \item[(e)] follows from the notation $M \triangleq K_2(0) \frac{-1}{(1-p_0)(p_0)} + \frac{1}{p_0} + K_3(0)$.
 \end{itemize}
Thus, we have from \eqref{eq:capacity_taylor} that $C^{\mathrm{BSC}}(\alpha) + H_2(\alpha) +  K_3(\alpha)\alpha\log\alpha = \log \left(\frac{1-p_0}{p_0}\right) + M\alpha + o(\alpha)$.

The derivation is completed with the following equalities:
\begin{align*}
  C^{\mathrm{BSC}}(\alpha) &= \log \left(\frac{1-p_0}{p_0}\right) - K_3(\alpha)\alpha\log\alpha - H_2(\alpha) +  M\alpha + o(\alpha) \\
  &\stackrel{(a)}= \log \lambda - [K_3(0) + N\alpha\log\alpha + o(\alpha\log\alpha)]\alpha\log\alpha - H_2(\alpha) +  M\alpha + o(\alpha) \\
  &\stackrel{(b)}= \log \lambda + \frac{2-\lambda}{3-\lambda}\alpha\log\alpha  + \left( \frac{\log(2-\lambda)-(2-\lambda)}{3-\lambda}\right)\alpha + O(\alpha^2\log^2\alpha)
\end{align*}
where:
 \begin{itemize}
   \item[(a)] follows from $p_0 = 2-\lambda$ and Lemma \ref{lemma:strong_coefficient};
   \item[(b)] follows from $H_2(\alpha)= \alpha- \alpha\log\alpha + o(\alpha)$ and arranging the equation.
 \end{itemize}
\end{proof}

\subsection{Proof of Lemma \ref{lemma:strong_coefficient}}\label{subsc:strong_coeffi}
By a Taylor series approximation, we have
\begin{align*}
  &\frac{x'}{1+p_\alpha} - \frac{x'K_1(\alpha)}{(1+p_\alpha)^2}\alpha\log \alpha \\
  &= \frac{1}{1+p_0}  +\left[\frac{x''}{1+p_\alpha} - p'_\alpha\frac{x'}{(1+p_\alpha)^2} - \left(\frac{x'K_1(\alpha)}{(1+p_\alpha)^2}\right)'\alpha\log \alpha - \frac{x'K_1(\alpha)}{(1+p_\alpha)^2}(1+\log \alpha)\right]_{\alpha=0}\alpha + o(\alpha)\\
  &\stackrel{(a)}=  \frac{1}{1+p_0} + C \alpha+\left[ - p'_\alpha\frac{x'}{(1+p_\alpha)^2} - \frac{x'K_1(\alpha)}{(1+p_\alpha)^2}\log \alpha\right]_{\alpha=0}\alpha + o(\alpha)\\
  &\stackrel{(b)}=  \frac{1}{1+p_0}  + C\alpha +\left[K_2(\alpha)\frac{-x'}{(1+p_\alpha)^2} \right]_{\alpha=0}\alpha + o(\alpha)\\
  &\stackrel{(c)}=  \frac{1}{1+p_0}  + \tilde{C} \alpha+ o(\alpha)\\
\end{align*}
where $(a)$ follows from the fact that $\lim_{\alpha\rightarrow 0}\left(\frac{x'K_1(\alpha)}{(1+p_\alpha)^2}\right)'\alpha\log \alpha=0$ and the notation $C=\frac{-2}{1+p_\alpha}- \frac{K_1(0)}{(1+p_0)^2}$, $(b)$ follows from $p'_\alpha=K_2(\alpha)- K_1(\alpha)\alpha\log \alpha$, and finally, $(c)$ follows from the notation $\tilde{C} = C - \frac{K_2(0)}{(1+p_0)^2}$.

So, we have that
\begin{align*}
 \frac{x'}{1+p_\alpha}  &= \frac{1}{1+p_0}  + \tilde{C} \alpha+ o(\alpha) + \frac{x'K_1(\alpha)}{(1+p_\alpha)^2}\alpha\log \alpha.
\end{align*}

Applying the Taylor series approximation once again on $\frac{x'K_1(\alpha)}{(1+p_\alpha)^2}$ gives that:
\begin{align*}
\frac{x'K_1(\alpha)}{(1+p_\alpha)^2}&= \frac{K_1(0)}{(1+p_0)^2} + h(\alpha),
\end{align*}
where $h(\alpha)$ is some function such that $\lim_{\alpha\rightarrow 0}h(\alpha)=0$.

Combining the last two derivations, we have the required equality, i.e.,
\begin{align*}
 \frac{x'}{1+p_\alpha}  &= \frac{1}{1+p_0}  + \tilde{C}\alpha  + o(\alpha) + \left[ \frac{K_1(0)}{(1+p_0)^2} + h(\alpha)\right]\alpha\log \alpha \\
                        &= \frac{1}{1+p_0} + N \alpha\log \alpha + o(\alpha\log \alpha),
\end{align*}
where $N=\frac{K_1(0)}{(1+p_0)^2}$. $\hfill\blacksquare$

\section{Proof of Theorem \ref{theorem:sol_bellman}}\label{app:proof_bellman}
The following lemma is technical and is useful for understanding the structure of $\tilde{h}_{\alpha,\beta}(z)$.
\begin{lemma}\label{lemma:properties_of_h}
For all $[\alpha,\beta]\in[0,1]\times[0,1]$ s.t. $\alpha+\beta\leq 1$,
\begin{enumerate}
  \item The function $\tilde{h}_{\alpha,\beta}(z)$ is continuous on $[0,1]$.
  \item The function $\tilde{h}_{\alpha,\beta}(z)$ is concave on $[0,1]$.
  \item The only maximum of $h_2^{\alpha,\beta}(z)$ is attained at $z=z_2^{\alpha,\beta}$, and its value is $\tilde{\rho}_{\alpha,\beta}$.
  \item The first derivative of $h^{\alpha,\beta}_1(z)$ is non-negative for $z\in[0,z_1^{\alpha,\beta}]$.
\end{enumerate}
\end{lemma}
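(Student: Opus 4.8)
The plan is to reduce all four claims to a handful of algebraic identities for the functions $p(z)=\alpha\bar{z}+\bar{\beta}z$ and $\text{arg2}_{\alpha,\beta}(z)=\alpha\bar{z}/p(z)$, together with the fact that $z_2^{\alpha,\beta}$ is the interior maximizer of $R_{\alpha,\beta}$. I would first record, all by direct substitution: (a) $p\bigl(\text{arg2}_{\alpha,\beta}(z)\bigr)=\alpha\bar{\beta}/p(z)$, hence $p(z)\,p\bigl(\text{arg2}_{\alpha,\beta}(z)\bigr)=\alpha\bar{\beta}$ and $p(z)\,\text{arg2}_{\alpha,\beta}(z)=\alpha\bar{z}$ (affine in $z$); (b) $\text{arg2}_{\alpha,\beta}(z_2^{\alpha,\beta})=z_1^{\alpha,\beta}$ and $\text{arg2}_{\alpha,\beta}(z_1^{\alpha,\beta})=z_2^{\alpha,\beta}$, so in particular $p(z_1^{\alpha,\beta})=\alpha\bar{\beta}/p(z_2^{\alpha,\beta})$; and (c) $R_{\alpha,\beta}(z)=\bigl(h_1^{\alpha,\beta}(z)+p(z)\,h_1^{\alpha,\beta}(\text{arg2}_{\alpha,\beta}(z))\bigr)/(1+p(z))$, which follows from expanding $h_1^{\alpha,\beta}$ and using $p(z)(1-\text{arg2}_{\alpha,\beta}(z))=\bar\beta z$, $p(z)\,\text{arg2}_{\alpha,\beta}(z)=\alpha\bar z$. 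Feeding (a) and (c) into the definition of $h_2^{\alpha,\beta}$, with $X^{\alpha,\beta}(z)=h_1^{\alpha,\beta}(z)-p(z)\tilde{\rho}_{\alpha,\beta}$, yields the master identity
\begin{equation*}
h_2^{\alpha,\beta}(z)-\tilde{\rho}_{\alpha,\beta}=\frac{1+p(z)}{1-\alpha\bar{\beta}}\bigl(R_{\alpha,\beta}(z)-\tilde{\rho}_{\alpha,\beta}\bigr).
\end{equation*}

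The third claim then drops out: since $1+p(z)>0$ and $1-\alpha\bar{\beta}>0$, the master identity gives $h_2^{\alpha,\beta}(z)\le\tilde{\rho}_{\alpha,\beta}$ with equality exactly when $R_{\alpha,\beta}(z)=\tilde{\rho}_{\alpha,\beta}$; and since (by the computation in Appendix~\ref{app:maximization_domain}) $R'_{\alpha,\beta}$ has the sign of the strictly decreasing function $f_{\alpha,\beta}$, $R_{\alpha,\beta}$ is strictly increasing then strictly decreasing, so its only maximizer on $[0,1]$ is $z_2^{\alpha,\beta}$. For the first claim, continuity on the interior of each of the three branches is clear; at $z_2^{\alpha,\beta}$ the left value $h_2^{\alpha,\beta}(z_2^{\alpha,\beta})=\tilde{\rho}_{\alpha,\beta}$ (by the third claim) matches the constant branch; and at $z_1^{\alpha,\beta}$, evaluating the master identity there and using $p(z_1^{\alpha,\beta})=\alpha\bar{\beta}/p(z_2^{\alpha,\beta})$ and $\text{arg2}_{\alpha,\beta}(z_1^{\alpha,\beta})=z_2^{\alpha,\beta}$ reduces the needed equality $h_1^{\alpha,\beta}(z_1^{\alpha,\beta})=h_2^{\alpha,\beta}(z_1^{\alpha,\beta})$ to the tautology $\tilde{\rho}_{\alpha,\beta}=R_{\alpha,\beta}(z_2^{\alpha,\beta})$. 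For the fourth claim, $\frac{d}{dz}h_1^{\alpha,\beta}(z)=(\bar{\beta}-\alpha)\log\frac{1-p(z)}{p(z)}+H_2(\alpha)-H_2(\beta)$ is non-increasing in $z$ because $p$ is non-decreasing, so it suffices to check non-negativity at the right endpoint $z_1^{\alpha,\beta}$; by Lemma~\ref{lemma:maximization_domain}, $z_2^{\alpha,\beta}\ge z_L$, and since $p(z_L)=\sqrt{\alpha\bar{\beta}}$, identity (b) gives $p(z_1^{\alpha,\beta})=\alpha\bar{\beta}/p(z_2^{\alpha,\beta})\le\sqrt{\alpha\bar{\beta}}=p(z_L)$, whence $\frac{d}{dz}h_1^{\alpha,\beta}(z_1^{\alpha,\beta})\ge\frac{d}{dz}h_1^{\alpha,\beta}(z_L)=\frac{f_{\alpha,\beta}(p(z_L))}{1-\alpha\bar{\beta}}\ge0$, the last step being \eqref{eq:f_subs_ZL}--\eqref{eq:domain_LB}.

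The second claim is where the work lies. Concavity of $h_1^{\alpha,\beta}$ on $[0,1]$ is immediate, being $H_2$ composed with the affine map $z\mapsto p(z)$ plus an affine function. For $h_2^{\alpha,\beta}$ on $[z_1^{\alpha,\beta},z_2^{\alpha,\beta}]$, I would expand $(1-\alpha\bar{\beta})h_2^{\alpha,\beta}(z)=X^{\alpha,\beta}(z)+p(z)X^{\alpha,\beta}(\text{arg2}_{\alpha,\beta}(z))$; using $p(z)\,\text{arg2}_{\alpha,\beta}(z)=\alpha\bar{z}$ from (a), every term other than $H_2(p(z))$ and $p(z)H_2(\alpha\bar{\beta}/p(z))$ collects into an affine function of $z$, so $(1-\alpha\bar{\beta})h_2^{\alpha,\beta}(z)=\Psi(p(z))+(\text{affine in }z)$ with $\Psi(u):=H_2(u)+uH_2(\alpha\bar{\beta}/u)$. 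A one-line computation gives $\Psi''(u)=-\frac{1}{1-u}-\frac{1}{u-\alpha\bar{\beta}}<0$ for $u\in(\alpha\bar{\beta},1)$, and since $p$ is monotone with $p(z_1^{\alpha,\beta})=\alpha\bar{\beta}/p(z_2^{\alpha,\beta})\in(\alpha\bar{\beta},1)$ and $p(z_2^{\alpha,\beta})<1$, the image $p([z_1^{\alpha,\beta},z_2^{\alpha,\beta}])$ lies in $(\alpha\bar{\beta},1)$, so $h_2^{\alpha,\beta}$ is concave there. Given the first claim and concavity of each branch, it remains to check that the left derivative dominates the right derivative at each breakpoint. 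At $z_2^{\alpha,\beta}$ this is immediate: differentiating the master identity gives $\frac{d}{dz}h_2^{\alpha,\beta}(z)=\frac{(\bar\beta-\alpha)(R_{\alpha,\beta}(z)-\tilde\rho_{\alpha,\beta})+(1+p(z))R'_{\alpha,\beta}(z)}{1-\alpha\bar\beta}$, which vanishes at $z_2^{\alpha,\beta}$ (where $R_{\alpha,\beta}=\tilde\rho_{\alpha,\beta}$ and $R'_{\alpha,\beta}=0$), matching the derivative of the constant branch. The remaining, and hardest, point is the breakpoint $z_1^{\alpha,\beta}$: from the expansion above, $\frac{d}{dz}h_2^{\alpha,\beta}(z)=\frac{1}{1-\alpha\bar{\beta}}\bigl[(\bar{\beta}-\alpha)\log\frac{1-p(z)}{p(z)-\alpha\bar{\beta}}+\beta H_2(\alpha)-\bar{\alpha}H_2(\beta)-(\bar{\beta}-\alpha)\tilde{\rho}_{\alpha,\beta}\bigr]$, and substituting the value of $\tilde{\rho}_{\alpha,\beta}$ forced by $\frac{d}{dz}h_2^{\alpha,\beta}(z_2^{\alpha,\beta})=0$ (equivalently the alternative formula \eqref{eq:BIBO_alternative}, which follows from $R'_{\alpha,\beta}(z_2^{\alpha,\beta})=0$ by calculus, not from the Bellman equation), then substituting the defining equation $f_{\alpha,\beta}(z_2^{\alpha,\beta})=0$ of \eqref{eq:BIBO_arg_equation} for the term $H_2(\alpha)-H_2(\beta)$ in $\frac{d}{dz}h_1^{\alpha,\beta}(z_1^{\alpha,\beta})$, and using $p(z_1^{\alpha,\beta})=\alpha\bar{\beta}/p(z_2^{\alpha,\beta})$, all the logarithmic terms cancel and, writing $p_2:=p(z_2^{\alpha,\beta})$, one is left with
\begin{equation*}
\frac{d}{dz}h_2^{\alpha,\beta}(z_1^{\alpha,\beta})-\frac{d}{dz}h_1^{\alpha,\beta}(z_1^{\alpha,\beta})=\frac{\bar{\beta}-\alpha}{1-\alpha\bar{\beta}}\Bigl[\alpha\bar{\beta}\log\bigl(p_2-\alpha\bar{\beta}\bigr)-\log(1+\alpha\bar{\beta})\Bigr]\le0,
\end{equation*}
since $0<p_2-\alpha\bar{\beta}<1$ and $\alpha\bar{\beta}\ge0$ force the bracket to be non-positive. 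This algebraic bookkeeping — juggling the two equivalent descriptions of $z_2^{\alpha,\beta}$ so that all the logarithmic terms cancel — is the main obstacle; everything else is routine. The degenerate boundary case $\alpha+\beta=1$, where $p$ is constant and $R_{\alpha,\beta}$ affine, should be dispatched separately by noting that $\tilde{h}_{\alpha,\beta}$ is then itself affine on $[0,z_2^{\alpha,\beta}]$ and constant afterward.
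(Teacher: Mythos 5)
Your proof is correct and does reach all four conclusions, but it organizes the work differently from the paper, and there is one computational slip worth flagging.

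The genuinely new idea in your write-up is the ``master identity''
\begin{equation*}
h_2^{\alpha,\beta}(z)-\tilde{\rho}_{\alpha,\beta}=\frac{1+p(z)}{1-\alpha\bar{\beta}}\bigl(R_{\alpha,\beta}(z)-\tilde{\rho}_{\alpha,\beta}\bigr),
\end{equation*}
which does not appear in the paper's argument. It makes Item 3 immediate (since both prefactors are positive and $z_2^{\alpha,\beta}$ is the unique maximizer of $R_{\alpha,\beta}$), it gives continuity at $z_2^{\alpha,\beta}$ for free, and it reduces continuity at $z_1^{\alpha,\beta}$ to the tautology $R_{\alpha,\beta}(z_2^{\alpha,\beta})=\tilde\rho_{\alpha,\beta}$. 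The paper instead verifies \eqref{eq:lemma_continuity} by direct computation, establishes $h_2^{\alpha,\beta}(z_2^{\alpha,\beta})=\tilde\rho_{\alpha,\beta}$ and $h_1^{\alpha,\beta}(z_1^{\alpha,\beta})=h_2^{\alpha,\beta}(z_1^{\alpha,\beta})$ by chaining identities such as \eqref{eq:h2_at_z_relation}--\eqref{eq:h1_at_s1}, and then derives Items 3 and 4 as corollaries of concavity and $z_1^{\alpha,\beta}\le z_2^{\alpha,\beta}$. Your route for Item 4 (monotone decrease of $h_1'$ in $z$, reduction to the endpoint $z_L$ via $p(z_1^{\alpha,\beta})=\alpha\bar\beta/p(z_2^{\alpha,\beta})\le\sqrt{\alpha\bar\beta}$, and then invoking \eqref{eq:f_subs_ZL}--\eqref{eq:domain_LB}) is a self-contained alternative to the paper's ``follows from concavity'' shortcut. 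For the concavity of $h_2^{\alpha,\beta}$, your decomposition into $\Psi(p(z))$ plus an affine part, with $\Psi(u)=H_2(u)+uH_2(\alpha\bar\beta/u)$ and $\Psi''(u)=-\tfrac{1}{1-u}-\tfrac{1}{u-\alpha\bar\beta}$, is equivalent in substance to the paper's perspective-function argument but cleaner to verify. Overall your organization buys economy: one identity does the work of several separate verifications.

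The one error is at the breakpoint $z_1^{\alpha,\beta}$. The string $\log(p-\alpha\bar{\beta})(1+\alpha\bar{\beta})$ in \eqref{eq:BIBO_arg_equation} should be parsed as $(1+\alpha\bar\beta)\log(p-\alpha\bar\beta)$; this is confirmed by the simplification that yields \eqref{eq:f_subs_ZL}, which does not go through under the other reading. With the correct parsing, the substitution you describe cancels \emph{all} the logarithmic terms, and one finds exactly $\frac{d}{dz}h_2^{\alpha,\beta}(z_1^{\alpha,\beta})=\frac{d}{dz}h_1^{\alpha,\beta}(z_1^{\alpha,\beta})$, i.e., the one-sided derivatives are \emph{equal}, precisely as the paper proves via \eqref{concavity_cond1}. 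Your stated remainder $\frac{\bar\beta-\alpha}{1-\alpha\bar\beta}\bigl[\alpha\bar\beta\log(p_2-\alpha\bar\beta)-\log(1+\alpha\bar\beta)\bigr]$ is what one gets under the misreading $\log\bigl[(p-\alpha\bar\beta)(1+\alpha\bar\beta)\bigr]$. Since that expression is non-positive, your final conclusion $h_1'(z_1^{\alpha,\beta})\ge h_2'(z_1^{\alpha,\beta})$ is still true, and concavity still follows, so the slip does not invalidate anything; but the expression should read $0$, not the claimed bracket, and the subsequent remark that ``the hardest point'' is this bookkeeping is somewhat overstated, since the cancellation is in fact complete.

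Minor remark: the degenerate case $\alpha+\beta=1$ needs the care you anticipated at the end. There $p$ is constant, $f_{\alpha,\beta}\equiv 0$, $R_{\alpha,\beta}$ is affine and $z_L=z_U$, so uniqueness of the maximizer of $R_{\alpha,\beta}$ (used in your Item 3) fails in the form stated; but in that case $\tilde h_{\alpha,\beta}$ collapses to an affine-then-constant function, so the lemma's claims hold trivially.
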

The proof of Lemma \ref{lemma:properties_of_h} appears in Appendix \ref{appsubsec:proof_lemma_h}.

\begin{proof}[Proof of Theorem \ref{theorem:sol_bellman}]
The function $\tilde{h}_{\alpha,\beta}(z)$ is defined as a concatenation of $h^{\alpha,\beta}_1(z)$, $h^{\alpha,\beta}_2(z)$, and $\tilde{\rho}_{\alpha,\beta}$;
to simplify the calculation of $(T\tilde{h}_{\alpha,\beta})(z)$, the unit interval is partitioned into non-intersecting sub-intervals,
where each sub-interval uniquely determines the function $\tilde{h}_{\alpha,\beta}(\text{argi}(z))$ to be $h^{\alpha,\beta}_1(z)$, $h^{\alpha,\beta}_2(z)$ or $\tilde{\rho}_{\alpha,\beta}$, for $i=1,2$.
Since there are two concatenation points, $z^{\alpha,\beta}_1$ and $z^{\alpha,\beta}_2$, the unit interval is partitioned at the set of points that satisfy,
\begin{align}\label{eq:transition_points}
    \text{arg1}_{\alpha,\beta}(z)&=z^{\alpha,\beta}_i \nonumber\\
    \text{arg2}_{\alpha,\beta}(z)&=z^{\alpha,\beta}_i,
\end{align}
for $i=1,2$.
\begin{figure}[t]
\centering
        \psfrag{A}[b][][.8]{$\text{arg1}_{\alpha,\beta}(z)$}
        \psfrag{B}[t][][.8]{$\text{arg2}_{\alpha,\beta}(z)$}
        \psfrag{C}[t][][.9]{$z^{\alpha,\beta}_1$}
        \psfrag{D}[t][][.9]{$z^{\alpha,\beta}_2$}
        \psfrag{E}[t][][.9]{$z^{\alpha,\beta}_3$}
        \psfrag{F}[t][][.9]{$z^{\alpha,\beta}_4$}
        \centerline{\includegraphics[scale=.4]{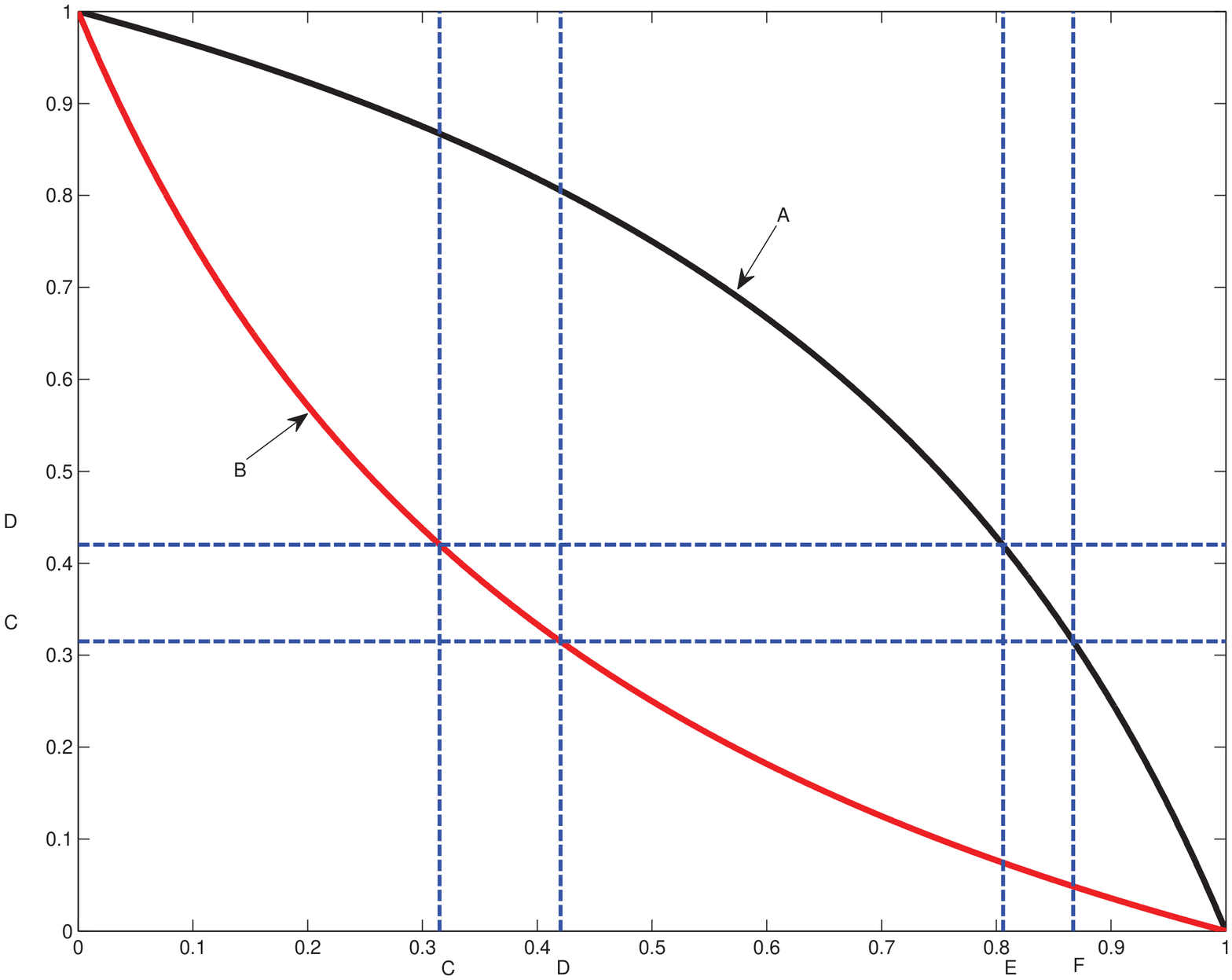}}
\caption{Illustration of the argument functions as a function of $z$, for $\alpha=\beta=0.25$.}
\label{fig:arguments}
\end{figure}
\begin{table}
\caption{The functions $\text{arg1}(z)$ and $\text{arg2}(z)$}
\centering
\begin{tabular}{|c|c|c|c|}
  \hline
   & domain & $\tilde{h}(\text{arg1}(z))$ & $\tilde{h}(\text{arg2}(z))$ \\ \hline\hline
  \RNum{1} & $[0,z_1]$   & $\tilde{\rho}_{\alpha,\beta}$ & $\tilde{\rho}_{\alpha,\beta}$ \\ \hline
  \RNum{2} & $[z_1,z_2]$ & $\tilde{\rho}_{\alpha,\beta}$ & $h_2^{\alpha,\beta}(\text{arg2}(z))$ \\ \hline
  \RNum{3} & $[z_2,z_3]$ & $\tilde{\rho}_{\alpha,\beta}$ & $h_1^{\alpha,\beta}(\text{arg2}(z))$ \\ \hline
  \RNum{4} & $[z_3,z_4]$ & $h_2^{\alpha,\beta}(\text{arg1}(z))$ & $h_1^{\alpha,\beta}(\text{arg2}(z))$ \\ \hline
  \RNum{5} & $[z_4,1]  $ & $h_1^{\alpha,\beta}(\text{arg1}(z))$ & $h_1^{\alpha,\beta}(\text{arg2}(z))$ \\ \hline
\end{tabular}\label{table:arguments}\end{table}

Calculation of the points in \eqref{eq:transition_points} reveals that the unit interval should be partitioned at $z^{\alpha,\beta}_1$, $z^{\alpha,\beta}_2,z^{\alpha,\beta}_3,z^{\alpha,\beta}_4$ from \eqref{eq:def_Zi}. Fig. \ref{fig:arguments} illustrates the argument functions and the partitions when $\alpha=\beta=0.25$. As can be seen from Fig. \ref{fig:arguments}, five segments need to be considered when calculating $(T\tilde{h}_{\alpha,\beta})(z)$. The relevant segments are summarized in Table \ref{table:arguments} together with $\tilde{h}_{\alpha,\beta}(\text{argi}(z))$ for $i=1,2$ for each sub-interval.

Now, the operator $T\tilde{h}_\alpha(z)$ can be calculated, such that in each calculation, we restrict actions to one sub-interval from Table \ref{table:arguments}. For the interval \RNum{1}, i.e., $z\in[0,z_1^{\alpha,\beta})$,
\begin{align}\label{eq:first_interval}
&(T\tilde{h}_{\alpha,\beta})(z)\nonumber\\
&= \sup_{0 \leq \delta \leq z} H_2(p(\delta)) - (1-\delta) H_2(\alpha) - \delta H_2(\beta) + (1-p(\delta))h_{\alpha,\beta}(\text{arg1}(\delta)) + p(\delta)h_{\alpha,\beta}(\text{arg2}(\delta))\nonumber\\
&\stackrel{(a)}= \sup_{0\leq\delta \leq z} H_2(p(\delta)) - (1-\delta) H_2(\alpha) - \delta H_2(\beta)  + (1-p(\delta))\tilde{\rho}_{\alpha,\beta} + p(\delta)\tilde{\rho}_{\alpha,\beta} \nonumber\\
&\stackrel{(b)}= \sup_{0\leq\delta \leq z} h_1(\delta)  + \tilde{\rho}_{\alpha,\beta} \nonumber\\
&\stackrel{(c)}= h_1(z)  + \tilde{\rho}_{\alpha,\beta}
\end{align}
where $(a)$ follows from the restriction of $z\in[0,z_1]$ and substituting the functions from Table \ref{table:arguments}, $(b)$ follows from the definition of $h_1(\delta)$ in \eqref{eq:defintion_X_function} and $(c)$ follows from Item 4) of Lemma \ref{lemma:properties_of_h}, specifically, $h^{\alpha,\beta}_1(z)$ is non-decreasing on $[0,z^{\alpha,\beta}_1]$. Note that the maximizer is $\delta(z)=z$.

The operator with actions restricted to interval \RNum{2}, i.e., $\delta\in[z^{\alpha,\beta}_1,z]$ for $z\in[z_1^{\alpha,\beta},z_2^{\alpha,\beta}]$ is:
\begin{align}\label{eq:second_interval}
    &\sup_{z_1\leq\delta \leq z} H_2(p(\delta)) - (1-\delta) H_2(\alpha) - \delta H_2(\beta) + (1-p(\delta))\tilde{h}_{\alpha,\beta}(\text{arg1}(\delta)) + p(\delta)\tilde{h}_{\alpha,\beta}(\text{arg2}(\delta))\nonumber\\
    &\stackrel{(a)}= \sup_{z_1\leq\delta \leq z} X(\delta) + p(\delta)\tilde{\rho}_{\alpha,\beta}  + (1-p(\delta))\tilde{\rho}_{\alpha,\beta} + \left[\frac{p(\delta) X(\text{arg2}(\delta)) + \alpha\bar{\beta}X(\delta)}{1-\alpha\bar{\beta}}\right]\nonumber\\
    &\stackrel{(b)}= \sup_{z_1\leq\delta \leq z} h_2(\delta) + \tilde{\rho}_{\alpha,\beta}\nonumber\\
    &\stackrel{(c)}= h_2(z) + \tilde{\rho}_{\alpha,\beta},
\end{align}
where $(a)$ follows from the definition of $X(\delta)$ in \eqref{eq:defintion_X_function} and Table \ref{table:arguments}, $(b)$ follows the expression for
$h_2(\delta)$ in \eqref{eq:defintion_X_function} and $(c)$ follows from Item 3) in Lemma \ref{lemma:properties_of_h}, where it was shown that $h_2(z)$ is increasing on $[0,z_2^{\alpha,\beta}]$.

To conclude the calculation of $(T\tilde{h}_{\alpha,\beta})(z)$ for $z\in[z^{\alpha,\beta}_1,z^{\alpha,\beta}_2]$, consider
\begin{align}\label{eq:operator_second}
    (T\tilde{h}_{\alpha,\beta})(z) &\stackrel{(a)}= \max(\sup_{z\in[0,z_1]} h_1(z) + \tilde{\rho}_{\alpha,\beta}, \sup_{z\in[z_1,z]} h_2(z) + \tilde{\rho}_{\alpha,\beta} ) \nonumber\\
    &\stackrel{(b)}=\max(h_1(z_1), h_2(z) )+ \tilde{\rho}_{\alpha,\beta}\nonumber \\
    &\stackrel{(c)}=h_2(z) + \tilde{\rho}_{\alpha,\beta},
\end{align}
where $(a)$ follows from \eqref{eq:first_interval} and \eqref{eq:second_interval}, and both $(b)$ and $(c)$ follow from Items 3) and 4) in Lemma \ref{lemma:properties_of_h}. Note also here that the maximizer of $(T\tilde{h}_{\alpha,\beta})(z)$ for $z$ on sub-interval II is $\delta(z)=z$.

For actions that are restricted to interval \RNum{3}, i.e., $\delta\in[z_2^{\alpha,\beta},z]$ with $z\in[z^{\alpha,\beta}_2,z_3^{\alpha,\beta}]$, consider
\begin{align}\label{eq:third_interval}
& \sup_{z_2 \leq\delta \leq z} H_2(p(\delta)) - (1-\delta) H_2(\alpha) - \delta H_2(\beta) + (1-p(\delta))\tilde{h}_{\alpha,\beta}(\text{arg1}(\delta)) + p(\delta)\tilde{h}_{\alpha,\beta}(\text{arg2}(\delta))\nonumber\\
&\stackrel{(a)}= \sup_{z_2 \leq\delta \leq z} X(\delta) + \tilde{\rho}_{\alpha,\beta} + p(\delta) \left[X(\text{arg2}_{\alpha,\beta}(\delta)) + \left(\frac{\alpha\bar{\beta}}{p(\delta)}\right)\tilde{\rho}_{\alpha,\beta}\right]\nonumber\\
&\stackrel{(b)}= \tilde{\rho}_{\alpha,\beta} + \tilde{\rho}_{\alpha,\beta},
\end{align}
where $(a)$ follows from the definition of $X(\delta)$ in \eqref{eq:defintion_X_function} and Table \ref{table:arguments} and $(b)$ follows from Item 3) in Lemma \ref{lemma:properties_of_h}, specifically, $h_2(z)$ achieves its maximum value at $z=z_2$. Note from \eqref{eq:operator_second} and \eqref{eq:third_interval} that the operator on III satisfies  $(T\tilde{h}_{\alpha,\beta})(z)= 2\tilde{\rho}_{\alpha,\beta}$ and that the maximizer is $\delta(z)=z_2$.

For the action restricted on interval \RNum{4}, i.e., $\delta\in[z_3,z]$ with $z\in[z_3,z_4]$, consider
\begin{align}\label{eq:fourth_interval}
& \sup_{z_3 \le\delta \leq z} H_2(p(\delta)) - (1-\delta) H_2(\alpha) - \delta H_2(\beta) + (1-p(\delta))\tilde{h}_{\alpha,\beta}(\text{arg1}(\delta)) + p(\delta)\tilde{h}_{\alpha,\beta}(\text{arg2}(\delta))\nonumber\\
&\stackrel{(a)}= \sup_{z_3 \leq\delta \leq z} X(\delta) + p(\delta)\tilde{\rho}_{\alpha,\beta} + (1 - p(\delta)) h_2(\text{arg1}_{\alpha,\beta}(\delta))+ p(\delta) \left[X(\text{arg2}(\delta)) + \frac{\alpha\bar{\beta}}{p(\delta)}\tilde{\rho}_{\alpha,\beta}\right]\nonumber\\
&\stackrel{(b)}\leq \tilde{\rho}_{\alpha,\beta} + \tilde{\rho}_{\alpha,\beta},
\end{align}
where $(a)$ follows from the definition of $X(\delta)$ in \eqref{eq:defintion_X_function} and Table \ref{table:arguments} and $(b)$ follows from $h_2(z) \leq \tilde{\rho}_{\alpha,\beta}$ shown in Item 3), Lemma \ref{lemma:properties_of_h}.

The calculation of the last interval, $[z_4,1]$, is omitted here, but it follows the same repeated arguments, so we have $(T\tilde{h}_{\alpha,\beta})(z)\leq 2\tilde{\rho}_{\alpha,\beta}$.
Now, Item 3) in Lemma \ref{lemma:properties_of_h} together with \eqref{eq:fourth_interval} gives us that $(T\tilde{h}_{\alpha,\beta})(z)= 2 \tilde{\rho}_{\alpha,\beta}$ also for $z\in[z_2,z_4]$.
To conclude, we have shown that $(T\tilde{h}_{\alpha,\beta})(z) = \tilde{h}_{\alpha,\beta}(z) + \tilde{\rho}_{\alpha,\beta}$.

\subsection{Proof of Lemma \ref{lemma:properties_of_h}}\label{appsubsec:proof_lemma_h}
Throughout this section, we use $z_i$ as shorthand for $z_i^{\alpha,\beta}$ and $p^{\text{opt}}$ stands for $p(z^{\alpha,\beta}_2)$.

\textbf{Continuity:} Each of the functions defining $\tilde{h}_{\alpha,\beta}(z)$ is continuous, and therefore, one should verify that the concatenation points satisfy
\begin{align}\label{eq:lemma_continuity}
h_1^{\alpha,\beta}(z_1) &= h_2^{\alpha,\beta}(z_1) \nonumber\\
h_2^{\alpha,\beta}(z_2) &=\tilde{\rho}_{\alpha,\beta}.
\end{align}

The second equality in \eqref{eq:lemma_continuity} is verified as follows:
\begin{align*}
 &(1-\alpha\bar{\beta})h^{\alpha,\beta}_2(z_2) \\
 &= X(z_2) + p^{\text{opt}}X(\text{arg}_2(z_2)) \\
 &=H_2(p^{\text{opt}}) + p^{\text{opt}}H_2\left(\frac{\alpha\bar{\beta}}{p^{\text{opt}}}\right) - (\bar{z}_2 + \bar{\beta}z_2)H_2(\alpha) - (z_2 + \alpha\bar{z}_2)H_2(\beta)  - (p^{\text{opt}} + \alpha\bar{\beta})\tilde{\rho}_{\alpha,\beta} \\
 &= (1-\alpha\bar{\beta})\tilde{\rho}_{\alpha,\beta},
\end{align*}
and since $(1-\alpha\bar{\beta}) \neq 0$, it follows that $h_2^{\alpha,\beta}(z_2) =\tilde{\rho}_{\alpha,\beta}$. This derivation also gives us that
\begin{align}\label{eq:h2_at_z_relation}
  \tilde{\rho}_{\alpha,\beta} &= h_2^{\alpha,\beta}(z_2) \nonumber \\
   & \stackrel{(a)}= \frac{X(z_2) + p^{\text{opt}} X(z_1)}{1-\alpha\bar{\beta}},
\end{align}
where $(a)$ follows from the fact that $z_1=\text{arg2}(z_2)$.

The value of $h_2^{\alpha,\beta}(z_1)$ is
\begin{align}\label{eq:h2_at_z1}
  h_2^{\alpha,\beta}(z_1) &= \frac{X(z_1) + \frac{\alpha\bar{\beta}}{p^{\text{opt}}}X(z_2)}{1-\alpha\bar{\beta}} \nonumber\\
         &\stackrel{(a)}= \frac{1}{p^{\text{opt}}}[\tilde{\rho}_{\alpha,\beta} - X(z_2)] \nonumber\\
         &= H_2\left(\frac{\alpha\bar{\beta}}{p^{\text{opt}}}\right) - \frac{\bar{\beta}z_2}{p^{\text{opt}}}H_2(\alpha) - \frac{\alpha\bar{z}_2}{p^{\text{opt}}}H_2(\beta) \nonumber\\
         &= H_2\left(\frac{\alpha\bar{\beta}}{p^{\text{opt}}}\right) - \bar{z_1}H_2(\alpha) - z_1H_2(\beta)
\end{align}
where $(a)$ follows from \eqref{eq:h2_at_z_relation}.

From \eqref{eq:defintion_X_function}, we have that
\begin{align}\label{eq:h1_at_s1}
      h_1^{\alpha,\beta}(z_1)&= H_2\left(\frac{\alpha\bar{\beta}}{p^{\text{opt}}}\right) - \bar{z_1}H_2(\alpha) - z_1H_2(\beta) ],
\end{align}
which together with \eqref{eq:h2_at_z1} concludes the continuity of $\tilde{h}_{\alpha,\beta}(z)$ at $z=z_1$.

\textbf{Concavity:}
First, we show that each element in $\tilde{h}_{\alpha,\beta}(z)$ is concave and then we argue that the concatenation of these functions is also concave. The function $h_1^{\alpha,\beta}(z)$ is concave since it is a composition of the binary entropy function, which is concave, with an affine function. The function $h^{\alpha,\beta}_2(z)$ can be written explicitly from \eqref{eq:defintion_X_function}, and then all of its elements are linear except for the entropy function which is concave and the expression $p(z)H_2\left(\frac{\alpha\bar{z}}{p(z)}\right)$. The latter expression is also concave since it is known that the perspective of the concave function $H_2(z)$, that is, $tf\left(\frac{x}{t}\right)$ for $t>0$ is also concave. Therefore, each element comprises $\tilde{h}_{\alpha,\beta}(z)$ is concave.

It was shown in \cite[Lemma 5]{Ising_channel} that a continuous concatenation of concave functions is concave if the one-sided derivatives at the concatenation points are equal. Therefore, $\tilde{h}_{\alpha,\beta}(z)$ is concave if the following conditions are satisfied:
\begin{align}
     h_1'(z_1) &= h_2'(z_1) \label{concavity_cond1}\\
     h_2'(z_2) &= 0.\label{concavity_cond2}
\end{align}

An auxiliary relation is derived by using the derivative of $R_{\alpha,\beta}(z)$
 \begin{align}\label{eq:derivative_R}
   \frac{d}{dz} R_{\alpha,\beta}(z) &= \frac{(p'H'_2(p) + \left[p H_2\left(\frac{\alpha\bar{\beta}}{p}\right)\right]' + \beta H_2(\alpha)- \bar{\alpha} H_2(\beta))(1+p) - p'(1+p)R_{\alpha,\beta}(z)}{(1+p)^2} \nonumber\\
                     &= \frac{p'H'_2(p) + \left[p H_2\left(\frac{\alpha\bar{\beta}}{p}\right)\right]' + \beta H_2(\alpha)- \bar{\alpha} H_2(\beta) - p'R_{\alpha,\beta}(z)}{1+p},
 \end{align}
where $p'$ is first derivative of $p(z)$. Since $z_2^{\alpha,\beta}$ is the maximum of $R_{\alpha,\beta}(z)$, the numerator of \eqref{eq:derivative_R} is equal to zero at this point (one can verify that $R_{\alpha,\beta}(z)=0$ when $z=0$ or $z=1$), and one can obtain the relation
\begin{align}\label{eq:relation_reward}
p'H'_2(p^{\text{opt}}) + \left[p H_2\left(\frac{\alpha\bar{\beta}}{p}\right)\right]'\vline_{p=p^{\text{opt}}} = - \beta H_2(\alpha)+ \bar{\alpha} H_2(\beta)  + p'\tilde{\rho}_{\alpha,\beta}
\end{align}

The following calculations are also necessary:
\begin{align*}
  X'(z) &= p'H_2'(p) + H_2(\alpha) - H_2(\beta) -p'\tilde{\rho}_{\alpha,\beta}\\
  X(\text{arg2}(z)) &= H_2\left(\frac{\alpha\bar{\beta}}{p}\right) - \overline{\text{arg2}(z)}H_2(\alpha) - \text{arg2}(z)H_2(\beta) - \frac{\alpha\bar{\beta}}{p} \tilde{\rho}_{\alpha,\beta}\\
  X'(\text{arg2}(z)) &= p'H_2'\left(\frac{\alpha\bar{\beta}}{p}\right) + H_2(\alpha) - H_2(\beta) -p'\tilde{\rho}_{\alpha,\beta}
\end{align*}
where derivatives are taken with respect to $z$.

The first derivative of $h_2(z)$ is:
\begin{align}\label{eq:derivative_h2}
   &\frac{d}{dz} ((1-\alpha\bar{\beta})h^{\alpha,\beta}_2(z)) \nn\\
   &= X'(z) + p' X(\text{arg2}(z)) + p \text{arg2}'(z)X'(\text{arg2}(z)) \nn\\
   &= p'H_2'(p) + p'H_2\left(\frac{\alpha\bar{\beta}}{p}\right) -\frac{\alpha\bar{\beta}}{p}p' H_2'\left(\frac{\alpha\bar{\beta}}{p}\right)+ H_2(\alpha) - H_2(\beta) -p'\tilde{\rho}_{\alpha,\beta} \nn\\
   & + p'\left[ - \overline{\text{arg2}(z)}H_2(\alpha) - \text{arg2}(z)H_2(\beta) \right] -\frac{\alpha\bar{\beta}}{p}\left[H_2(\alpha) - H_2(\beta)\right].
\end{align}
Substituting $z=z_2$ into \eqref{eq:derivative_h2} and using \eqref{eq:relation_reward} we obtain the desired equality
\begin{align*}
   (1-\alpha\bar{\beta})\frac{d}{dz} h^{\alpha,\beta}_2(z)\vline_{z=z_2} &= 0.
\end{align*}

For the other condition, \eqref{concavity_cond2}, one can show that $p(z^{\alpha,\beta}_1) = \frac{\alpha\bar{\beta}}{p(z_2)}$, which results in
\begin{align*}
(1-\alpha\bar{\beta})\frac{d}{dz} h^{\alpha,\beta}_2(z)\vline_{z=z_1} &= (1-\alpha\bar{\beta})[p'H_2'\left(\frac{\alpha\bar{\beta}}{p^{\text{opt}}}\right) + H_2(\alpha) - H_2(\beta)]
\end{align*}

The derivative of $h_1^{\alpha,\beta}(z)$ at $z=z_1$ is:
\begin{align*}
  \frac{d}{dz} h^{\alpha,\beta}_1(z_1) &= p' H_{2}'\left(\frac{\alpha\bar{\beta}}{p^{\text{opt}}}\right)  + H_2(\alpha) - H_2(\beta),
\end{align*}
and this concludes the concavity of $\tilde{h}_{\alpha,\beta}(z)$.

The last two items in Lemma \ref{lemma:properties_of_h} follow from the concavity of $\tilde{h}_{\alpha,\beta}(z)$ and the fact that $z_1\le z_2$: since the maximum is at $z_2$, then the derivative of $h_2^{\alpha,\beta}(z)$ at $z_1$ is positive and equals the derivative of $h_1^{\alpha,\beta}(z)$ at $z_1$.
\end{proof}
\section*{Acknowledgment}
The authors would like to thank the Associate Editor and the anonymous reviewers for their valuable and constructive comments, which helped to improve the paper and its presentation considerably.
\bibliography{ref}
\bibliographystyle{IEEEtran}
\end{document}